\documentclass{article}
\usepackage{library}

\usepackage{amsthm}
\usepackage{thmtools,thm-restate}

\numberwithin{equation}{section}
\declaretheoremstyle[bodyfont=\it,qed=\qedsymbol]{noproofstyle}

\declaretheorem[name=Observation,numbered=no]{observation*}

\declaretheorem[numberlike=equation]{theorem}

\declaretheorem[name=Theorem,numbered=no]{theorem*}

\declaretheorem[numberlike=equation]{lemma}
\declaretheorem[name=Lemma,numbered=no]{lemma*}

\declaretheorem[numberlike=equation]{corollary}
\declaretheorem[name=Corollary,numbered=no]{corollary*}

\declaretheorem[numberlike=equation]{proposition}
\declaretheorem[name=Proposition,numbered=no]{proposition*}

\declaretheorem[numberlike=equation]{claim}
\declaretheorem[name=Claim,numbered=no]{claim*}

\declaretheorem[name=Conjecture,numbered=no]{conjecture*}

\declaretheorem[numberlike=equation]{question}
\declaretheorem[name=Question,numbered=no]{question*}

\declaretheoremstyle[bodyfont=\it,qed=$\lrcorner$]{defstyle} 

\declaretheorem[numberlike=equation,style=defstyle]{definition}
\declaretheorem[unnumbered,name=Definition,style=defstyle]{definition*}

\declaretheorem[numberlike=equation,style=defstyle]{example}
\declaretheorem[unnumbered,name=Example,style=defstyle]{example*}

\declaretheorem[unnumbered,name=Notation=defstyle]{notation*}

\declaretheorem[unnumbered,name=Construction,style=defstyle]{construction*}

\declaretheorem[numberlike=equation,style=defstyle]{remark}
\declaretheorem[unnumbered,name=Remark,style=defstyle]{remark*}

\newcommand{\N}{\mathbb N}

\newcommand{\C}{\mathbb C}

\newcommand{\F}{\mathbb F}

\newcommand{\cM}{{\mathcal{M}}}   
\newcommand{\cO}{{\mathcal{O}}}   
\newcommand{\cQ}{{\mathcal{Q}}}

\newcommand{\abs}[1]{\left| {#1} \right|}

\renewcommand{\epsilon}{\varepsilon}

\renewcommand{\tilde}{\widetilde}

\newif\ifdraft
\drafttrue
\ifdraft
\newcommand{\ASnote}[1]{\textcolor{Blue}{\guillemotleft AS: #1 \guillemotright}}

\else
\newcommand{\ASnote}[1]{}
\fi

\newcommand{\NL}{\mathsf{nl}}
\newcommand{\Ind}{\mathsf{Ind}}
\newcommand{\agree}{\mathsf{agree}}
\newcommand{\Tr}{\mathsf{Tr}}
\newcommand{\maxagree}{\mathsf{maxagree}}
\newcommand{\supp}{\mathsf{supp}}
\newcommand{\prob}{\mathbb{P}}
\newcommand{\expectation}{\mathbb{E}}
\newcommand{\poly}{\operatorname{poly}}
\newcommand{\turan}{\text{Tur\'an}}
\newcommand{\indicator}{\textbf{1}}
\newcommand{\linagree}[1]{\maxagree(#1, \textnormal{deg }1)}
\newcommand{\magree}[2]{\maxagree(#1, \textnormal{deg }#2)}

\newcommand{\ignore}[1]{}
\DeclareMathOperator{\Char}{char}

\newcommand{\KK}[2]{K^{(#2)}_{#1}}

\newif\ifblind
\blindfalse   

\ifblind
\newcommand{\skc}[1]{}
\newcommand{\src}[1]{}
\newcommand{\rkc}[1]{}
\else
\newcommand{\skc}[1]{\textcolor{red}{SK: #1}}
\newcommand{\src}[1]{\textcolor{olive}{SR: #1}}
\newcommand{\rkc}[1]{\textcolor{violet}{RK: #1}}
\fi

\title{Explicit Nonlinear Functions beyond the Fourier bound}
\author{Swastik Kopparty\thanks{Department of Mathematics and Department of Computer Science, University of Toronto. Email: \texttt{swastik.kopparty@utoronto.ca}.   Research partially supported
by an NSERC Discovery Grant.} \and Rishabh Kothary\thanks{Department of Computer Science, University of Toronto.  Email: \texttt{rishabh.kothary@mail.utoronto.ca}.  Research partially supported by an NSERC Discovery Grant.} \and Shanthanu S. Rai\thanks{School of Technology and Computer Science, Tata Institute of Fundamental Research, Mumbai. Email: \texttt{shanthanu.rai@tifr.res.in}.  Research supported by the Department of Atomic Energy, Government of India, under project number RTI4014, TCS Research Fellowship and in part by research grants from Google Research, Premji Invest and SERB.}}
\date{}

\begin{document}
\maketitle

\begin{abstract}
    We study the problem of constructing highly nonlinear vectorial maps $F: \F_2^n \to \F_2^m$.
    Concretely, we want an $F$ and an $A = A(m,n)> 0$ as small as possible, so that for every affine map $L: \F_2^n \to \F_2^m$ (of the form $L(x) = M x + b $) we have:
    $$\agree(F, L) :=  |\{ x \in \F_2^n \mid  F(x) = L(x) \}| \leq A.$$
    Such questions have been studied by Nyberg~\cite{Nyberg91,Nyberg}, Carlet and Ding~\cite{CD04,CD07}, Liu, Mesnager and Chen~\cite{LMC17}, Nagy~\cite{Na25},
    and Biryukov, Turecek, and Udovenko~\cite{BTU26}.
    There is a classical method of constructing such functions from bent-functions and Fourier analytic ideas; the best bound achievable by this method is:
    $$ A(m,n) = \Theta(2^{n-m} + 2^{n/2}),$$
    and in particular, is never smaller than $2^{n/2}$.

    In this work, we show how to construct highly nonlinear functions beyond this Fourier bound. Concretely, we show how to construct for every $\gamma>0$, a function 
    $F: \F_2^n \to \F_2^m$ with $m = \cO_{\gamma}(n)$, achieving
    $$ A(m,n) \leq  (1 + \gamma)^n.$$

    Surprisingly, we even achieve the same quantitative behavior for the much harder question of having low agreement with $m$-tuples of degree $d$ polynomials $Q: \F_2^n \to \F_2^m$, 
    with $m = \cO_{\gamma, d}(n)$. Here the previously best bounds were of the form $A(m,n) = \cO( 2^{-\frac{n}{2^{d+1}}}  \cdot 2^n )$ of Ben-Sasson and Kopparty~\cite{KoppartyThesis}, based on Gowers-norm-type arguments.
    
    All our results generalize to all finite fields $\F_q$ in place of $\F_2$.
    
    Our methods are based on a new connection to classical results on counting solutions to systems of polynomial equations via algebraic methods. This connection brings us to basic questions in combinatorics, about graphs and hypergraphs with simultaneously a small number of edges and independent sets.
\end{abstract}

\section{Introduction}

This paper is about {\em highly nonlinear} functions $F: \F_2^n \to \F_2^m$,
and their degree $d$ analogues. Our main result is the first explicit construction of functions whose nonlinearity goes beyond the classical Fourier bound, and corresponding results for the degree $d$ analogues. Our results are obtained by a new connection of this problem to (1) the basic algebraic problem of bounding the number of solutions to systems of polynomial equations, and (2) the basic combinatorial problem of bounding the number of independent sets in graphs or hypergraphs with a given number of edges.

A vectorial Boolean function is a function $F: \F_2^n \to \F_2^m$.
Linearity properties of vectorial Boolean functions have been extensively studied in differential cryptanalysis and symmetric-key cryptography (where vectorial Boolean functions are called {\em substitution boxes} or {\em $S$-boxes}).

An important parameter of a vectorial Boolean function $F$ is its {\em nonlinearity}\footnote{This specific definition is more commonly called nonlinearity of the second type, or vector nonlinearity~\cite{Na25}.} $\NL(F)$. This is defined as the minimum Hamming distance of $F$ from affine maps $L : \F_2^n \to \F_2^m$ (namely, maps where $L(x) = Mx + b$ for a matrix $M$ and vector $b$):
$$ \NL(F) =   \min_{\substack{L: \F_2^n \to \F_2^m\\\mbox{ affine map}}}  |\{ x \in \F_2^n \mid  F(x) \neq L(x) \}|.$$
For our discussion, it will be more natural to talk about the ``linearity'', denoted $ \linagree{F}$ defined to equal $2^n - \NL(F)$, which is equivalently the smallest $A \geq 0$ such that for all affine $L : \F_2^n \to \F_2^m$:
$$ |\{ x \in \F_2^n \mid  F(x) = L(x) \}| \leq A.$$
For cryptography and pseudorandomness purposes, it is interesting to find $F$ which have provably small $\linagree{F}$. This (and its higher degree analogue) is the main question that our paper focuses on.

Nonlinearity of vectorial Boolean functions has been studied by many authors~\cite{Nyberg91,Nyberg,CD04,CD07,chen2001nonlinearity,liu2014nonlinearity,LMC17,Na25,BTU26}. In the very recent work of Biryukov, Turecek, and Udovenko~\cite{BTU26}, nonlinearity plays a central role in their study of linearization of S-boxes.

For the case $m = 1$ (namely the case of Boolean functions), 
it is well known that the smallest that $\linagree{F}$ can be is 
$\frac{1}{2} \cdot 2^n + O(2^{n/2})$. That it cannot be smaller is proved by studying the Fourier transform of 
$F$ and applying Parseval's inequality. Functions which achieve equality are known as (almost) bent functions, and have been very well studied 
with numerous applications.

For general $m > 1$, a simple application of the probabilistic method
shows that there are $F: \F_2^n \to \F_2^m$ with 
$$\linagree{F} \leq \begin{cases} (1 + o(1)) \cdot\frac{1}{2^m} \cdot  2^n & m \ll n \\ \poly(n)   & m \geq n \end{cases}.$$
and this is also known to be best possible.

In particular, once $m \geq n$, there exist $F$ with $\linagree{F} \leq \poly(n)$.

However, constructions of $F$ with this behaviour are not known.
There is a very natural and elegant approach to constructing nonlinear vectorial Boolean functions based on bent functions and
Fourier analysis. This dates back to Nyberg~\cite{Nyberg91},

with more recent work by \cite{CD04,CD07,chen2001nonlinearity,liu2014nonlinearity,LMC17,BTU26,Na25}.
Specifically, suppose $F: \F_2^n \to \F_2^m$ is such that for every
nonzero $y \in \F_2^m$, the one-bit-output function $F_y: \F_2^n \to \F_2$  given by:
$$ F_y(x) = \langle F(x), y \rangle$$
is almost bent; namely $\linagree{F_y} \leq \frac{1}{2} \cdot 2^n + O(2^{n/2})$.
Then,  via the results in these papers, we have:

$$\linagree{F} \leq \frac{1}{2^m} \cdot 2^n + O(2^{n/2}).$$
We call this the {\em Fourier bound} on $\linagree{F}$. It is the best possible bound on $\linagree{F}$ based solely on the smallness of all the $\linagree{F_y}$.

For $m \ll n$, this is quite close to best possible, but for $m \geq n$ it is significantly suboptimal. The best upper bound on $\linagree{F}$ that can be proved via this method is 
$2^{n/2}$. There are no better upper bounds known for any explicit function, and it is of great interest to develop new techniques to prove strong upper bounds on $\linagree{F}$.

Our main result for degree $1$ (see \cref{thm:explicit-non-linear-functions-beyond-fourier-bound}) gives explicit functions which provably go beyond the Fourier bound.
Specifically, for every fixed $\gamma > 0$, for large enough $n$ and $m \geq \Omega_\gamma(n)$, 
we give explicit functions $F: \F_2^n \to \F_2^m$ with 
$$\linagree{F} \leq (1+ \gamma)^n.$$

One can ask similar questions for higher degree $d = O(1)$. Here we replace affine maps $L: \F_2^n \to \F_2^m$ with $m$-tuples of degree $d$ polynomials. 
Specifically, define $\magree{F}{d}$ to be the smallest $A$ such that for all $m$-tuples $Q = (Q_1, \ldots, Q_m)$ of polynomials (where $Q_i \in \F_2[X_1, \ldots, X_n]$, with $\deg(Q_i) \leq d$), we have:
$$ | \{ x \in \F_2^n \mid F(x) = Q(x) \} | \leq A.$$

For $m \ll n$, it is again known that the smallest possible value of $\magree{F}{d}$ is about $\left(\frac{1}{2^m}  + o(1)\right) \cdot 2^n$, and explicit constructions of such functions are known from multiple routes: (1) the Weil bounds and the Fourier bound, or (2) the direct product theorem of Viola and Wigderson~\cite{ViolaWigderson}. 
However for $m \geq n$, the known results get much weaker. Using ideas related to the Gowers' norm
(which is a higher-degree version of Fourier analysis), Ben-Sasson and Kopparty~\cite[Theorem 7.2.1]{KoppartyThesis} gave constructions of explicit functions $F$ satisfying:
$$ \magree{F}{d} \leq 2^{-n/2^{d+1}} \cdot 2^n.$$
As far as we could find, this is the best upper bound known for any $m = O(n)$ and $d \geq 2$.

Our main result for degree $d$ (see \cref{thm:explicit-non-linear-functions-beyond-fourier-bound-arbitrary-d}) gives explicit functions significantly improving this. In fact, our bounds for general (constant) $d$ even bypass the Fourier bound from the $d = 1$ case! Concretely, for every fixed $\gamma > 0$ and $d \in \mathbb N$, for large enough $n$ and $m \geq \Omega_{\gamma, d}(n)$, we give explicit functions $F: \F_2^n \to \F_2^m$ with:
$$\magree{F}{d} \leq (1+ \gamma)^n.$$

Our results also generalize to $\F_q$ in place of $\F_2$, achieving the same qualitative behaviour of going below the Fourier bound (see \cref{theorem:disjoint-clique-NH-count-Fq}).

\subsection{Techniques and Related Work}

Our results are based on a combination of algebraic and combinatorial tools.
The explicit functions $F$ we construct will be computed by carefully chosen polynomial maps.
The problem of bounding the agreement of these explicit functions with $m$-tuples of degree-$d$ polynomials $Q: \F_2^n \to \F_2^m$ then becomes a question of bounding the number of solutions to a 
system of polynomial equations $F- Q = 0$. We thus need (1) a method for bounding the number of solutions to a system of polynomial equations, and (2) a choice of $F$ so that this method 
will work on $F-Q$ {\em for all $Q$}.

The method in (1) will be related to {\em Macaulay's theorem}, which is a beautiful linear algebraic technique: it bounds the size of the set $S$ of solutions to a system of equations by examining the vector space $W$ of polynomial functions on $S$, and finding a small basis $B$ for $W$ using the polynomial equations defining $S$. This gives us an upper bound $|S| = \dim(W) \leq |B|$, where the first equality is because every function on $S$ is a polynomial function.

To make this work in our setting, we need to choose an $F$ appropriately. We show how to choose an $F$ based on any $(d+1)$-uniform hypergraph $H = (V,E)$, with $|V| = n$ and $|E| = m$.
Each edge $E$ specifies a monomial in the input variables $X_1, \ldots, X_n$. 
The bound on the number of solutions obtained by the Macaulay theorem argument turns out to be precisely $|\Ind(H)|$, the number of independent sets in $H$. This brings us to an extremely clean and concrete combinatorial question about hypergraphs: how do we design $(d+1)$-uniform hypergraphs on $n$ vertices with $m = O(n)$ edges so that $H$ has few independent sets.

Already simple instantiations of this framework, taking $H$ to be a disjoint union of constant-size complete hypergraphs, yield the main qualitative aspects of our results. For example, when $d = 1$, for every integer $k$
we get explicit functions $F : \F_2^n \to \F_2^m$ 
with $m = \frac{k-1}{2} n$ and
$$\linagree{F}\leq (k+1)^{n/k},$$
which is at most $(1 + \gamma)^n$ for large enough $k = O_{\gamma}(1)$.
In particular, this goes beyond the Fourier bound for any $k \geq 6$.

For the case of $\F_q$ in place of $\F_2$, our techniques generalize with some  modifications. Here some additional complexities are introduced by the presence of non-multilinear monomials. The resulting combinatorial problem that needs to be solved then becomes a bit more complicated, and one has to study \emph{multiset} analogues of edges vs a \emph{multiset} analogue of the independent sets question for graphs and hypergraphs. 

Interestingly, the explicit functions that arise from this framework turn out to have a {\em perfect} direct product theorem (see \cref{lemma:direct-product-F2}) for their agreement with degree $d$ polynomials; this is in contrast to the best known general direct-product theorem of Viola and Wigderson~\cite{ViolaWigderson}, which is lossy.

We also remark that the Macaulay theorem bound is also known as the {\em footprint bound}, and has a number of interesting applications in coding theory and theoretical computer science~\cite{GH00,GMT08,STV20,ABPSS25}.

Another strong motivation for this work comes from the beautiful matrix construction problem of Aggarwal, Dutta, Mukherjee, Nagargoje and Obremski~\cite{ADMNO25}, as a route to constructing $2$-source extractors. 
It seems like highly-nonlinear functions are very relevant to that problem, which also has a Fourier bound barrier.

\subsection{Refinements}

We now discuss the finer quantitative questions arising from our framework.
\begin{itemize}
    \item For $d = 1$ (namely for constructing highly nonlinear functions), our framework asks to find
{\em graphs} with $m$ edges minimizing the number of independent sets. This question was solved by a difficult theorem of Sah, Sawhney, Stoner and Zhao~\cite{SAH2019172}, who showed that the disjoint union of constant sized cliques is in fact the optimal choice.

\item For larger $d$, our framework asks to find $(d+1)$-uniform hypergraphs with $m$ edges minimizing the number of independent sets. It turns out that the disjoint union of constant size complete hypergraphs is {\em not} the optimal choice. For sufficiently large fixed $d$, one can use the {\em $\turan$ hypergraphs} (which are conjectured extremal examples for the hypergraph $\turan$ problem) to beat those constructions.

\item It turns out that the $\turan$ hypergraphs are not always optimal either. We know about an ad hoc construction of an $8$-vertex $3$-uniform hypergraph\footnote{See \cref{def:8-vertex-H*} for the definition of $H^*$} $H^*$ such that disjoint unions of $H^*$ beat every other construction that we know for $d = 2$ and $m = n$. 

\end{itemize}
To summarize the status of the finer quantitative questions: for $d =1$ we understand the limits of our framework, but for $d \geq 2$ there are very interesting combinatorial questions that arise. \cref{table:F_2-constructions} summarizes and compares the different explicit constructions of $F:\F_2^n \longrightarrow\F_2^m$.

\begin{table}[h!]
    \centering
\begin{tabular}{||m{4.0cm}|m{1.5cm}|m{4.0cm}|m{3.0cm}||}
     \hline
     \textbf{Construction} & \textbf{Degree} $d$ & \textbf{Output ratio } $m/n$ & \textbf{Agreement upper bound }  \\
     \hline \hline
     Fourier Bound & $d = 1$ &  $\geq 1$ & $\cO(2^{n/2})$ \\
     \hline 
     Ben-Sasson, Kopparty \cite{KoppartyThesis} & Arbitrary $d$ &  $\geq 1$ & $\cO\left( 2^{\left(1- \frac{1}{2^{d+1}}\right)\cdot n} \right)$\\
     \hline
     Disjoint Union of Cliques \cref{thm:explicit-non-linear-functions-beyond-fourier-bound} & $d=1$ & $\frac{k-1}{2}$ for arbitrary $k$ & $(k+1)^{n/k}$\\
     \hline
     Disjoint Union of Hypercliques \cref{thm:explicit-non-linear-functions-beyond-fourier-bound-arbitrary-d} & Arbitrary $d \geq 2$ & $\frac{1}{k}\cdot \binom{k}{d+1}$ for arbitrary $k$ & $\cO(k^{nd/k})$ \\
     \hline
     Disjoint Union of $\turan$ blocks \cref{corollary:Disjoint-Union-Turan-Systems} & Arbitrary $d$ & $\leq\frac{4}{k(d+3)}\cdot \binom{k}{d+1}$ for large enough $k$ & $\cO(k^{n(d+1)/k})$\\ 
     \hline
     Disjoint Union of $H^*$ \cref{lemma:sparse-hypergraph-many-ind-sets} & $d = 2$ & =1 & $115^{n/8}$\\
     \hline
\end{tabular}
\caption{Comparison of different constructions of $F :\F_2^{n} \longrightarrow \F_2^m$  }
\label{table:F_2-constructions}
\end{table}

\paragraph{AI use:} Our framework for constructing explicit functions was developed without any AI involvement. The writing was done without any AI involvement. We used AI to search for fixed size hypergraphs with a given number of edges that optimized the number of independent sets. AI found for us the 3-uniform Turan hypergraph and gave the improved ad-hoc example with $d = 2$. We used AI for literature search, and for comments on our writeup.

\section*{Acknowledgements}
We are very grateful to Divesh Aggarwal and Maciej Obremski for telling us about~\cite{ADMNO25}, for valuable discussions, and for encouraging us to consider the $m>n$ regime (which is the setting for our strongest results).

\paragraph{Organization of this paper:} We begin in \cref{sec:prelims} by going over some useful preliminaries about polynomials, hypergraphs and Fourier analysis. Then, in \cref{sec:non-linearity}, we formally define the notion of nonlinearity for vectorial boolean functions and show the existence of highly nonlinear functions. We go over the existing approach of constructing explicit nonlinear functions using Fourier analysis and comment on the limitations of the ``basic Fourier bound''. In \cref{sec:hard-functions-over-F2}, we introduce the \emph{hypergraph framework} for constructing nonlinear and ``far from degree $d$'' functions over $\F_2$. In \cref{sec:hard-functions-over-Fq},
we generalize this to do the same over $\F_q$. The presentation turns out to be cleaner for $\F_2$, wherein we show that the maximum agreement between $f_H$ and tuples of polynomials $\bar Q = (Q_1, \dots, Q_m)$ of degree at most $d$ is equal to the number of independent sets in $H$. After describing the hypergraph framework, we instantiate the framework with explicit hypergraphs to construct explicit nonlinear and ``far from degree $d$'' functions which beat the Fourier bound.

\section{Preliminaries}
\label{sec:prelims}

\begin{itemize}
    \item For a prime power $q$, $\F_q$ denotes the finite field of size $q$.
    \item $[n]$ denotes the set $\{ 1, 2, \dots, n\}$.
    \item We use $\bar X$ to denote the tuple $(X_1,X_2,\ldots,X_n)$, where the number of variables $n$ will be clear from context or specified when necessary.
\end{itemize}

\subsection{Polynomials and Functions}

\begin{definition}
    For any $S\subset \F_q^n$, let $\F_q^{S}$ denote the $\F_q$-vector space of functions from $S$ to $\F_q$.
\end{definition}
Note that $\dim_{\F_q}(\F_q^S) = |S|$.

\begin{definition}[Support of Polynomials]
    Let $f \in \F_q[\bar X]$ such that 
    \begin{equation*}
        f = \sum_{i \in \N^n}c_i \bar X^{i}
    \end{equation*}
    then we define
    \begin{equation*}
        \supp(f) = \{\bar X^{i} \mid c_i \neq 0\}
    \end{equation*}
    which we call the support of $f$.
\end{definition}

We denote the set of multilinear monomials of $n$ variables by $\cM_{mult,n}$. There is a natural correspondence between monomials $M \in \cM_{mult,n}$ and the subsets of $[n]$. In particular for any $W \subset [n]$, the monomial corresponding to it is $\bar X^{\textbf{1}_{W}}$ where $\textbf{1}_W \in \{0,1\}^{n}$ is the characteristic vector of $W$. Whenever $n$ is clear from context, we will call $\cM_{mult,n}$ by $\cM_{mult}$ for ease of notation.\\

We will require the following simple form of the Combinatorial Nullstellensatz \cite{ALON_1999}.

\begin{lemma}\cite[Lemma 2.1]{ALON_1999}\label{lemma:non-root-ind-deg-bound}
    Let $P \in \F[X_1,...,X_n]$ be a non-zero polynomial such that for $i \in [n]$, $\deg_{X_i}(P) \leq t_i$. Let $S_i \subseteq \F$, such that $|S_i| \geq t_i+1$ for $i \in [n]$, then there exists $(\alpha_1,...,\alpha_n) \in S_1 \times \dots \times S_n$ such that $P(\alpha_1,...,\alpha_n) \neq 0$.
\end{lemma}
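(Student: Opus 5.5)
We argue by induction on the number of variables $n$, which is the standard route to the Combinatorial Nullstellensatz lemma. For $n = 1$, $P$ is a nonzero univariate polynomial of degree at most $t_1$. Over the field $\F$ it has at most $t_1$ roots. Since $|S_1| \geq t_1 + 1$, some $\alpha_1 \in S_1$ satisfies $P(\alpha_1) \neq 0$. (The case $n=0$, where $P$ is a nonzero constant, is also trivial and could serve as the base instead.)

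For the inductive step, suppose the statement holds for $n-1$ variables. Expand $P$ in powers of the last variable:
$$P(X_1,\dots,X_n) = \sum_{j=0}^{t_n} P_j(X_1,\dots,X_{n-1})\, X_n^{j},$$
where each $P_j \in \F[X_1,\dots,X_{n-1}]$ satisfies $\deg_{X_i}(P_j) \leq t_i$ for $i \leq n-1$. This holds because every monomial of $P_j$ comes from a monomial of $P$ and keeps its $X_i$-exponent. Since $P \neq 0$, some coefficient $P_{j_0}$ is nonzero.

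Apply the induction hypothesis to $P_{j_0}$ with the sets $S_1,\dots,S_{n-1}$. This gives $(\alpha_1,\dots,\alpha_{n-1}) \in S_1\times\dots\times S_{n-1}$ with $P_{j_0}(\alpha_1,\dots,\alpha_{n-1}) \neq 0$. Now the univariate polynomial
$$g(X_n) = P(\alpha_1,\dots,\alpha_{n-1},X_n)$$
is nonzero, because its $X_n^{j_0}$ coefficient is nonzero. Its degree is at most $t_n$. By the base case, there is $\alpha_n \in S_n$ with $g(\alpha_n) \neq 0$, which is exactly $P(\alpha_1,\dots,\alpha_n) \neq 0$.

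There is no real obstacle here. The only points needing care are:
\begin{itemize}
\item the individual-degree bounds pass to the coefficients $P_j$;
\item a nonzero univariate polynomial of degree at most $t$ over a field has at most $t$ roots. This follows from the factor theorem, since $\F$ is an integral domain.
\end{itemize}
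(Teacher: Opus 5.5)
Your proof is correct and is essentially the standard induction on $n$ from Alon's Combinatorial Nullstellensatz paper, which this paper cites for the lemma without reproducing a proof. The only difference is that Alon argues the contrapositive (if $P$ vanishes on $S_1\times\dots\times S_n$ then every coefficient $P_j$ vanishes on $S_1\times\dots\times S_{n-1}$, hence $P\equiv 0$), while you argue directly by fixing a nonzero coefficient $P_{j_0}$.
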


\subsection{Hypergraphs}

An $r$-uniform hypergraph $H = (V,E)$ is a set of vertices $V$, along with a set $E$ of $r$-element subsets of $V$. The elements of $E$ are called the (hyper)edges of $H$. For the hypergraph $H$, we use the notation $V(H)$ to denote its vertex set, and $E(H)$ to denote its (hyper)edge set.

We next define the $k$-shadow of a $r$-uniform hypergraph, where $k \leq r$ as follows.

\begin{definition}[$k$-shadow of $r$-uniform hypergraph]
    Let $H = (V,E)$ be an $r$-uniform hypergraph. Then for any $1 \leq k \leq r$, we define the $k$-shadow of $H$ as the following $k$-graph $\partial_{k}H = (V, E_k)$ where
    \begin{equation*}
        E_{k} = \{ \{i_1,..,i_k\} \subset V \mid \exists i_{k+1},...,i_r \in V \text{ such that }\{ i_1,...,i_r \} \in E\}.
    \end{equation*}
\end{definition}

We next define independent sets of an $r$-uniform hypergraph $H$.
\begin{definition}
    Let $H = (V,E)$ be an $r$-uniform hypergraph.
    A subset $V' \subseteq V$ is an {\em independent set} in $H$ if
    there is no edge $e \in E$ such that $ e\subseteq V'$. 

    For $k \in \N$, we define the set of all independent sets of $H$ of size $k$ as
    \begin{equation*}
        \Ind(H,k) = \{ V' \subset V \mid V' \text{ is an independent set in $H$ and } |V'| = k \}.
    \end{equation*}
    Moreover, we define the set of all independent sets of $H$ as
    \begin{equation*}
        \Ind(H) = \{ V' \subset V \mid V' \text{ is an independent set in $H$}\}.
    \end{equation*}
\end{definition}

\begin{definition}[Independence Number]
    Let $H = (V,E)$ be an $r$-uniform hypergraph. Then the independence number of $H$, denoted as $\alpha(H)$, is the size of largest independent set in $H$.
\end{definition}

We denote the complete $r$-uniform hypergraph on $n$ vertices as $\KK{n}{r}$.

\subsection{Hypergraph $\turan$ Problem}

In this subsection, we will talk about the $\turan(n, s, r)$ systems.

\begin{definition}[$\turan(n,s,r)$ system]
    Let $H = (V,E)$ be an $r$-uniform hypergraph on $n$ vertices, then $H$ is called $\turan(n,s,r)$ system if $\alpha(H) = s-1.$ Moreover, define
    \begin{equation*}
        T(n,s,r) = \min\{|E(H)| \mid H \text{ is a $\turan(n,s,r)$ system over $n$ vertices}\},
    \end{equation*}
    which is the size of the smallest hypergraph on $n$ vertices with independence number $s-1.$
\end{definition}

In the combinatorics literature it is more common to study $ex(n,s,r) = \binom{n}{r} - T(n,s,r)$, which is the problem of studying the largest $r$-uniform hypergraphs that avoids the complete $r$-uniform hypergraph on $s$ vertices, $\KK{s}{r}$.\\

Let $t(s,r) = \displaystyle\lim_{n \to \infty}\frac{T(n,s,r)}{\binom{n}{r}}$.
It is known that $t(s,r)$ exists and $T(n,s,r) \leq \binom{n}{r}\cdot t(s,r)$ for $n \geq s$.\\

For the case $r =2$ (i.e., graphs), it follows from \turan's Theorem~\cite{Turan1941} that $t(s,2) = \frac{1}{s-1}$. For $s > r \geq3$, determining $t(s,r)$ is a major open problem. The problem has a very rich history and we refer the readers to \cite{Keevash_2011} for a more comprehensive survey of the problem.\\

We will require the following theorem proved by Gao, Kuang, Pikhurko and Wang (\cite{gao2026improved}).

\begin{theorem}[\cite{gao2026improved}]
\label{thm:Turan(r+1-r)-explicit-construction}
    Let $r \geq 2$ be fixed. For sufficiently large $n$, there is an explicit construction of a $\turan(n,r+1,r)$ system $H$ with:
    \begin{equation*}
        |E(H)| \leq \left( \frac{2}{1 + \lfloor \frac{r+1}{2}\rfloor} + o(1)\right)\cdot \binom{n}{r}.
    \end{equation*}
\end{theorem}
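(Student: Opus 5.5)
Since this is quoted from \cite{gao2026improved}, the paper will invoke it as a black box. If I had to reprove it, I would use the classical ``modular partition'' approach to $\turan(n,r+1,r)$ systems, going back to Sidorenko and to Frankl--R\"odl. Set $t = 1 + \lfloor \frac{r+1}{2}\rfloor$ and split $V$ into $t$ parts $V_0,\dots,V_{t-1}$ of (almost) equal size. For an $r$-set $e$, let $a_i = |e \cap V_i|$ and assign it the weight $w(e) = \sum_i i\, a_i \pmod t$. I would make $e$ an edge exactly when $w(e)$ lies in a fixed window of two consecutive residues $\{c, c+1\}$, possibly also conditioning on the parity of some $a_i$. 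The vector $(a_0,\dots,a_{t-1})$ of a uniformly random $r$-set is asymptotically multinomial, so its weight is asymptotically uniform on $\mathbb Z_t$. Hence the edge count is $\left(\frac{2}{t}+o(1)\right)\binom{n}{r}$, which is the claimed bound. The construction is explicit because membership is decided by counting intersections.

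The core step is to show $\alpha(H) \le r$, i.e.\ every $(r+1)$-set $S$ contains an edge. Let $S$ have weight $\sigma$ and meet the parts indexed by $I \subseteq \mathbb Z_t$. Its $r$-subsets $S \setminus \{v\}$ then have weights exactly $\sigma - i$ for $i \in I$. So we need $(\sigma - I) \cap \{c,c+1\} \neq \emptyset$. I would split into two cases according to how many parts $S$ meets.

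\textbf{Case 1: $S$ meets many parts.} If $|I| \ge t-1$, then $\sigma - I$ misses at most one residue, so it hits any window of two consecutive residues.

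\textbf{Case 2: $S$ meets few parts.} Then some part contains many points of $S$; in particular, when $S$ lies inside a single part every $r$-subset has the same weight. This is the main obstacle. The naive fix is to recurse inside each part, but that costs a constant factor $1/(1-t^{1-r})$ in density rather than $o(1)$. So the weight must be refined. I would first try making the rule depend on the residues of the $a_i$ modulo $2$ as well as modulo $t$, that is, on the multiset of part-sizes. The aim is that removing a vertex from a heavily occupied part also moves $S$ across the window.

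The choice $t-1 = \lfloor\frac{r+1}{2}\rfloor$ suggests the argument should pair vertices of $S$: an $(r+1)$-set either spreads over enough parts, or has at least two points in about half the parts. In the second situation the parity refinement should supply the needed $r$-subset. I would verify this by a direct case analysis on the pattern $(a_i)$ of $S$.

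Finally, I would check that the parity refinement keeps the edge density at $2/t+o(1)$. This again follows from the asymptotic equidistribution of the joint residues of $(a_i \bmod 2,\ \sum_i i a_i \bmod t)$ for a random $r$-set, with $n$ large enough that every part has size tending to infinity.
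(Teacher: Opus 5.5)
You are right that the paper does not prove this statement. It is quoted from Gao, Kuang, Pikhurko and Wang and used purely as a black box, so the citation is all the paper offers. Your reconstruction, however, has a genuine gap at the one step that carries the content, namely that every $(r+1)$-set contains an edge. In fact the unrefined window rule is false for every $r\ge 5$, and the failure is not limited to sets lying inside a single part.

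Take $S$ with $r+1-k$ vertices in $V_0$ and $k$ vertices in $V_1$, where $1\le k\le r$. Then $\sigma=k$, and the $r$-subsets of $S$ have only two weights: $k$ (delete a vertex of $V_0$) and $k-1$ (delete a vertex of $V_1$). For $r\ge 5$ you have $4\le t\le r$, so you may pick $k\in\{1,\dots,t\}$ with $k\equiv c+3 \pmod t$. Then $\{k-1,k\}\cap\{c,c+1\}=\emptyset$, and $S$ is an independent $(r+1)$-set. So your Case~2 (sets meeting at most $t-2$ parts, where nothing forces the $|I|$-element set $\sigma-I$ to meet a two-residue window) is not a degenerate corner. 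It is the bulk of the problem.

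For that case the proposal only says you would try making the rule depend also on the parities of the $a_i$, and would check it by case analysis. Three things are missing:
\begin{itemize}
\item No edge rule is written down.
\item There is no argument that some rule of this type puts an edge in every $(r+1)$-set with $|I|\le t-2$. This includes the single-part sets, for which you yourself note that naive recursion loses a constant factor.
\item The density claim for the refined rule is unverified. Imposing parity conditions changes which joint residue classes of $\bigl((a_i \bmod 2)_i,\ \sum_i i a_i \bmod t\bigr)$ are selected. You would need to show that exactly a $2/t$ fraction survives while the Tur\'an property still holds.
\end{itemize}
As it stands, the Tur\'an property is asserted rather than proved. To make this a proof you need an explicit rule together with a complete verification over all intersection patterns $(a_0,\dots,a_{t-1})$ of an $(r+1)$-set. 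Otherwise, simply rely on the cited result as the paper does.
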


\subsection{Fourier Analysis}
We introduce basic notation and recall facts from Fourier analysis in this subsection and refer to \cite{kopparty2019fourier} for a more comprehensive treatment of the topic. Let $G$ be an abelian group. A group homomorphism $\psi: G \to \C^{\times}$ is called the character of the group $G$. For a finite abelian group $G$, all the characters themselves form a group under pointwise multiplication and this group is isomorphic to $G$.

For example, the characters of the underlying additive group of $\F_q$ are $\{\psi_a: \F_q \to \C^{\times}\}_{a \in \F_q}$ where,
$$\psi_a(x) = \omega_p^{\Tr_{\F_q/\F_p}(ax)},$$
where the characteristic of $\F_q$ equals $p$, $\omega_p$ is a primitive $p$'th root of $1$ in $\mathbb C$, and $\Tr$ is the field trace.
(In the case of $q$ being prime, the $\Tr$ map is simply identity).
Below, we recall the Fourier Transform with respect to the additive group of $\F_q$ (it can be similarly defined more generally over finite abelian groups, but we won't need it).

These characters $\psi_a$ can be viewed as vectors in $\C^{\F_q}$. We define a notion of inner product on this space,
\begin{definition}
    For two vectors $u, v \in \C^{\F_q}$, we define their inner product $\langle u, v\rangle$ as,
    $$\langle u, v\rangle = \expectation_{x \in \F_q}[u(x) \overline{v(x)}]$$
\end{definition}
The characters $\psi_a$ form an orthonormal basis for the vector space $\C^{\F_q}$.
\begin{lemma}
Let $\{ \psi_a : a \in \F_q\}$ be the characters of the additive group $\F_q$. Then,
\[
    \langle \psi_a, \psi_b \rangle =
    \begin{cases}
    1, & a = b,\\
    0, & a \neq b.
    \end{cases}
\]
\end{lemma}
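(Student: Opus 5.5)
The plan is to compute $\langle \psi_a, \psi_b \rangle$ directly from the definition of the inner product and then use that each $\psi_a$ is a homomorphism into the unit circle. Since $\psi_b(x) = \omega_p^{\Tr(bx)}$ has absolute value $1$, its conjugate is $\psi_b(x)^{-1} = \omega_p^{-\Tr(bx)}$. The trace is $\F_p$-linear, so $\Tr(ax) - \Tr(bx) = \Tr((a-b)x)$. This gives
\[
\langle \psi_a, \psi_b\rangle = \expectation_{x\in\F_q}\big[\omega_p^{\Tr((a-b)x)}\big] = \expectation_{x \in \F_q}[\psi_{a-b}(x)].
\]
So it suffices to show that $\expectation_x[\psi_c(x)]$ equals $1$ if $c=0$ and $0$ otherwise.

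If $c=0$, then $\psi_0 \equiv 1$ and the expectation is $1$. If $c \neq 0$, the key step is to show that $\psi_c$ is a nontrivial character, i.e.\ that some $y$ has $\psi_c(y)\neq 1$. We use that the trace map $\Tr_{\F_q/\F_p}$ is not identically zero: it is a polynomial $x + x^p + \dots + x^{p^{k-1}}$ of degree $p^{k-1} < q$, so it cannot vanish on all of $\F_q$. Hence there is $z$ with $\Tr(z)\neq 0$. Setting $y = z/c$ gives $\Tr(cy) = \Tr(z) \neq 0$ in $\F_p$, and since $\omega_p$ is a primitive $p$-th root of unity, $\psi_c(y)\neq 1$.

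We then finish with the standard shift trick. Because $x \mapsto x+y$ is a bijection of $\F_q$ and $\psi_c$ is additive-to-multiplicative,
\[
\expectation_x[\psi_c(x)] = \expectation_x[\psi_c(x+y)] = \psi_c(y)\,\expectation_x[\psi_c(x)].
\]
Since $\psi_c(y)\neq 1$, this forces $\expectation_x[\psi_c(x)]=0$, which completes the orthonormality. The only step that is not pure bookkeeping is the nontriviality of $\psi_c$ for $c \neq 0$, which rests on the trace being surjective (equivalently, not identically zero). In the prime case $q=p$ this step is immediate, since $\Tr$ is the identity.
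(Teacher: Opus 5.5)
Your proof is correct and complete. You reduce to $\expectation_x[\psi_{a-b}(x)]$, show that $\psi_c$ is nontrivial for $c\neq 0$ because the trace polynomial has degree $p^{k-1}<q$ and so cannot vanish on all of $\F_q$, and then apply the shift-invariance trick. The paper states this lemma without proof as standard background, deferring to its Fourier-analysis reference, so there is no argument to compare against; yours is the standard one that fills it in.
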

So, a function $F: \F_q \to \C$ can be written as a linear combination of the characters $\psi_a$,
$$F = \sum_{a \in \F_q} \widehat F(a) \psi_a.$$
The scalars $\widehat F(a) \in \C$ are known as the Fourier coefficients of $F$. Since the characters $\psi_a$ are orthonormal, we can obtain the Fourier coefficients $\widehat F: \F_q \to \C$ from the following inner product,
$$\widehat F(a) = \langle F, \psi_a \rangle = \expectation_{x \in \F_q} [F(x) \overline{\psi_a(x)}].$$ 
Since the ``change of basis'' linear operation to obtain $\widehat F$ from $F$ is a unitary operation, the norm of $\widehat F$ is equal to the norm of $F$. In literature, this fact is stated as the Parseval identity.
\begin{lemma}[Parseval identity]
    \label{lem:parseval-identity}
    Let $F: \F_q \to \C$. Then,
    $$\langle F, F \rangle = \expectation_{x \in \F_q}[|F(x)|^2] = \sum_{a \in \F_q} |\widehat F(a)|^2$$
\end{lemma}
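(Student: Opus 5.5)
This is the standard Parseval identity for functions $F:\F_q\to\C$, and I will prove it by expanding $F$ in the character basis and using orthonormality. The only inputs are the preceding lemma, which gives $\langle \psi_a,\psi_b\rangle = \indicator[a=b]$, and the fact that the $q$ characters $\{\psi_a\}_{a\in\F_q}$ form a basis of $\C^{\F_q}$.

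\textbf{Step 1: the characters form a basis.} The space $\C^{\F_q}$ has dimension $q$, and there are exactly $q$ characters. By the lemma they are orthonormal, hence linearly independent, so they form a basis. Every $F$ can therefore be written as $F=\sum_a c_a\psi_a$ for some scalars $c_a$. Taking the inner product with $\psi_b$ and using orthonormality gives $c_b=\langle F,\psi_b\rangle=\widehat F(b)$. This justifies the expansion $F=\sum_a \widehat F(a)\psi_a$ stated in the text.

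\textbf{Step 2: expand the norm.} The first equality, $\langle F,F\rangle=\expectation_{x}[|F(x)|^2]$, is just the definition of the inner product. For the second, I use sesquilinearity (linear in the first argument, conjugate-linear in the second):
$$\langle F,F\rangle=\Big\langle \sum_a \widehat F(a)\psi_a,\ \sum_b \widehat F(b)\psi_b\Big\rangle=\sum_{a,b}\widehat F(a)\overline{\widehat F(b)}\,\langle\psi_a,\psi_b\rangle .$$
By orthonormality every off-diagonal term vanishes, leaving
$$\langle F,F\rangle=\sum_a |\widehat F(a)|^2 .$$

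\textbf{Main obstacle.} There is essentially none. The only point needing care is Step 1, namely that the characters span $\C^{\F_q}$. This follows from the dimension count, since there are $q$ distinct $\psi_a$ and the space has dimension $q$, together with the independence that orthonormality provides.
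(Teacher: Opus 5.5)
Your proposal is correct and follows essentially the same approach as the paper: expand $F$ in the orthonormal character basis. The paper simply asserts that this change of basis is unitary, while your sesquilinearity computation, with the off-diagonal terms vanishing, is the explicit verification of that unitarity.
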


\section{Nonlinearity, $\magree{F}{d}$, and the Fourier bound}
\label{sec:non-linearity}

In this section, we will formally define the nonlinearity of a function and its extension to larger degrees. First we define affine functions below.

\begin{definition}[Affine Functions]
    We call $L : \F_q^n \longrightarrow \F_q^m$, an $\F_q$ affine function if there exists an $m \times n$ matrix $M \in \F_q^{m \times n}$ and $b\in \F_q^m$ such that for any $x \in \F_q^n$, $L(x) = Mx + b$. We call $M$ the associated matrix of $L$ and $b$ the associated vector of $L$.
\end{definition}
Next we define the nonlinearity of any function $F : \F_q^n \longrightarrow \F_q^m$. The notion of nonlinearity we will use denoted by $\NL$ is known as vector nonlinearity or nonlinearity of the second type in the literature (\cite{CD04},\cite{CD07},\cite{chen2001nonlinearity},\cite{liu2014nonlinearity},\cite{LMC17}, \cite{BTU26}, \cite{Na25}).

\begin{definition}[Nonlinearity]
    \label{def:non-linearity}
    Let $F : \F_q^n \longrightarrow \F_q^m$ be a function, and $L :\F_q^n\longrightarrow \F_q^m$ be a $\F_q$ affine function, then we define the agreement of $F$ with respect to $L$ as
    \begin{equation*}
        \agree(F,L) = | \{x \in \F_q^n \mid F(x) = L(x) \} | .
    \end{equation*}
    We define the nonlinearity of $F$ as
    \begin{equation*}
        \NL(F) = q^n - \max_{\substack{L : \F_q^n \longrightarrow\F_q^m\\ L \text{ is an $\F_q$ affine function} }}\agree(F,L).
    \end{equation*}
\end{definition}

Computing nonlinearity is computationally very difficult. \cite{Find14} showed that computing $\NL(F)$ is $\# P$-hard, when $F$ is given as a circuit and $m=1$. \\

For our purposes, it will be more convenient to talk about the ``linearity", namely $q^n - \NL(F)$. We give this definition, along with its generalization for larger degrees below.

\begin{definition}[Max Agreement]
    Let $F : \F_q^n \longrightarrow \F_q^m$ be a function and $Q_1,...,Q_m \in \F_q[\bar X]$ be polynomials of degree at most $d$. Then we define the agreement of $F$ with respect to $\bar Q = (Q_1,...,Q_m)$ as follows
    \begin{equation*}
        \agree(F, \bar Q) = |\{ x \in \F_q^n \mid F(x) = (Q_1(x),...,Q_m(x)) \}|.
    \end{equation*}
    We define the max agreement of $F$ with respect to degree $d$ polynomials as
    \begin{equation*}
        \magree{F}{d}= \max_{\substack{\bar Q = (Q_1,...Q_m)\\ Q_i \in \F_q[\bar X] \text{ and }\deg(Q_i) \leq d}}\agree (F, \bar Q).
    \end{equation*}
\end{definition}

\begin{remark}
    For a function $F : \F_q^n \longrightarrow \F_q^m$, $\NL(F) = q^n - \magree{F}{1}$.
\end{remark}

Next, we state a lower bound on $\magree{F}{d}$ for any function $F: \F_q^n \longrightarrow \F_q^m$ and give its proof in \cref{sec:proof-of-interpolation-bound}.

\begin{lemma}[Interpolation Bound]\label{lemma:interpolation-bound}
    For any function $F : \F_q^n \longrightarrow \F_q^m$
    \begin{equation*}
        \magree{F}{d} \geq M(q,n,d).
    \end{equation*}
    where $M(q,n,d)$ is the number of $n$-variate monomials of degree at most $d$ and individual degree at most $q-1$.
\end{lemma}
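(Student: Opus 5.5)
Let $\cM$ denote the set of $n$-variate monomials of degree at most $d$ with individual degrees at most $q-1$, so $|\cM| = M(q,n,d)$. The plan is to find a set $S \subseteq \F_q^n$ of size $|\cM|$ on which the monomials in $\cM$, viewed as functions $S \to \F_q$, form a basis of $\F_q^S$. Granting this, each coordinate $F_i|_S$ is some linear combination of the monomials in $\cM$. That combination is a polynomial $Q_i$ of degree at most $d$ with $Q_i = F_i$ on $S$. Hence $\bar Q = (Q_1,\dots,Q_m)$ agrees with $F$ on all of $S$, giving $\agree(F,\bar Q) \geq |S| = M(q,n,d)$.

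To construct $S$, consider the evaluation map
$$\mathrm{ev}: \mathrm{span}(\cM) \to \F_q^{\F_q^n}.$$
Since the individual degrees in $\cM$ are at most $q-1$, \cref{lemma:non-root-ind-deg-bound} (with all $S_i = \F_q$ and $t_i = q-1$) shows that no nonzero polynomial in $\mathrm{span}(\cM)$ vanishes on all of $\F_q^n$. So $\mathrm{ev}$ is injective.

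Now form the $|\cM| \times q^n$ matrix whose rows are the evaluation vectors of the monomials in $\cM$. Injectivity means these rows are linearly independent, so the matrix has rank $|\cM|$. Hence some set $S$ of $|\cM|$ columns (that is, points of $\F_q^n$) gives an invertible $|\cM|\times|\cM|$ submatrix. Equivalently, restriction to $S$ is an isomorphism $\mathrm{span}(\cM) \to \F_q^S$, which is exactly the interpolation property needed above.

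The only step needing care is this choice of $S$ via row rank equals column rank. The rest is bookkeeping, so I expect no real obstacle.
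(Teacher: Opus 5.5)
Your proof is correct, and its overall shape matches the paper's. Both find a set of $M(q,n,d)$ points on which polynomials of degree at most $d$ and individual degree at most $q-1$ can take arbitrary values, and then interpolate each coordinate $F_i$ separately.

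The difference is in how the interpolation set is found. The paper builds a symbolic $M(q,n,d)\times M(q,n,d)$ interpolation matrix $A$ with entries $R(X_{i,1},\dots,X_{i,n})$ in $n\cdot M(q,n,d)$ fresh variables. It argues that $\det(A)$ is a nonzero polynomial: distinct permutations give distinct monomials, so nothing cancels. It also notes that $\det(A)$ has individual degree at most $q-1$. It then applies \cref{lemma:non-root-ind-deg-bound} once, to $\det(A)$, to choose all the points at the same time.

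You instead apply \cref{lemma:non-root-ind-deg-bound} to $n$-variate polynomials. This shows that the monomials in $\cM$ are linearly independent as functions on $\F_q^n$, and you then extract the points from the fact that row rank equals column rank.

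Your route is more elementary. It avoids the non-cancellation and individual-degree checks for the determinant, which the paper settles in a single sentence. The paper's determinant version instead produces one explicit polynomial whose nonvanishing certifies a good point set. That form would suit, for example, a Schwartz--Zippel argument that random points over a larger field work, though nothing of that kind is needed here.
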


\begin{remark}
    If $q > d$, then $M(q,n,d) = \binom{n+d}{d}$, otherwise by the principle of inclusion and exclusion:
    \begin{equation*}
        M(q,n,d) = \sum_{i=0}^{\min\{n, \lfloor \frac{d}{q} \rfloor \}} (-1)^i\cdot \binom{n}{i}\cdot\binom{n+d - i \cdot q}{n}.
    \end{equation*}

For $q, d$ fixed and $n \to \infty$, 
$$ M(q,n,d) = \Theta(n^d).$$
\end{remark}

\subsection{Existence of Hard Functions}

We now state the lemma that there exists function $F :\F_q^n \longrightarrow \F_q^m$ that are extremely far from any tuple of degree $d$ functions.

The proof uses the probabilistic method (indeed, a random $F$ has this property), and can be found in \autoref{sec:proof-of-existence-of-hard-functions}.

\begin{lemma}[Existence of very hard function]\label{lemma:random-function-agreement}
    Assume $m \geq n$, then there exists a function $F : \F_q^n \longrightarrow \F_q^m$, such that 
    \begin{equation*}
        \magree{F}{d} < \min\left(\frac{m}{m-n}\cdot M(q,n,d) , \quad \cO_q(M(q,n,d) \cdot m \cdot \log\big( m \cdot M(q,n,d)\big)   \right).
    \end{equation*}
    where $M(q,n,d)$ is the number of $n$-variate monomials of degree at most $d$ and individual degree at most $q-1$
\end{lemma}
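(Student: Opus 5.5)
We argue by the probabilistic method: pick $F:\F_q^n\to\F_q^m$ uniformly at random and show that with positive probability no tuple $\bar Q$ of degree-$\le d$ polynomials agrees with $F$ on many points. Write $M=M(q,n,d)$. Every function $\F_q^n\to\F_q$ has a unique reduced representative with individual degrees at most $q-1$, so the number of distinct degree-$\le d$ functions is $q^{M}$. The number of tuples $\bar Q$ is therefore $q^{mM}$.

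\textbf{Union bound over large agreement sets.} Fix $A$. If $\agree(F,\bar Q)\ge A$, then some set $S\subseteq\F_q^n$ with $|S|=A$ satisfies $F|_S=\bar Q|_S$. For a fixed $S$, the restrictions $\bar Q|_S$ range over at most $q^{m\cdot\min(M,A)}$ functions, because the restriction map from degree-$\le d$ functions to $\F_q^S$ has image of dimension at most $\min(M,|S|)$. For a fixed $S$ and a fixed restriction, the probability that $F$ matches on $S$ is $q^{-mA}$. Hence
\[
\Pr[\magree{F}{d}\ge A]\le \binom{q^n}{A}\, q^{mM}\, q^{-mA}\le q^{nA+mM-mA}.
\]
For the first bound we need $nA+mM<mA$, i.e. $A>\frac{m}{m-n}M$ when $m>n$. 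Taking $A$ to be the least integer exceeding $\frac{m}{m-n}M$ makes the probability less than $1$, so some $F$ has $\magree{F}{d}<A$. We adjust the strictness of the inequality by choosing $A=\lceil \frac{m}{m-n}M\rceil$ when it is not an integer boundary, and use $\binom{q^n}{A}<q^{nA}$, which is strict for $A\ge2$, to absorb the equality case. The case $m=n$ is vacuous for the first term, since the bound is then $\infty$.

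\textbf{Second bound.} This one is needed when $m$ is close to $n$. We fix $\bar Q$ and bound the agreement directly rather than union-bounding over all sets $S$. For fixed $\bar Q$, each point $x$ agrees independently with probability $q^{-m}$, so $\agree(F,\bar Q)$ is distributed as $\mathrm{Bin}(q^n,q^{-m})$, with mean $q^{n-m}\le1$. By Chernoff, $\Pr[\agree\ge A]\le (e q^{n-m}/A)^A\le (e/A)^A$. Union bounding over the $q^{mM}$ tuples, we need $A\log(A/e)>mM\log q$. This holds for $A=C_q\, mM\log(mM)$, giving $\magree{F}{d}=\cO_q(mM\log(mM))$. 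Since the two events are handled separately, we should instead take a single random $F$ and note that both failure probabilities are below $1/2$ after adjusting constants. Alternatively, since the statement asks for the minimum, it suffices to note that whichever quantity is smaller is achieved by an $F$ from the corresponding argument.

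\textbf{Main obstacle.} The key step is the count of restrictions $\bar Q|_S$ in the first bound, that is, replacing $q^{mM}$ by the dimension bound and avoiding losing a factor through the $\binom{q^n}{A}$ term. The rest is careful bookkeeping of strict inequalities and constants.
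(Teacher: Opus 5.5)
Your proposal is correct and is essentially the paper's argument: a union bound over the $q^{mM}$ tuples $\bar Q$ and the $\binom{q^n}{A}$ candidate agreement sets gives $q^{mM}\binom{q^n}{A}q^{-mA}$, which you bound once via $\binom{q^n}{A}\le q^{nA}$ (giving the $\frac{m}{m-n}M$ term) and once via $\binom{q^n}{A}\le (eq^n/A)^A$ (giving the $\cO_q(mM\log(mM))$ term), whereas the paper writes both in a single combined exponent and notes that either part suffices. The restriction-counting refinement you call the ``main obstacle'' is never used, since you fall back to $q^{mM}$; and your remark that one may simply take whichever threshold is smaller already settles the single-$F$ issue.
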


\begin{remark}
    For any $\epsilon >0$, by setting $m = \cO_{\epsilon}(n)$, \cref{lemma:random-function-agreement}, implies that there exists $F: \F_q^n \longrightarrow\F_q^m$ such that $\magree{F}{d} \leq (1+\epsilon) M(q,n,d)$ which is close to the optimal (\cref{lemma:interpolation-bound}).

    For $m = n$, and $q,d, = O(1)$, we get that there exists $F$ with
    $\magree{F}{d} \leq O( M(q,n,d) \cdot n \log( M(q,n,d)) \cdot n ) = O_{q,d}(n^{d+1}\log n).$
\end{remark}

\subsection{The Basic Fourier Bound}
\label{sec:bent-function-and-basic-fourier-bound}

We now present the basic Fourier bound on nonlinearity -- which is the best bound on nonlinearity that can be deduced from ``smallness of Fourier coefficients of projections" considerations. The proofs here are over an arbitrary finite field $\F_q$.

Let $\psi : \F_q \to \mathbb C$ be a nontrivial additive character. 

\begin{lemma}
\label{lem:fourier-bound-for-far-from-linear}
Let $F: \F_q^n \to \F_q^m$ be an arbitrary function.
For $y \in \F_q^m$, define $F_y : \F_q^n \to \mathbb C$ by:
$$ F_y(x) = \psi\big( \langle y, F(x) \rangle \big).$$
Then for all affine functions $L: \F_q^n \to \F_q^m$:
$$\agree(F, L) \leq \frac{q^n}{q^m}  + q^n \cdot \max_{(y,\alpha) \neq (0,0)} |\widehat{F}_y(\alpha)|,$$
where $\alpha \in \F_q^n$.
\end{lemma}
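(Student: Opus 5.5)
We write the agreement indicator through character orthogonality and then separate the trivial term. Let $L(x) = Mx + b$. For $x \in \F_q^n$, the indicator of $F(x) = L(x)$ is
$$\indicator[F(x) = L(x)] = \frac{1}{q^m}\sum_{y \in \F_q^m} \psi\big(\langle y, F(x) - Mx - b\rangle\big).$$
This holds because for a fixed nonzero $z \in \F_q^m$ the map $y \mapsto \psi(\langle y, z\rangle)$ is a nontrivial character of $\F_q^m$, since $\psi$ is nontrivial and $y\mapsto\langle y,z\rangle$ is onto $\F_q$. Its sum over $y$ therefore vanishes. Summing over $x$ gives
$$\agree(F,L) = \frac{1}{q^m}\sum_{y}\psi(-\langle y,b\rangle)\sum_{x}\psi(\langle y,F(x)\rangle)\,\psi(-\langle M^T y, x\rangle).$$

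Next we identify the inner sum with a Fourier coefficient of $F_y$. Viewing $F_y$ as a function on $\F_q^n$, the characters of $\F_q^n$ are $\chi_\alpha(x) = \psi(\langle \alpha, x\rangle)$ for $\alpha\in\F_q^n$. This parametrization uses the given nontrivial $\psi$; it is legitimate since every character of $\F_q$ has the form $t\mapsto\psi(at)$. With this convention the inner sum equals $q^n \widehat{F_y}(M^Ty)$, where $\widehat{F_y}(\alpha) = \expectation_x[F_y(x)\overline{\chi_\alpha(x)}]$. Hence
$$\agree(F,L) = \frac{q^n}{q^m}\sum_{y\in\F_q^m}\psi(-\langle y,b\rangle)\,\widehat{F_y}(M^Ty).$$

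The $y=0$ term is $\widehat{F_0}(0) = 1$, since $F_0\equiv 1$, and it contributes $q^n/q^m$. For each of the remaining $q^m-1$ terms, $(y, M^Ty) \neq (0,0)$ because $y\neq 0$, and $|\psi(\cdot)|=1$. Each such term is therefore bounded in absolute value by $\max_{(y,\alpha)\neq(0,0)}|\widehat{F_y}(\alpha)|$. Applying the triangle inequality, the remaining terms contribute at most $\frac{q^n}{q^m}(q^m-1)\max|\widehat{F_y}(\alpha)| \le q^n \max_{(y,\alpha)\neq(0,0)}|\widehat{F_y}(\alpha)|$, which is the claim.

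The argument is essentially routine. The only point needing care is fixing the normalization of the Fourier transform on $\F_q^n$ via $\psi$, so that the $x$-sum becomes exactly $q^n\widehat{F_y}(M^Ty)$. We would state this convention explicitly, since the preliminaries define the transform only on $\F_q$. The max over $(y,\alpha)\neq(0,0)$ is in fact slightly wasteful: only $y\neq 0$ is ever used.
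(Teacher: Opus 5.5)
Your proposal is correct and follows essentially the same route as the paper's proof: expand the agreement indicator by character orthogonality over $y \in \F_q^m$, use $\langle y, Mx\rangle = \langle M^T y, x\rangle$ to recognize the inner sum as $q^n \widehat{F_y}(M^T y)$, then peel off the $y=0$ term (contributing $q^n/q^m$) and bound the remaining $q^m-1$ terms by the triangle inequality. Your remark that only terms with $y \neq 0$ are ever bounded, so the maximum could be taken over $y\neq 0$ alone, applies equally to the paper's argument.
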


The proof of this lemma appears in \autoref{sec:proof-fourier-bound}.

By the Parseval identity, it is well known that some non-trivial Fourier coefficient has magnitude at least $q^{-n/2}$.

\begin{lemma}\label{lemma:large-fourier-coefficient}
    For all $y \in \F_q^m \setminus\{0\}$, there exists some $\alpha \in \F_q^n $  such that, $$|\widehat{F}_y(\alpha)| \geq q^{-n/2}.$$ 
\end{lemma}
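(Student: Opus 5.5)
Fix $y \in \F_q^m\setminus\{0\}$ and apply the Parseval identity (\cref{lem:parseval-identity}) to $F_y$. The identity is stated for functions on $\F_q$, but the same orthonormality argument extends verbatim to functions on $\F_q^n$. There the characters are $\chi_\alpha(x) = \psi(\langle \alpha, x\rangle)$ for $\alpha \in \F_q^n$, and they are orthonormal with respect to $\langle u,v\rangle = \expectation_{x\in\F_q^n}[u(x)\overline{v(x)}]$. Define
\[
\widehat F_y(\alpha) = \expectation_{x\in \F_q^n}\bigl[F_y(x)\overline{\chi_\alpha(x)}\bigr].
\]
Because $F_y$ takes values in the unit circle, $|F_y(x)|^2=1$ for every $x$. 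Parseval then gives
\[
\sum_{\alpha\in\F_q^n} |\widehat F_y(\alpha)|^2 = \expectation_{x}\bigl[|F_y(x)|^2\bigr] = 1 .
\]

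There are exactly $q^n$ terms in this sum, so the largest of them is at least the average $q^{-n}$. Hence some $\alpha$ satisfies $|\widehat F_y(\alpha)|^2 \ge q^{-n}$, that is, $|\widehat F_y(\alpha)| \ge q^{-n/2}$. This is the claim.

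The argument does not use $y\neq 0$. That hypothesis matters only for how the lemma is used afterwards, with $(y,\alpha)\neq(0,0)$ in \cref{lem:fourier-bound-for-far-from-linear}. When $y\neq 0$, the $\alpha$ found above is automatically a pair $(y,\alpha)\neq(0,0)$. This shows that the error term in \cref{lem:fourier-bound-for-far-from-linear} is always at least $q^{n}\cdot q^{-n/2}=q^{n/2}$.

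The only step needing any care is checking that Parseval holds on $\F_q^n$ with the chosen normalization. This follows because the characters $\chi_\alpha$ are orthonormal: the nontrivial ones sum to zero over $\F_q^n$, which reduces to the one-dimensional orthogonality lemma coordinatewise. The rest of the argument is the pigeonhole step.
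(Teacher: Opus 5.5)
Your proof is correct and is the argument the paper intends: the paper only invokes Parseval, and you supply the details that $|F_y|\equiv 1$ gives $\sum_{\alpha\in\F_q^n}|\widehat F_y(\alpha)|^2=1$, so some one of the $q^n$ terms is at least $q^{-n}$. The one point left implicit is that Parseval, rather than just Bessel's inequality, holds on $\F_q^n$; this is immediate because the $q^n$ orthonormal characters $\chi_\alpha$ must span the $q^n$-dimensional space $\C^{\F_q^n}$.
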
 

As a consequence of \cref{lemma:large-fourier-coefficient},  there is no function $F$ for which  the basic Fourier bound, \cref{lem:fourier-bound-for-far-from-linear}, can let us conclude that  $\linagree{F} \leq o(q^{n/2})$. This is the limitation of the Fourier bound that motivates this paper.

\begin{remark}
We now give a quick sketch of why the Fourier bound (\cref{lem:fourier-bound-for-far-from-linear}) is essentially the best possible upper bound on $\linagree{F}$ in terms of smallness of Fourier coefficients of the $F_y$. Concretely, we show that there exists an $F$ which has quite large $\linagree{F}$, while having all Fourier coefficients of all the $F_y$ essentially as small as possible.

Suppose we choose $F: \F_q^n \to \F_q^m$ as follows.
Pick a uniformly random subset $S \subseteq \F_q^n$ of size $n^2 \cdot q^{n/2}$.
Define $F|_{S} = 0$. For all $x\in \F_q^n \setminus S$, choose $F(x)$ uniformly at random from $\F_q^m$.

Then the following hold with high probability (as $n \to \infty)$:
\begin{itemize}
    \item $\linagree{F} \geq q^{n-m} + \Omega(n^2 \cdot q^{n/2})$.
    \item $\max_{(y,\alpha) \neq (0,0)} |\widehat{F}_y(\alpha)| \leq O(n^2 \cdot q^{-n/2})$.
\end{itemize}

Thus, up to $\poly(n)$ factors, the Fourier bound is the best possible bound on 
$\linagree{F}$ in terms of the smallness of Fourier coefficients of the $F_y$.
\end{remark}

To demonstrate the utility of the Fourier bound, below we give a construction of a function $F$ for which the Fourier bound yields a pretty good bound on the linearity:
$$\linagree{F} \leq q^{n-m} + \cO\left(\frac{m}{n} + 1\right) \cdot  q^{n/2}.$$

\begin{example}
There is a known construction of $F : \F_2^n \longrightarrow \F_2^m$, where $m \geq n/2$, that achieves  $$\max_{(y,\alpha)\neq (0,0)}|\widehat{F}_y(\alpha)| \leq O\left(\frac{m}{n} + 1\right) \cdot q^{-n/2}.$$
which is essentially the smallest possible.

This is achieved by the following function. 
Let $\beta_1, \ldots, \beta_n$ be a basis for $\F_{q^n}$ over $\F_q$.
View $\F_q^n$ as $\F_{q^n}$ via this basis.

Let $S \subseteq \mathbb N \times [n]$ be a set of size $m$
such that for each $(i, j) \in S$, the following holds true
\begin{enumerate}
    \item $i \geq 2$
    \item $\Char(\F_q)$ does not divide $i$
    \item $i \leq 2\cdot \lceil\frac{m}{n}\rceil +2$
\end{enumerate}

Then for $x \in \F_{q^n}$, define 
$$F(x) = \bigg( \Tr(\beta_j \cdot x^i )  \bigg)_{(i,j) \in S}$$

It is a consequence of the Weil bounds for character sums that this function $F : \F_q^n \to \F_q^m$
satisfies the above claimed bound:
$$\max_{(y,\alpha)\neq (0,0)}|\widehat{F}_y(\alpha)| \leq O\left(\frac{m}{n} + 1\right) \cdot q^{-n/2}.$$
\end{example}

\section{Hard Functions over $\F_2$ using Hypergraphs}
\label{sec:hard-functions-over-F2}

In this section, we will define hard functions over $\F_2$ using hypergraphs. To bound the number of agreements of the hard function with a low degree function, we will use Macaulay's theorem (also known as the footprint bound), although our proofs won't be written in the language of Hilbert functions. This is a classical technique and can be found in \cite{GH00},\cite{Mc27},\cite{BEELEN2018130},\cite{GM17},\cite{HP98}, etc.

\subsection{Hypergraph Framework}

We will first introduce the hypergraph framework for constructing explicit functions over $\F_2$ which are far from tuples of polynomials of degree at most $r-1$. Let $H$ be an $r$-uniform hypergraph. Using $H$, we define an $\F_2$ function, denoted by $F_{H}$ and use it to study $\magree{F_{H}}{d}$, where $d < r$. In particular, we will show that $\magree{F_{H}}{d} = |\Ind(H)|$.

\begin{definition}
    Let $H = (V,E)$ be an $r$-uniform hypergraph, where $V = [n]$. Let $m = |E(H)|$ and $[m]$ be indexed by the elements of $E$. Define function $F_{H} : \F_2^n \longrightarrow \F_2^{m}$, where for any $x \in \F_2^n$ and $e = (i_1,...,i_r) \in E(H)$, define $F_{H}(x)_{e} = \displaystyle\prod_{j \in [r]} x_{i_j}$.
\end{definition}

Our main theorem for this section shows that the hardness of $F_H$ is controlled by the number of independent sets in $H$.

\begin{theorem}\label{thm:hardness-of-hypergraph-function}
    Let $H = (V,E)$ be a $r$-uniform hypergraph where $V = [n]$. Then for any $1 \leq d < r$, 
    \begin{equation*}
        \magree{F_{H}}{d} = |\Ind(H)|.
    \end{equation*}
\end{theorem}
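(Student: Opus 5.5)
The equality splits into a lower bound and an upper bound. I will prove the lower bound by exhibiting a good tuple $\bar Q$, and the upper bound by a footprint (Macaulay-type) dimension count.

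\emph{Lower bound.} I take $\bar Q = 0$, which has degree $0 \leq d$. Then $F_H(x) = 0$ exactly when no edge $e$ has $x_i = 1$ for all $i \in e$. In other words, the support $\{i : x_i = 1\}$ must be an independent set of $H$. Since $x \mapsto \supp(x)$ is a bijection between $\F_2^n$ and subsets of $[n]$, we get $\agree(F_H, 0) = |\Ind(H)|$. Hence $\magree{F_H}{d} \geq |\Ind(H)|$.

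\emph{Upper bound.} Fix $\bar Q = (Q_e)_{e \in E}$ with $\deg Q_e \leq d < r$, and let $S = \{x \in \F_2^n : F_H(x) = \bar Q(x)\}$. Since $\dim \F_2^S = |S|$, it suffices to show that the restrictions to $S$ of the monomials $\bar X^{\mathbf 1_W}$, for $W \in \Ind(H)$, span $\F_2^S$.

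Every function on $S \subseteq \F_2^n$ is the restriction of a multilinear polynomial, so it is enough to show that every multilinear monomial $\bar X^{\mathbf 1_W}$ agrees on $S$ with a combination of independent-set monomials. I prove this by strong induction on $|W|$. If $W$ is independent, there is nothing to prove. Otherwise $W \supseteq e$ for some edge $e$. On $S$ we have $\prod_{i\in e} X_i = Q_e$, so on $S$
$$\bar X^{\mathbf 1_W} = Q_e \cdot \bar X^{\mathbf 1_{W\setminus e}}.$$
After multilinearizing (using $x_i^2 = x_i$ on $\F_2^n$), the right side is a combination of multilinear monomials of degree at most $d + |W| - r < |W|$. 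The induction hypothesis then finishes the argument.

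The only point needing care is that the multilinear reduction does not increase degree and that $d < r$ gives a strict decrease. These are both immediate, so I expect no real obstacle. The upper bound gives $|S| \leq |\Ind(H)|$, and combined with the lower bound this yields equality.
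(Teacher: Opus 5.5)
Your proposal is correct and takes essentially the same approach as the paper. The lower bound uses $\bar Q = 0$; you observe that in fact $\agree(F_H,0) = |\Ind(H)|$, where the paper only needs $\geq$. The upper bound is the same footprint argument: reduce any monomial containing an edge by substituting $Q_e$ and multilinearizing, induct on degree using $d<r$, and then count dimensions in $\F_2^{S_{H,\bar Q}}$.
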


We now make some definitions to help us reason about the hardness of $F_H$.

\begin{definition}
    Let $H = (V,E)$ be an $r$-uniform hypergraph where $V = [n]$, then define the set of indestructible monomials with respect to $H$ as
    \begin{equation*}
        \cM_{H} = \{ \bar{X}^{\textbf{1}_{W}} \mid W \in \Ind(H )  \}.
    \end{equation*}
\end{definition}

Next we define our main object of interest, the set of agreements between $F_H$ and tuples of polynomials of degree at most $r-1$.

\begin{definition}
    Let $H = (V,E)$ be an $r$-uniform hypergraph where $V = [n]$. Moreover, let $m = |E(H)|$ and $\bar{Q} = (Q_1,...,Q_m)$ be a tuple of $n$-variate polynomials of degree less than $r$. Define
    \begin{equation*}
        S_{H,\bar{Q}} =\{ x \in \F_2^n \mid F_{H}(x) = \bar{Q}(x) \},
    \end{equation*}
    to be the set of agreement between $F_{H}$ and $\bar{Q}$ in $\F_2^n$.
\end{definition}

Observe that $S_{H,\bar{Q}}$ is the set of common solutions to the following set of equations over $\F_2$:
\begin{align}
    \forall e=(i_1, \dots, i_r) \in E(H) &: F_{H}(\bar X)_{e} = Q_e(\bar X) \nonumber \\
    \iff \forall e=(i_1, \dots, i_r) \in E(H)&: \prod_{j \in [r]} X_{i_j} = Q_e(\bar X) \label{eqn:common-solutions}
\end{align}
We give a brief overview of the main ideas for proving $\magree{F_{H}}{d} = |\Ind(H)|$ (where $d < r$), before going into the technical details.

\paragraph{Upper bound for $\magree{F_H}{d}$:} We will upper bound the size of $S_{H,\bar{Q}}$. For this, we will study the space $W = \F_2^{S_{H,\bar{Q}}}$, which is the $\F_2$-linear vector space of all functions from $S_{H, \bar Q}$ to $\F_2$. 

We will pick $|\Ind(H)|$ functions in $W$ and show that they span $W$.
This will imply that $|S_{H,\bar{Q}}| = \dim(W) \leq |\Ind(H)|$, as desired.

Which $|\Ind(H)|$ functions in $W$ do we pick? The indestructible monomials.
Why do they span $W$? We show this by an algorithmic procedure.
First notice that any function $g \in W$ is computed by {\em some} polynomial
$g'(\bar X) \in \F_2[\bar X]$. We now repeatedly ``reduce" $g'(\bar X)$ while
preserving the property that $g'(\bar X)$ computes $g$. The reduction step 
looks for a monomial in $g'(\bar X)$ divisible by $F_H(\bar X)_e$, and
replaces $F_H(\bar X)_e$ in that monomial by $Q_e(\bar X)$.
In \autoref{lemma:remove-hypergraph-junk}, we show that the above process terminates and in the end, we obtain a polynomial $\hat g(\bar X) \in \F_2[\bar X]$ which is supported on the indestructible monomials $\cM_{H}$, such that $\forall x \in S_{H,\bar{Q}}$, $g(x) = g'(x) = \hat{g}(\bar x)$ (this follows from \autoref{eqn:common-solutions}). Thus, every function $g \in W$ is spanned by the indestructible monomials $\cM_{H}$, as claimed. 

\paragraph{Lower bound for $\magree{F_H}{d}$:} We will show an explicit set of tuples $\bar{Q} = (Q_1,...,Q_m)$ such that $\agree(f_H, \bar Q) \geq \abs{\Ind(H)}$. The trivial choice $\bar Q = (0, \dots, 0)$ works for this, since for each $W \in \Ind(H)$, $\textbf{1}_W \in S_{H,\bar{Q}}$.

We formally prove below that the repeated ``replacement'' of monomials using \autoref{eqn:common-solutions} in $g(\bar X) \in \F_2[\bar X]$ indeed terminates, and in the end we get a polynomial $\hat g(\bar X) \in \F_2[\bar X]$ that is supported on the indestructible monomials $\cM_H$.

\begin{lemma}\label{lemma:remove-hypergraph-junk}
    Let $g(\bar{X}) \in \F_2[\bar X]$ be a multilinear polynomial and $H =(V,E)$ be an $r$-uniform hypergraph where $V = [n]$. Moreover, let $m = |E(H)|$ and $\bar Q = (Q_1,...,Q_{m})$ be a tuple of $n$-variate polynomials of degree less than $r$. 
    Then, there exists a multilinear polynomial $\hat g(\bar X) \in \F_2[\bar X]$ such that $\hat{g}(\bar X)$ lies in the $\F_2$ span of $\cM_{H}$ and for any $x \in S_{H,\bar{Q}}$, $g(x) = \hat g(x)$.
\end{lemma}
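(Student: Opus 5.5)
We prove the statement by a reduction procedure with a termination argument based on a well-founded order on multilinear polynomials. Since every point of $S_{H,\bar Q}$ lies in $\F_2^n$, we may freely replace any polynomial by its multilinearization, i.e.\ reduce every exponent to $\min(\cdot,1)$ using $X_i^2 = X_i$ on $\F_2^n$. This does not change the function on $\F_2^n$, and it does not increase degree. So throughout we work with multilinear polynomials and identify monomials with subsets of $[n]$.

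\emph{Reduction step.} Suppose the current multilinear polynomial $g$ has a monomial $\bar X^{\textbf{1}_U}$ in its support with $U \notin \Ind(H)$. Then there is an edge $e = \{i_1,\dots,i_r\} \subseteq U$, and we can write
\[
\bar X^{\textbf{1}_U} = \Big(\prod_{j\in[r]} X_{i_j}\Big)\cdot \bar X^{\textbf{1}_{U\setminus e}}.
\]
Replace this monomial by $Q_e(\bar X)\cdot \bar X^{\textbf{1}_{U\setminus e}}$ and multilinearize the result. By \eqref{eqn:common-solutions}, the new polynomial agrees with the old one at every $x \in S_{H,\bar Q}$. Moreover, every monomial created in this step has degree at most $\deg(Q_e) + |U| - r \le |U| - 1$, since $\deg Q_e < r$. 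Multilinearization can only lower degrees further.

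\emph{Termination.} This is the main point to check, because new monomials of lower degree may themselves be non-independent and need further reduction. We use a potential function: for a polynomial $g$, let $\phi(g)$ be the vector $(c_n, c_{n-1}, \dots, c_0)$, where $c_k$ is the number of non-independent monomials of degree $k$ in $\supp(g)$, and compare such vectors lexicographically. A reduction step removes one non-independent monomial of degree $|U|$. It may add or cancel (over $\F_2$) monomials only of degree strictly less than $|U|$. Hence $c_{|U|}$ drops by one and all coordinates for higher degrees are unchanged, so $\phi$ strictly decreases lexicographically. Lexicographic order on $\N^{n+1}$ is a well-order, so the process terminates.

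\emph{Conclusion.} When the process stops, every monomial in the support of the final polynomial $\hat g$ corresponds to an independent set, so $\hat g$ lies in the $\F_2$-span of $\cM_H$. It is multilinear by construction. Since each step preserves the values on $S_{H,\bar Q}$, we get $g(x)=\hat g(x)$ for all $x\in S_{H,\bar Q}$, which is exactly the statement.
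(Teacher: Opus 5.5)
Your proof is correct and is essentially the paper's argument: you use the same replacement of an edge product $\prod_j X_{i_j}$ by $Q_e$ followed by multilinearization, and the same observation that $\deg Q_e < r$ forces a strict drop in degree. The only difference is packaging: the paper reduces to a single monomial by linearity and runs strong induction on $\deg g$, whereas you run the reduction on the whole polynomial and prove termination with the lexicographic potential $(c_n,\dots,c_0)$, which is the same well-foundedness expressed differently.
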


\begin{proof}
    We will prove the lemma using induction on the degree of $g$. For the base case, if $\deg(g) = 0$, then $g$ is a constant polynomial and thus lies in the span of $\cM_H$. Setting $\hat{g}(\bar X) = g(\bar X)$ finishes the base case.\\

    Without loss of generality, we can assume $g$ is a monomial $M$ of degree $d$, since the general case follows by linearity. For the inductive step, if $M \in \cM_{H}$ then after setting $\hat{g}(\bar X) = M$, we are done. Suppose $M \notin \cM_{H}$, then there exists $e = (i_1,..,i_r) \in E(H)$ such that $M = \left(\displaystyle\prod_{j=1}^rX_{i_j}\right)\cdot  M'$, where $M'$ is a monomial of degree $d - r$. Let $\tilde g(\bar X) = Q_e(X) \cdot M'(X) \mod \langle X_i^2 - X_i \mid i \in [n] \rangle$ be the multilinear polynomial obtained by replacing $\prod_{j=1}^rX_{i_j}$ in $M$ by $Q_{e}(X)$, and multilinearizing (which never increases degree). Then $\deg(\tilde g) \leq (r-1) + (d-r) < d$. By induction, we can find $\hat g_1(\bar X)$ such that $\hat g_1(\bar X)$ lies in the $\F_2$ span of $\cM_{H}$ and for any $x \in S_{H,\bar Q}$, $\hat g_1(x) = \tilde g(x)$.\\
    
    Since for any $x \in S_{H,\bar Q}$, $\prod_{j}x_{i_j} = Q_e(x)$, it follows by definition of $\tilde g(x)$, that for any $x \in S_{H,\bar Q}$, $\tilde g(x) = g(x)$. Hence, setting $\hat g(\bar X) = \hat g_1(\bar X)$ finishes the proof.
\end{proof}
We now prove the main result of this section, \cref{thm:hardness-of-hypergraph-function}, using \autoref{lemma:remove-hypergraph-junk} above.

\begin{proof}
    Let $m = |E(H)|$ and  $\bar{Q} = (Q_1,...,Q_{m})$ be a tuple of $n$-variate polynomials of degree at most $d$. We will first show that $\magree{F_{H}}{d} \leq |\Ind(H)|$ by proving that $|S_{H,\bar Q}| \leq |\Ind(H)|$. In order to upper bound $|S_{H,\bar{Q}}|$, we will bound the dimension of $\F_2^{S_{H,\bar{Q}}}$.\\

    Take any function $g \in \F_2^{S_{H,\bar{Q}}}$. Extend it to a multilinear polynomial $g'(\bar X)\in \F_2[\bar X]$, such that for all $x \in S_{H,\bar{Q}}$, $g(x) = g'(x)$. (This can be done because of the elementary fact that every function from $\F_2^n \to \F_2$ can be expressed as a multilinear polynomial.) 

    By \cref{lemma:remove-hypergraph-junk}, there exists a multilinear polynomial $\hat g(\bar X) \in \F_2[\bar X]$ such that for all $x \in S_{H,\bar{Q}}$, $g(x) = g'(x) = \hat g(x)$, and $\hat g(\bar X)$ lies in the $\F_2$ span of $\cM_{H}$. Since $\hat g(\bar X)$ is supported on the monomials $\cM_{H}$, we have shown that every function of $\F_2^{S_{H,\bar{Q}}}$ is spanned by monomials $\cM_{H}$. This implies $\dim(\F_2^{S_{H,\bar{Q}}}) \leq |\Ind(H)|$ and hence $|S_{H,\bar{Q}}| \leq |\Ind(H)|$. \\

    To show $\magree{F_{H}}{d} \geq |\Ind(H)|$, we need to show there exists $\bar{Q} = (Q_1,...,Q_{m})$ a tuple of $n$-variate polynomials of degree at most $d$ such that $\agree(F_H, \bar Q) \geq |\Ind(H)|$. We will set $\hat Q$ to be the zero map, then we claim the following 
    \begin{claim}
        Let $\hat Q$ be the zero map, then $\agree(F_H, \hat Q) \geq |\Ind(H)| $.
    \end{claim}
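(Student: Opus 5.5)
The plan is to exhibit, for each independent set, a point of $\F_2^n$ on which $F_H$ vanishes. For every $W \in \Ind(H)$ we take the point $x = \textbf{1}_W \in \F_2^n$ and check that $F_H(x) = 0 = \hat Q(x)$. Distinct $W$ give distinct characteristic vectors, so the map $W \mapsto \textbf{1}_W$ is injective. Hence $|S_{H,\hat Q}| \geq |\Ind(H)|$, which is the claim.

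The single verification step is as follows. Fix $W \in \Ind(H)$ and an edge $e = (i_1,\dots,i_r) \in E(H)$. Since $W$ is independent, $e \not\subseteq W$, so some vertex $i_j \in e$ lies outside $W$. That coordinate of $\textbf{1}_W$ is $0$, and therefore
$$F_H(\textbf{1}_W)_e = \prod_{j\in[r]} (\textbf{1}_W)_{i_j} = 0.$$
This holds for every $e \in E(H)$, so $F_H(\textbf{1}_W) = 0 = \hat Q(\textbf{1}_W)$, i.e.\ $\textbf{1}_W \in S_{H,\hat Q}$.

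The last step is to note that $\hat Q = (0,\dots,0)$ is a tuple of polynomials of degree at most $d$ (any $d \geq 0$ suffices). Therefore $\magree{F_H}{d} \geq \agree(F_H,\hat Q) \geq |\Ind(H)|$. There is no real obstacle; the only point needing care is that independence means no edge is fully contained in $W$, and that is exactly what makes each product vanish. In fact the argument gives equality, since $F_H(x) = 0$ forces the support of $x$ to be independent, but only the inequality is needed.
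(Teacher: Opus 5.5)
Your proposal is correct and takes the same approach as the paper: for each $W \in \Ind(H)$, every edge has a vertex outside $W$, so each edge monomial vanishes at $\textbf{1}_W$, giving $\textbf{1}_W \in S_{H,\hat Q}$. Your extra remarks are also correct: the map $W \mapsto \textbf{1}_W$ is injective, and the zero tuple agrees with $F_H$ on exactly $|\Ind(H)|$ points.
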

    \begin{proof}[Proof of Claim]
        To prove the claim, it is enough to show that $\{ \indicator_{W} \in \F_2^n \mid W \in \Ind(H) \} \subseteq S_{H,\hat Q}$ (since $\agree(f_h, \hat Q) = |S_{H,\hat Q}|$). Recall that every $x \in S_{H,\hat Q}$ satisfies the following set of equations,
        $$\forall e=(i_1, \dots, i_r) \in E(H): \prod_{j \in [r]} x_{i_j} = \hat Q_e(x) = 0$$
        Let $W \in \Ind(H)$. Then, for each edge $e=(i_1, \dots, i_r) \in E(H)$, there exists some $j \in [r]$ such that $i_j$ is not in $W$. Thus, for every edge $e=(i_1, \dots, i_r)$, we have $\prod_{j \in [r]} x_{i_j} = 0$ for $x = \indicator_W$. Hence, $\indicator_{W} \in S_{H,\hat Q}$ and this proves the claim. 
    \end{proof}
    Thus, we have proved that $\magree{F_{H}}{d} = |\Ind(H)|$.

\end{proof}

\autoref{thm:hardness-of-hypergraph-function} gives us a \emph{Direct Product Lemma} in our hypergraph framework, which will play a crucial role in constructing explicit hard functions in \cref{sec:instantiate-hypergraphs}.

\begin{lemma}[Direct Product Lemma]\label{lemma:direct-product-F2}
    Let $H = (V_1 \sqcup V_2 \sqcup \dots \sqcup V_t,E_1 \sqcup E_2 \sqcup \dots \sqcup E_t)$ be a disjoint union of $r$-uniform hypergraphs $H_1=(V_1, E_1), H_2 = (V_2, E_2), \dots, H_t=(V_t, E_t)$. Then,
    \begin{equation*}
        \magree{F_H}{d} = \prod_{i=1}^t \magree{F_{H_i}}{d}
    \end{equation*}
    where $F_{H_i}:\F_2^{|V_i|}\longrightarrow \F_2^{|E_i|}$ and $d < r.$
\end{lemma}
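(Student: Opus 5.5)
The plan is to reduce everything to \cref{thm:hardness-of-hypergraph-function} and a purely combinatorial product identity for independent sets. By that theorem, applied to $H$ and to each $H_i$ separately (each is $r$-uniform and $d<r$), we have $\magree{F_H}{d} = |\Ind(H)|$ and $\magree{F_{H_i}}{d} = |\Ind(H_i)|$ for every $i$. So it suffices to show
\begin{equation*}
    |\Ind(H)| = \prod_{i=1}^t |\Ind(H_i)|.
\end{equation*}
To apply the theorem, which is stated for vertex set $[n]$, I would first relabel $V_1\sqcup\dots\sqcup V_t$ as $[n]$ with $n=\sum_i |V_i|$, and each $V_i$ as $[|V_i|]$. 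Under such a relabeling $F_H$ changes only by a permutation of input coordinates and of output coordinates. This does not change $\maxagree$, because degree-$\le d$ tuples are preserved under permuting variables and permuting the entries of the tuple.

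For the product identity, I would exhibit the bijection $\Ind(H) \to \Ind(H_1)\times\dots\times\Ind(H_t)$ given by $W \mapsto (W\cap V_1,\dots,W\cap V_t)$, with inverse $(W_1,\dots,W_t)\mapsto W_1\cup\dots\cup W_t$. Every edge of $H$ lies in some $E_i$ and hence inside $V_i$. So for $e \in E_i$ we have $e\subseteq W$ if and only if $e \subseteq W\cap V_i$. It follows that $W$ is independent in $H$ exactly when each $W\cap V_i$ is independent in $H_i$. Both maps are well defined and mutually inverse because the $V_i$ are disjoint.

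There is no real obstacle here; the only care needed is the bookkeeping of indices and relabeling described above. An alternative, more direct route would be to show that the agreement set $S_{H,\bar Q}$ for $\bar Q=0$ factors as a product, and to use the upper bound from \cref{thm:hardness-of-hypergraph-function}. However, a general $\bar Q$ mixes variables across blocks, so going through $|\Ind(\cdot)|$ is the clean path.
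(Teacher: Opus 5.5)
Your proposal is correct and is exactly the paper's argument: the paper's proof is the one-line observation that $|\Ind(H)| = \prod_{i=1}^t |\Ind(H_i)|$, combined with \cref{thm:hardness-of-hypergraph-function} applied to $H$ and to each $H_i$. Your bijection $W \mapsto (W\cap V_1,\dots,W\cap V_t)$ and your relabeling remark just spell out details the paper leaves implicit.
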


\begin{proof}
    Since $|\Ind(H)| = |\Ind(H_1)| \cdot |\Ind(H_2)| \cdots |\Ind(H_t)|$, the lemma follows by \cref{thm:hardness-of-hypergraph-function}.
\end{proof}

\subsection{Instantiating the Hypergraphs}
\label{sec:instantiate-hypergraphs}

To construct explicit hard functions $F : \F_2^n \longrightarrow \F_2^m$, $\cref{thm:hardness-of-hypergraph-function}$ motivates the following question.

\begin{question}\label{ques:min-ind-sets}
    What $r$-uniform hypergraphs on $n$ vertices and at most $m$ edges have the minimum number of independent sets?
\end{question}

For the case of graphs, the question was solved for the case $m \leq n$, by Cutler and Radcliffe (\cite{Cutler_Radcliffe_2011}). Later, Sah, Sawhney, Stoner and Zhao (\cite{SAH2019172}) in their breakthrough work, solved the problem when the degree distribution of the vertices is given. In particular, they proved the following.

\begin{theorem}[\cite{SAH2019172}]
    \label{thm:optimal-number-of-independent-sets-graphs}
    Let $G = (V,E)$ be a graph and $d_v$ be the degree of vertex $v$ in G. Then
    \begin{equation*}
        |\Ind(G)| \geq \prod_{v \in V(G)}(d_v +2)^{1/(d_v +1)}
    \end{equation*}
    and the equality holds if and only if $G$ is a disjoint union of cliques.
\end{theorem}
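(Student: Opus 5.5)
The inequality will be proved by induction on $|V(G)|$, using the standard recursion for independent sets. Throughout, write $f(k) = (k+2)^{1/(k+1)}$ and $\Phi(G) = \prod_{v\in V(G)} f(d_v)$. Two facts are used repeatedly. First, $f$ is decreasing in $k$, so deleting edges can only increase $\Phi$. Second, $f(k)^{k+1} = k+2$. The base case is the empty graph, where both sides equal $1$.

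\emph{Inductive step.} Let $G$ be nonempty and pick a vertex $v$ of \emph{minimum} degree $d = d_v$. Every independent set either avoids $v$ or contains $v$ and avoids $N(v)$, so
\begin{equation*}
|\Ind(G)| = |\Ind(G-v)| + |\Ind(G - N[v])|.
\end{equation*}
\begin{itemize}
\item In $G-v$, each neighbour $u\in N(v)$ loses exactly one from its degree and all other degrees are unchanged. By induction, $|\Ind(G-v)| \geq \prod_{u\neq v,\,u\notin N(v)} f(d_u)\cdot\prod_{u\in N(v)} f(d_u-1)$.
\item In $G-N[v]$, degrees can only drop. Since $f$ is decreasing, induction gives $|\Ind(G-N[v])| \geq \prod_{u\notin N[v]} f(d_u)$.
\end{itemize}
Dividing both bounds by $\Phi(G)$, it suffices to prove the local inequality
\begin{equation*}
f(d)\prod_{u\in N(v)} f(d_u) \;\leq\; \prod_{u\in N(v)} f(d_u-1) \;+\; 1, \qquad \text{for all } d_u \geq d .
\end{equation*}
When every $d_u = d$, the left side is $f(d)^{d+1} = d+2$ and the right side is $f(d-1)^d + 1 = (d+1)+1$. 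So the inequality is tight exactly in the clique configuration, which is encouraging.

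\emph{Main obstacle.} The hard part is this local inequality when the neighbours have mixed degrees $d_u \geq d$. My plan is as follows.
\begin{itemize}
\item Substitute $a_u = \log f(d_u)$ and $b_u = \log f(d_u-1)$, so the claim becomes $e^{\log f(d) + \sum a_u} \leq e^{\sum b_u} + 1$.
\item Show that the right-hand side minus the left-hand side is nondecreasing in each $d_u$. This needs two monotonicity facts:
  \begin{itemize}
  \item the ratio $f(k-1)/f(k)$ is decreasing in $k$;
  \item $f(k-1)$ decreases more slowly, in a multiplicative sense, than $f(k)$, once the factor $f(d)$ and the additive $+1$ are accounted for.
  \end{itemize}
\item Given that, the worst case is $d_u = d$ for all $u$, which is exactly the tight case checked above.
\end{itemize}
I expect the monotonicity to need some calculus on $\log f$ together with a careful treatment of small $d$, since for $d=0$ the neighbourhood is empty and the inequality reads $2 \leq 1+1$. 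If the naive monotonicity fails for some mixed configurations, I would fall back on the approach of Sah--Sawhney--Stoner--Zhao. That is, strengthen the induction hypothesis to a weighted statement, for instance allowing vertex weights or proving a bound for $G-v$ that retains information about how degrees at distance two drop, so that the slack lost in the $G-N[v]$ term compensates.

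\emph{Equality case.} A disjoint union of cliques $K_{k}$ has $|\Ind(K_k)| = k+1 = f(k-1)^{k}$ for each component, which matches the right-hand side. Conversely, if equality holds, then every inequality used along the way must be tight:
\begin{itemize}
\item tightness of the local inequality forces all neighbours of $v$ to have degree exactly $d$;
\item tightness in the $G-N[v]$ bound forces no vertex outside $N[v]$ to lose degree, so $N[v]$ has no edges to the rest of the graph;
\item together these make $N[v]$ a $(d+1)$-clique forming a connected component.
\end{itemize}
Removing that component and applying induction to the remainder finishes the characterization.
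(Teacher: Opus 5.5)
\paragraph{There is a genuine gap: the local inequality is false.}
The paper gives no proof of this statement; it quotes it from Sah--Sawhney--Stoner--Zhao. So your argument has to stand on its own, and as written the inductive step fails. The inequality you reduce to,
\[
f(d)\prod_{u\in N(v)} f(d_u)\;\le\;\prod_{u\in N(v)} f(d_u-1)+1\qquad(d_u\ge d),
\]
is false. Take $d=1$ with a single neighbour of degree $2$: it reads $\sqrt3\cdot 4^{1/3}\approx 2.7495\le\sqrt3+1\approx 2.7321$, which fails. More generally, for $d=1$ the quantity $f(d_u-1)+1-\sqrt3\,f(d_u)$ is $0$ at $d_u=1$, negative at $d_u=2,3$, and positive afterwards. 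So the monotonicity in $d_u$ you planned to prove does not hold. This configuration is forced already in $P_3$, whose only minimum-degree vertices are the leaves.

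Keeping the slack you discarded in the $G-N[v]$ term (the degree drops at distance two) does not rescue a minimum-degree choice either. Take $K_9$ and attach a pendant path $w-u-v$ at one of its vertices $w$. Then $v$ is the unique vertex of minimum degree, and even with the exact inductive bounds
\[
\Phi(G-v)+\Phi(G-N[v])=\sqrt3\cdot 11^{1/10}\cdot 10^{8/9}+10\approx 27.045<27.057\approx\sqrt3\cdot 4^{1/3}\cdot 11^{1/10}\cdot 10^{8/9}=\Phi(G),
\]
although $|\Ind(G)|=29$. So the real problem is the choice of vertex. Your fallback of ``strengthening the hypothesis'' is not an argument.

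\paragraph{How to repair it.}
Expand at a vertex $v$ of \emph{maximum} degree $\Delta$. Then $d_u\le\Delta$ for $u\in N(v)$, and the same reduction asks for
\[
\prod_{u\in N(v)}\frac{f(d_u-1)}{f(d_u)}+\prod_{u\in N(v)}\frac{1}{f(d_u)}\;\ge\; f(\Delta).
\]
This is an equality when all $d_u=\Delta$, where the two terms are in ratio $(\Delta+1):1$. Now lowering a neighbour's degree helps rather than hurts.

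Concretely, apply weighted AM--GM with weights $\frac{\Delta+1}{\Delta+2}$ and $\frac{1}{\Delta+2}$. Write $\phi(k)=\ln f(k)=\frac{\ln(k+2)}{k+1}$ and $\psi(k)=(\Delta+1)\phi(k-1)-(\Delta+2)\phi(k)$. The inequality then reduces to showing that $\psi$ is minimized over $1\le k\le\Delta$ at $k=\Delta$. This amounts to
\[
(\Delta+1)\bigl(\phi(k-1)-2\phi(k)+\phi(k+1)\bigr)\;\ge\;\phi(k)-\phi(k+1)\qquad(1\le k<\Delta),
\]
a one-variable estimate (strict) whose continuous analogue is the classical $\ln(1+x)\ge\frac{2x}{x+2}$.

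With this, your equality analysis goes through with $d$ replaced by $\Delta$. Equality forces every neighbour of $v$ to have degree $\Delta$ and to send no edges outside $N[v]$. Hence $N[v]$ is a $K_{\Delta+1}$ component, and induction handles the rest.
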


Motivated by the above theorem, we instantiate the hypergraphs with the disjoint union of equal-sized cliques and get the following result.

\begin{lemma}[Disjoint Union of Cliques]\label{lemma:disjoint-cliques-function}
    Let $k |n$ and $r \leq k$. Let $H_{clique,k,r}$ be the disjoint union of $n/k$ $\KK{k}{r}$ then $|E(H_{clique,k,r})| = \binom{k}{r}\cdot \frac{n}{k} $ and for any $d < r$,
    \begin{equation*}
        \magree{F_{H_{clique}{k,r}}}{d} = \left( \sum_{i=0}^{r-1}\binom{k}{i} \right)^{n/k},
    \end{equation*}
    where $F_{H_{clique,k,r}}: \F_2^{n} \to \F_2^{|E(H_{clique,k,r})|}$.
\end{lemma}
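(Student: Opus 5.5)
The plan is to combine \cref{thm:hardness-of-hypergraph-function} with \cref{lemma:direct-product-F2}, so that the whole computation reduces to counting independent sets in a single complete hypergraph $\KK{k}{r}$.

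First, the edge count. $H_{clique,k,r}$ is a disjoint union of $n/k$ copies of $\KK{k}{r}$. Each copy contributes exactly $\binom{k}{r}$ edges, and edges in different copies are distinct because their vertex sets are disjoint. So $|E(H_{clique,k,r})| = \binom{k}{r}\cdot \frac{n}{k}$.

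Next, apply \cref{lemma:direct-product-F2} to the decomposition into the $n/k$ blocks $H_1,\dots,H_{n/k}$, each isomorphic to $\KK{k}{r}$. Since $d<r$, this gives
$$\magree{F_{H_{clique,k,r}}}{d} = \prod_{i=1}^{n/k} \magree{F_{H_i}}{d}.$$
By \cref{thm:hardness-of-hypergraph-function}, again using $d<r$, each factor equals $|\Ind(\KK{k}{r})|$.

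It remains to count independent sets of $\KK{k}{r}$. A subset $W$ of the $k$ vertices contains an edge exactly when $|W|\ge r$, because every $r$-subset is an edge. So the independent sets are exactly the subsets of size at most $r-1$, and there are $\sum_{i=0}^{r-1}\binom{k}{i}$ of them. The hypothesis $r\le k$ ensures $\KK{k}{r}$ has edges, so this sum is genuinely smaller than $2^k$; the formula holds either way. Raising to the power $n/k$ gives the claimed value.

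There is no real obstacle here. The only points to check are that the direct product lemma applies, which it does because the blocks share no vertices and no edges, and that the condition $d<r$ required by both earlier results is satisfied, which it is by hypothesis.
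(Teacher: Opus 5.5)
Your proposal is correct and follows essentially the same route as the paper: count edges blockwise, observe that the independent sets of $\KK{k}{r}$ are exactly the subsets of size at most $r-1$, and combine \cref{thm:hardness-of-hypergraph-function} with \cref{lemma:direct-product-F2}. The only difference is presentational: the paper invokes the direct product lemma after computing $|\Ind(\KK{k}{r})|$, while you apply it first and then evaluate each factor.
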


\begin{proof}
        Note that $|E(\KK{k}{r})| = \binom{k}{r}$ and thus $|E(H_{clique,k,r})| = \binom{k}{r}\cdot \frac{n}{k}$. Furthermore observe that 
        \begin{equation*}
            |\Ind(\KK{k}{r})| = \sum_{i=0}^{r-1} \binom{k}{i},
        \end{equation*}
         and thus by \cref{lemma:direct-product-F2}, for any $d < r$,
    \begin{equation*}
        \magree{F_{H_{clique}{k,r}}}{d} = \left( \sum_{i=0}^{r-1}\binom{k}{i} \right)^{n/k}
    \end{equation*}
    which finishes the proof.
\end{proof}

We can construct explicit non-linear functions over $\F_2$ by instantiating \autoref{lemma:disjoint-cliques-function} with the appropriate set of parameters.

\begin{theorem}[Explicit Non-Linear Functions beyond the Fourier Bound]
    \label{thm:explicit-non-linear-functions-beyond-fourier-bound}
    Let $k|n$. Suppose $G=(V,E)$ is a graph such that $|V|=n$ and $G$ is a disjoint union of $n/k$ cliques $K_k$. Then,
    $$\magree{F_G}{1} = (1+k)^{n/k},$$
    where $F_G: \F_2^n \to \F_2^{n(k-1)/2}$.
\end{theorem}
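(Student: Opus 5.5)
This is the special case $r=2$, $d=1$ of \cref{lemma:disjoint-cliques-function}. A graph is exactly a $2$-uniform hypergraph, and a disjoint union of $n/k$ copies of $K_k$ is precisely $H_{clique,k,2}$. So the plan is to check the hypotheses of that lemma and then evaluate its two formulas.

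First I verify the hypotheses. We have $k\mid n$, and $r=2\le k$; for $k=1$ the graph has no edges and the statement degenerates, with $F_G$ mapping into $\F_2^0$ and agreement $2^n=(1+1)^{n}$, which is still consistent. Since $d=1<2=r$, the degree condition $d<r$ holds.

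Next I compute the number of edges. \cref{lemma:disjoint-cliques-function} gives
\[
|E(G)|=\binom{k}{2}\cdot\frac{n}{k}=\frac{n(k-1)}{2},
\]
so $F_G:\F_2^n\to\F_2^{n(k-1)/2}$, as claimed.

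Finally I evaluate the agreement formula:
\[
\magree{F_G}{1}=\Bigl(\sum_{i=0}^{1}\binom{k}{i}\Bigr)^{n/k}=(1+k)^{n/k}.
\]
This matches the count of independent sets directly: in $K_k$ the independent sets are the empty set and the $k$ singletons, giving $1+k$. By \cref{lemma:direct-product-F2} (equivalently, \cref{thm:hardness-of-hypergraph-function} applied to $G$), these counts multiply over the $n/k$ components.

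I do not expect any real obstacle. The only care needed is matching notation: ``graph'' means $2$-uniform hypergraph, edges are unordered pairs, and $m=|E(G)|$.
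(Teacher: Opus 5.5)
Your proposal is correct and matches the paper's proof, which is exactly the instantiation of the disjoint-union-of-cliques lemma with $r=2$, $d=1$: the edge count $\binom{k}{2}\cdot\frac{n}{k}=\frac{n(k-1)}{2}$ and the $1+k$ independent sets per copy of $K_k$ are right. Your separate check of the degenerate case $k=1$, where the lemma's hypothesis $r\le k$ fails but the formula still gives $2^n$, is a small extra the paper does not mention.
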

\begin{proof}
    Instantiate \autoref{lemma:disjoint-cliques-function} for $r=2$ and $d=1$.
\end{proof}
\begin{remark}
    In particular, when $k=n$ (i.e., $G = K_n$),
    $$\magree{F_{K_n}}{1} = n+1,$$
    where $F_{K_n}: \F_2^n \to \F_2^{n(n-1)/2}$. By \cref{lemma:interpolation-bound}, any function $g: \F_2^n \to \F_2^m$ has agreement at least $(n+1)$ with affine functions. Thus, $F_{K_n}$ achieves this optimal agreement.
\end{remark}
\begin{remark}
    Setting $k = 6$, we obtain function $F_G: \F_2^n \to \F_2^{5n/2}$ with $\magree{F_G}{1} = 7^{n/6} \approx 1.383^n$. This beats the agreement of $2^{n/2} \approx 1.414^n$ which one gets by the Fourier method (refer \cref{sec:bent-function-and-basic-fourier-bound}).

    In fact, for every constant $c$, we can choose a large enough constant $k$ such that $\magree{F_G}{1} \leq 2^{n/c}$, where $F_G: \F_2^n \to \F_2^{\cO(kn)}$. A limitation of our (hyper)graph framework is that we cannot get a sub-exponential agreement for non-linear functions in the regime $m=O(n)$ (due to the optimality of our construction for graphs from \autoref{thm:optimal-number-of-independent-sets-graphs}).
\end{remark}

We can also construct explicit functions over $\F_2$ which are far from tuples of degree $d$ polynomials.

\begin{theorem}
    \label{thm:explicit-non-linear-functions-beyond-fourier-bound-arbitrary-d}
    Let $k|n$. Suppose $H=(V,E)$ is a $d+1$-uniform hypergraph such that $|V|=n$ and $H$ is a disjoint union of $n/k$ cliques $\KK{k}{d+1}$. Then,
    $$\magree{F_H}{d} = \left(\sum_{i=0}^d \binom{k}{i}\right)^{n/k},$$
    where $F_H: \F_2^n \to \F_2^{\frac{n}{k}\cdot \binom{k}{d+1}}$.
\end{theorem}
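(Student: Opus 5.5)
This is a direct specialization of \cref{lemma:disjoint-cliques-function}, exactly as \cref{thm:explicit-non-linear-functions-beyond-fourier-bound} was its specialization to $r=2$. I would instantiate that lemma with $r = d+1$ and the given $k$. Its hypotheses need $k \mid n$, which is assumed. They also need $r \le k$, i.e.\ $d+1 \le k$. This is implicit in the statement: if $k < d+1$, then $\KK{k}{d+1}$ has no edges, the output dimension $\frac{n}{k}\binom{k}{d+1}$ is $0$, and both sides degenerate. So I would either note that $k \ge d+1$ is assumed, or check that the formula still makes sense there. The remaining condition $d < r$ of the lemma holds trivially, since $r = d+1$.

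With these parameters, \cref{lemma:disjoint-cliques-function} gives two things. First, $|E(H)| = \binom{k}{d+1}\cdot\frac{n}{k}$, so $F_H$ maps $\F_2^n \to \F_2^{\frac{n}{k}\binom{k}{d+1}}$, matching the claimed codomain. Second, $\magree{F_H}{d} = \bigl(\sum_{i=0}^{r-1}\binom{k}{i}\bigr)^{n/k} = \bigl(\sum_{i=0}^{d}\binom{k}{i}\bigr)^{n/k}$, which is exactly the claimed identity.

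For completeness, I would recall the content underneath the lemma. An independent set of $\KK{k}{d+1}$ is precisely a subset of size at most $d$, which accounts for the sum $\sum_{i\le d}\binom{k}{i}$. Then \cref{lemma:direct-product-F2}, which rests on \cref{thm:hardness-of-hypergraph-function}, multiplies these counts over the $n/k$ disjoint components. There is no real obstacle. The only point needing care is the implicit side condition $k \ge d+1$, so that the cliques are nonempty and the lemma applies verbatim.
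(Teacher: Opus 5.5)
Your proposal is correct and is the paper's own proof: the paper proves this theorem in one line, by instantiating \cref{lemma:disjoint-cliques-function} with $r=d+1$. Your side remark about $k\ge d+1$ is harmless. When $k\le d$ the cliques $\KK{k}{d+1}$ have no edges, so the output dimension is $0$, and both sides equal $2^n=(2^k)^{n/k}$; the identity therefore still holds.
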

\begin{proof}
    Instantiate \autoref{lemma:disjoint-cliques-function} for $r=d+1$.
\end{proof}

\begin{corollary}[Functions with polynomial agreement with low degree functions]
    Let $d$ be a large constant. Suppose $\KK{n}{d+1}$ be a $(d+1)$-uniform clique. Then,
    $$\magree{F_{\KK{n}{d+1}}}{d} = \cO(n^{d}),$$
    where $F_{\KK{n}{d+1}}: \F_2^{n} \to \F_2^{\cO(n^{d+1})}$. 
\end{corollary}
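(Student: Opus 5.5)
This corollary is the special case $k = n$ of \cref{thm:explicit-non-linear-functions-beyond-fourier-bound-arbitrary-d}, that is, a single hyperclique. I would take $H = \KK{n}{d+1}$, which is trivially a disjoint union of $n/n = 1$ copy of $\KK{n}{d+1}$, so the divisibility hypothesis $k \mid n$ holds. Then \cref{thm:explicit-non-linear-functions-beyond-fourier-bound-arbitrary-d} gives
$$\magree{F_{\KK{n}{d+1}}}{d} = \sum_{i=0}^{d}\binom{n}{i}.$$
Alternatively, one can go directly through \cref{thm:hardness-of-hypergraph-function}: a subset of $[n]$ is independent in $\KK{n}{d+1}$ exactly when it contains no $(d+1)$-subset, i.e.\ when it has size at most $d$.

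The remaining steps are two routine estimates.
\begin{itemize}
\item \textbf{Agreement.} Since $d$ is a fixed constant, $\sum_{i=0}^{d}\binom{n}{i} \le (d+1)\binom{n}{d} \le (d+1)\,n^d$ once $n \ge 2d$. This is $\cO(n^d)$, with the implied constant depending only on $d$.
\item \textbf{Output length.} The number of outputs is $m = |E(\KK{n}{d+1})| = \binom{n}{d+1} \le n^{d+1}/(d+1)! = \cO(n^{d+1})$.
\end{itemize}

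I expect no real obstacle here; the only care needed is to note that the constants in $\cO(\cdot)$ depend on $d$, which is treated as fixed. It is also worth remarking that, by \cref{lemma:interpolation-bound} with $q = 2$, the bound is tight up to the constant. Indeed $M(2,n,d) = \sum_{i\le d}\binom{n}{i}$, so this $F$ exactly attains the interpolation lower bound.
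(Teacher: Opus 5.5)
Your proposal is correct and is the paper's proof: the paper simply instantiates \cref{thm:explicit-non-linear-functions-beyond-fourier-bound-arbitrary-d} with $k=n$. Your estimates $\sum_{i\le d}\binom{n}{i}\le (d+1)n^d$ and $\binom{n}{d+1}\le n^{d+1}$ fill in the routine asymptotics that the paper leaves implicit.
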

\begin{proof}
    Instantiate \autoref{thm:explicit-non-linear-functions-beyond-fourier-bound-arbitrary-d} with $k=n$.
\end{proof}
\begin{remark}
    By \cref{lemma:interpolation-bound}, any function $g: \F_2^n \to \F_2^m$ has agreement at least $\Omega(n^{d})$ with tuples of degree $d$ polynomials. Thus, $F_{\KK{n}{d+1}}$ achieves this optimal agreement.
\end{remark}

We will next show that for a large enough constant $r$, the disjoint union of cliques is not the best extremal solution to \autoref{ques:min-ind-sets}. In particular, consider the following.

\begin{lemma}\label{lemma:disjoint-turan(r+1-r)-ind-count}
   Let $r\geq 2$ be a fixed number. For a sufficiently large $k,n$ where $k|n$, there exists an explicit $r$-uniform hypergraph $H_{\turan,k,r}$ over $n$ vertices where
   \begin{equation*}
       |E(H_{\turan,k,r})|= \lambda\cdot \binom{k}{r}\cdot \frac{n}{k}
   \end{equation*}
   and 
   \begin{equation*}
       |\Ind(H_{\turan,k,r})| \leq \left( \left(1 - \lambda+ o(1)\right)\cdot \binom{k}{r} \right)^{n/k}.
   \end{equation*}
   Moreover, 
   \begin{equation*}
       \lambda \leq \left( \frac{2}{1+ \lfloor\frac{r+1}{2}\rfloor} + o(1) \right).
   \end{equation*}
\end{lemma}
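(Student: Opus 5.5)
The plan is to take $H_{\turan,k,r}$ to be the disjoint union of $n/k$ copies of a single explicit $\turan(k,r+1,r)$ system $T_k$ on $k$ vertices, namely the construction of \cref{thm:Turan(r+1-r)-explicit-construction} with $k$ in place of $n$. Explicitness is inherited from that construction. Define $\lambda := |E(T_k)|/\binom{k}{r}$. Since the edge set of a disjoint union is the disjoint union of the edge sets, $|E(H_{\turan,k,r})| = \lambda\binom{k}{r}\cdot \frac{n}{k}$ holds exactly. For $k$ sufficiently large, \cref{thm:Turan(r+1-r)-explicit-construction} gives the claimed bound $\lambda \leq \frac{2}{1+\lfloor\frac{r+1}{2}\rfloor}+o(1)$, where the $o(1)$ is as $k\to\infty$.

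Next I bound $|\Ind(T_k)|$. Because $T_k$ is a $\turan(k,r+1,r)$ system, $\alpha(T_k)=r$, so every independent set has size at most $r$. I split independent sets by size. Every subset of size less than $r$ is automatically independent, since edges have size $r$; these contribute $\sum_{i<r}\binom{k}{i} = O(k^{r-1})$. A subset of size exactly $r$ is independent if and only if it is not an edge, so there are exactly $\binom{k}{r} - |E(T_k)| = (1-\lambda)\binom{k}{r}$ of them. Hence
\[
|\Ind(T_k)| = (1-\lambda)\binom{k}{r} + O(k^{r-1}) = \left(1-\lambda + O(1/k)\right)\binom{k}{r}.
\]
The error term is $o(1)$ as $k\to\infty$, with $r$ fixed.

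Finally, independent sets of a disjoint union are exactly products of independent sets of the components. This is the same fact used in \cref{lemma:direct-product-F2}. It gives $|\Ind(H_{\turan,k,r})| = |\Ind(T_k)|^{n/k}$, and substituting the previous bound yields the statement.

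The only point needing care is bookkeeping of the $o(1)$ terms. The $o(1)$ in $\lambda$ comes from the Turán construction as $k\to\infty$. The $o(1)$ in the independent-set count must be absorbed inside the base before raising to the power $n/k$, which is legitimate because $k$ is fixed (large) while $n$ varies over multiples of $k$. I also need $1-\lambda$ to be bounded away from $0$ so that the additive $O(k^{r-1})$ is a relative $o(1)$. Even if it is not, the bound is stated additively inside the parentheses multiplying $\binom{k}{r}$, so writing $O(k^{r-1}) = O(1/k)\binom{k}{r}$ suffices and no obstacle arises.
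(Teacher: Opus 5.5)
Your proposal is correct and follows the paper's proof essentially verbatim. Like the paper, you take $n/k$ disjoint copies of the explicit $\turan(k,r+1,r)$ system from \cref{thm:Turan(r+1-r)-explicit-construction}, count its independent sets as $\sum_{i<r}\binom{k}{i} + (1-\lambda)\binom{k}{r}$ using $\alpha = r$, and multiply over the components. Your handling of the error term as $+O(1/k)\binom{k}{r}$ inside the base is slightly cleaner than the paper's final display, which writes $1-\lambda-o(1)$ where $+o(1)$ is meant.
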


\begin{proof}
    Let $\hat H$ be the $r$-uniform graph over $k$ vertices from \cref{thm:Turan(r+1-r)-explicit-construction}. Let $\lambda = \frac{|\hat H|}{\binom{k}{r}}$, then since $\hat H$ is $\turan(k,r+1,r)$ system, the following holds true 
    \begin{equation*}
        |\Ind(\hat H)| = \sum_{i=0}^{r-1}\binom{k}{i} + (1- \lambda)\cdot \binom{k}{r}
    \end{equation*}
    \cref{thm:Turan(r+1-r)-explicit-construction} also implies that
    \begin{equation*}
        \lambda \leq \left( \frac{2}{1+ \lfloor\frac{r+1}{2}\rfloor} + o(1) \right).
    \end{equation*}
    Let $H$ be the disjoint union of $n/k$ copies of $\hat H$, then 
    \begin{align*}
        |\Ind(H)| &= \left( \sum_{i=0}^{r-1} \binom{k}{i} + (1- \lambda)\binom{k}{r} \right)^{n/k}\\
        &\leq \left( \left(1 - \lambda- o(1)\right)\cdot \binom{k}{r} \right)^{n/k}.
    \end{align*}
    The explicitness guarantees of the lemma follows from \cref{thm:Turan(r+1-r)-explicit-construction}.
    
\end{proof}

As a corollary of the above lemma, we can prove the following.

\begin{corollary}[Disjoint Union of $\turan(n,r+1,r)$ systems]\label{corollary:Disjoint-Union-Turan-Systems}
    Let $d \geq 1$ be fixed. For large enough $k,n$ such that $k \mid n$, then  $F_{H_{\turan,k,d+1}} : \F_2^n \longrightarrow \F_2^m$, where $m= \lambda \cdot \binom{k}{d+1}\cdot \frac{n}{k}$
    for 
    \begin{equation*}
        \lambda \leq \left( \frac{2}{1+ \lfloor\frac{d+2}{2}\rfloor} + o(1) \right)
    \end{equation*}
    and 
    \begin{equation*}
        \magree{F_{H_{\turan}{k,d+1}}}{d} \leq \left( \left(1 - \lambda- o(1)\right)\cdot \binom{k}{d+1} \right)^{n/k}.
    \end{equation*}
\end{corollary}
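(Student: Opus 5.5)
The corollary is a direct combination of \cref{thm:hardness-of-hypergraph-function} with \cref{lemma:disjoint-turan(r+1-r)-ind-count}, instantiated at $r = d+1$. Since $d \geq 1$ is fixed, $r = d+1 \geq 2$ is fixed, so \cref{lemma:disjoint-turan(r+1-r)-ind-count} applies for sufficiently large $k,n$ with $k \mid n$.

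First, apply \cref{lemma:disjoint-turan(r+1-r)-ind-count} with $r = d+1$. It produces the explicit $(d+1)$-uniform hypergraph $H_{\turan,k,d+1}$ on $n$ vertices. This hypergraph is the disjoint union of $n/k$ copies of the $\turan(k,d+2,d+1)$ system $\hat H$ from \cref{thm:Turan(r+1-r)-explicit-construction}. It has $m = \lambda \binom{k}{d+1}\frac{n}{k}$ edges, where
\[
\lambda \leq \frac{2}{1+\lfloor (d+2)/2\rfloor} + o(1),
\]
which is exactly the claimed edge count and bound on $\lambda$. Consequently $F_{H_{\turan,k,d+1}}$ maps $\F_2^n \to \F_2^m$.

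Second, invoke \cref{thm:hardness-of-hypergraph-function} with $r = d+1$; the hypothesis $d < r$ holds. This gives
\[
\magree{F_{H_{\turan,k,d+1}}}{d} = |\Ind(H_{\turan,k,d+1})|.
\]
Equivalently, apply \cref{lemma:direct-product-F2} to the $n/k$ disjoint copies of $\hat H$ and use \cref{thm:hardness-of-hypergraph-function} on each copy.

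Third, plug in the independent-set count from \cref{lemma:disjoint-turan(r+1-r)-ind-count}. Since $\alpha(\hat H) = d+1$, every set of size at most $d$ is independent, and exactly $(1-\lambda)\binom{k}{d+1}$ of the $(d+1)$-sets are independent. Hence
\[
|\Ind(\hat H)| = \sum_{i=0}^{d}\binom{k}{i} + (1-\lambda)\binom{k}{d+1}.
\]
For fixed $d$ and $k \to \infty$, we have $\sum_{i\le d}\binom{k}{i} = O(k^d) = o\!\left(\binom{k}{d+1}\right)$. So $|\Ind(\hat H)| = (1-\lambda+o(1))\binom{k}{d+1}$, and raising to the power $n/k$ gives the stated bound.

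The only delicate point is the sign convention on the $o(1)$ term. The lower-order terms make $|\Ind(\hat H)|$ slightly larger than $(1-\lambda)\binom{k}{d+1}$, so the honest bound is $(1-\lambda+o(1))$. The ``$-o(1)$'' in the statement should be read as an error term of unspecified sign vanishing as $k\to\infty$, or else replaced by $+o(1)$, matching \cref{lemma:disjoint-turan(r+1-r)-ind-count}. The proof itself is a straightforward chaining of the earlier results and has no other obstacle.
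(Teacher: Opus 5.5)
Your proposal is correct and matches the paper's proof, which instantiates \cref{lemma:disjoint-turan(r+1-r)-ind-count} at $r=d+1$ and uses \cref{thm:hardness-of-hypergraph-function} to identify $\magree{F_{H}}{d}$ with $|\Ind(H)|$. Your remark about the error term is also right: $|\Ind(\hat H)| = \sum_{i\le d}\binom{k}{i} + (1-\lambda)\binom{k}{d+1}$ strictly exceeds $(1-\lambda)\binom{k}{d+1}$, so the honest bound is $1-\lambda+o(1)$ (as in the statement of \cref{lemma:disjoint-turan(r+1-r)-ind-count}), and the $-o(1)$ in this corollary and at the end of that lemma's proof is a sign slip.
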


\begin{proof}
    Instantiate \cref{lemma:disjoint-turan(r+1-r)-ind-count} with $r = d+1$.
\end{proof}

We will now show that we can beat the disjoint union of clique construction for large enough constant $r$. In particular, let $k$ be a large enough number and 
\begin{equation*}
    \lambda \leq \left(  \frac{2}{1 + \lfloor\frac{r+1}{2} \rfloor} + o(1) \right)
\end{equation*}
be the constant from \cref{lemma:disjoint-turan(r+1-r)-ind-count}. If we set $a  = \lceil k \cdot \lambda^{1/(r-1)}\rceil + r-1$, and $n$ be a large enough number such that $k \mid n$ and $a \mid n$, then by \cref{lemma:disjoint-turan(r+1-r)-ind-count} and \cref{lemma:disjoint-cliques-function},
\begin{align*}
    \frac{|E(H_{\turan,k,r})|}{|E(H_{clique,a,r})|} &\leq \frac{\lambda \cdot \binom{k}{r} \cdot \frac{n}{k} }{ \binom{a}{r}\cdot \frac{n}{a}}= \frac{\lambda \cdot \binom{k-1}{r-1}}{\binom{a-1}{r-1}}\\
    &\leq \lambda \cdot \prod_{i=1}^{r-1} \frac{k-i}{a-i}\\
    &< \lambda \cdot \frac{k^{r-1}}{\lambda\cdot k^{r-1} } =1. 
\end{align*}
In the following lemma we will compare $|\Ind(H_{clique,a,r)}|$ and $|\Ind(H_{turan,k,r})|$ and show that $|\Ind(H_{\turan,k,r})| << |\Ind(H_{clique,a,r})|$ for fixed $r$.

\begin{lemma}
    Let $r\geq2$ be a fixed number. Let $k$ be a large enough number that is growing and $\lambda \leq \left(  \frac{2}{1 + \lfloor\frac{r+1}{2} \rfloor} + o(1) \right)$ is the constant from \cref{lemma:disjoint-turan(r+1-r)-ind-count}. Let $a = \lceil k \cdot \lambda^{1/(r-1)}\rceil + (r-1)$ and $n$ be a large enough number such that $k\mid n$ and $a \mid n$. Then
    \begin{equation*}
        |\Ind(H_{\turan,k,r})| \leq |\Ind(H_{clique,a,r})|^{\frac{r}{r-1}\lambda^{1/(r-1)}\left( 1 + o(1) \right)}
    \end{equation*}
\end{lemma}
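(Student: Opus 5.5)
We compare the two quantities by taking logarithms and working block by block. By \cref{lemma:disjoint-turan(r+1-r)-ind-count},
\[
\log|\Ind(H_{\turan,k,r})| \le \frac{n}{k}\log\Bigl((1-\lambda+o(1))\binom{k}{r}\Bigr).
\]
By \cref{lemma:disjoint-cliques-function},
\[
\log|\Ind(H_{clique,a,r})| = \frac{n}{a}\log\Bigl(\sum_{i=0}^{r-1}\binom{a}{i}\Bigr).
\]
The claimed inequality is then equivalent to
\[
\frac{\log|\Ind(H_{\turan,k,r})|}{\log|\Ind(H_{clique,a,r})|}\le \frac{r}{r-1}\lambda^{1/(r-1)}(1+o(1)),
\]
which we check asymptotically as $k\to\infty$ with $r$ fixed.

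First, estimate the numerator. Since $r$ is fixed and $1-\lambda$ is bounded away from $0$ (indeed $\lambda\le 2/(1+\lfloor (r+1)/2\rfloor)+o(1)<1$ for $r\ge 3$; for $r=2$ one uses $\lambda=1/(r)\cdot$ Turán density $\approx 1/2$ similarly), we have $\log\bigl((1-\lambda+o(1))\binom{k}{r}\bigr)=r\log k+O(1)=r\log k\,(1+o(1))$. So the numerator is $\frac{n}{k}\, r\log k\,(1+o(1))$.

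Next, estimate the denominator. The sum $\sum_{i\le r-1}\binom{a}{i}$ is dominated by $\binom{a}{r-1}=\Theta(a^{r-1})$, so its logarithm is $(r-1)\log a\,(1+o(1))$. The denominator is therefore $\frac{n}{a}(r-1)\log a\,(1+o(1))$.

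Now take the ratio. It equals
\[
\frac{r}{r-1}\cdot\frac{a}{k}\cdot\frac{\log k}{\log a}\,(1+o(1)).
\]
Since $a=\lceil k\lambda^{1/(r-1)}\rceil+r-1$ and $\lambda$ is a positive constant up to $o(1)$, we get $a/k=\lambda^{1/(r-1)}(1+o(1))$. Also $\log a=\log k+O(1)$, so $\log k/\log a = 1+o(1)$. Combining these gives exactly $\frac{r}{r-1}\lambda^{1/(r-1)}(1+o(1))$, which proves the lemma.

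The main point needing care is that every $o(1)$ is uniform as $k\to\infty$ and independent of $n$. This holds because the exponent structure $n/k$ versus $n/a$ factors out exactly, so $n$ plays no role in the ratio. One must also check that $\lambda$ is bounded below by a positive constant. This is needed for $a\to\infty$ and for $a/k$ to behave as claimed. It follows from $\hat H$ having independence number $r$: by Turán-type counting, its edge density is at least a positive constant depending only on $r$.
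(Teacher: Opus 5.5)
Your proof is correct and follows essentially the same route as the paper: take the ratio of logarithms, bound the numerator by $\frac{n}{k}\log\binom{k}{r}$ and the denominator from below by $\frac{n}{a}\log\binom{a}{r-1}$, then use $a/k=\lambda^{1/(r-1)}(1+o(1))$ and $\log a=\log k+O(1)$. You are slightly more careful than the paper in stating that $\lambda\ge 1/(r+1)$, which is needed for $a\to\infty$; on the other hand, your two-sided estimate of the numerator (and the $r=2$ aside about $1-\lambda$ being bounded away from $0$) is unnecessary, since only the upper bound $\log\bigl((1-\lambda+o(1))\binom{k}{r}\bigr)\le r\log k+O(1)$ is used.
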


\begin{proof}
    For ease of notation let $H_1 = H_{\turan,k,r}$ and $H_2 = H_{clique,a,r}$, then by \cref{lemma:disjoint-turan(r+1-r)-ind-count} and \cref{lemma:disjoint-cliques-function}, we know that
    \begin{align*}
        \frac{\log(|\Ind(H_1)|)}{\log(|\Ind(H_2)|)} &\leq \frac{a}{k}\cdot \frac{\log\left((1-\lambda+o(1))\cdot \binom{k}{r}  \right)}{\log\left( \displaystyle\sum_{i=0}^{r-1} \binom{a}{i} \right)}\\
        &\leq \frac{a}{k}\cdot \frac{\log\binom{k}{r}}{\log\binom{a}{r-1}}\\
        &\leq \frac{r}{r-1}\cdot \lambda^{1/(r-1)} \cdot (1 + o(1))
    \end{align*}
    The above inequality implies
    \begin{equation*}
        |\Ind(H_1)| \leq |\Ind(H_2)|^{\frac{r}{r-1}\cdot \lambda^{1/(r-1)}\cdot (1 + o(1))}.
    \end{equation*}
    and hence we are done with the proof.
\end{proof}

It is not clear to us what extremal family of hypergraphs will answer \autoref{ques:min-ind-sets}.

We will now give an explicit construction of a family of sparse 3-uniform hypergraph $H^{*}$ over $8$ vertices with 115 independent sets.

\begin{definition}\label{def:8-vertex-H*}
    Let $H^{*} = (V^{*},E^{*})$ be a 3-uniform hypergraph where $V^{*} = \F_3^2 \setminus \{(0,0)\}$. $E^*$ consists of affine lines not passing through the origin. In particular
    \begin{equation*}
          E^* = \{ \{x,y,z\} \in \left(\F_3^2 \setminus \{(0,0)\}\right )^3 \mid  \text{ $x,y,z$ are distinct and }x +y+z =0\}
    \end{equation*}
    where $|E^*| = 8$.
\end{definition}

Now we will count the number of independent sets in $H^*$, in the following proposition.

\begin{proposition}\label{prop:H*-independent-set-count}
    $|\Ind(H^*)| = 115.$
\end{proposition}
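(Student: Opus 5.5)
The plan is to count $\Ind(H^*)$ directly by size of the independent set, using the fact that $H^*$ is the affine plane $AG(2,3)=\F_3^2$ with the origin deleted, keeping only the lines that avoid the origin. First I would pin down $E^*$. In $\F_3^2$, three distinct points $x,y,z$ satisfy $x+y+z=0$ if and only if they are collinear: if $z=-x-y$, then $z-x=-(2x+y)=x-y$, so $z=x+(x-y)$ lies on the line through $x,y$, and conversely the points of any line $\{p,p+v,p+2v\}$ sum to $3p+3v=0$. Hence $E^*$ consists exactly of the lines of $AG(2,3)$ not through $(0,0)$. There are $12$ lines in total and $4$ pass through the origin, which gives $|E^*|=8$ and confirms the definition.

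Next I would count by size, writing $a_j=|\Ind(H^*,j)|$.
\begin{itemize}
\item Every set of size at most $2$ is independent, so $a_0+a_1+a_2=1+8+28=37$.
\item For size $3$, exactly the $8$ edges are excluded, so $a_3=\binom{8}{3}-8=48$.
\item For size $4$, I would use inclusion--exclusion over edges. Each edge lies in $8-3=5$ four-element subsets of $V^*$. No $4$-set contains two edges, because two distinct lines share at most one point, so their union has at least $5$ points. Hence $a_4=\binom{8}{4}-8\cdot 5=70-40=30$.
\end{itemize}

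For sizes $j\ge 5$ the key observation is that a subset $S\subseteq V^*$ never contains a line through the origin. So $S$ is independent in $H^*$ exactly when $S$ is a cap in $AG(2,3)$, that is, contains no full line. I would then use the classical fact that caps in $AG(2,3)$ have at most $4$ points. A self-contained argument: fix a point $p\in S$. The $4$ lines through $p$ partition the other $8$ points into $4$ pairs. If $|S|\ge 5$, then $S\setminus\{p\}$ has at least $4$ points in these $4$ pairs, so either some pair is entirely in $S$, giving a full line, or every pair contributes exactly one point. In the latter case $|S|=5$, and applying the same reasoning at a different point of $S$ forces a contradiction; a short case check or a counting argument on collinear pairs does this. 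This gives $a_j=0$ for $j\ge5$.

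Summing, $|\Ind(H^*)|=37+48+30=115$. I expect the only nonroutine step to be the cap bound for $|S|=5$. If the pencil argument gets messy, I would instead count incidences: in a $5$-set, each of the $\binom52=10$ pairs determines a line that must meet $S$ in exactly $2$ points, so these $10$ lines are distinct. Lines through a point $p\in S$ then number $4$, and they meet $S\setminus\{p\}$ in one point each, so $|S|\le 1+4=5$ with equality tight. I would combine this with a parallel-class count to reach the contradiction. Each of the $4$ parallel classes has $3$ lines partitioning the $9$ points, so $S$ with $|S|=5$ and at most $2$ points per line would need distribution $2+2+1$ in every class. That yields exactly $2$ secant lines per class, hence only $8<10$ secant lines in total, which is the desired contradiction.
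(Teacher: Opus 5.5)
Your proposal is correct and follows essentially the same route as the paper. Both count independent sets by size ($1+8+28+48+30$), both use the same $8\cdot 5=40$ exclusion at size $4$, and both rule out $5$-point sets by counting the $10$ secant lines across the $4$ parallel classes. The paper's pigeonhole argument (three parallel secants force $6$ points) is just the contrapositive of your $2+2+1$ distribution giving only $8$ secants. One caveat: the pencil sketch (``applying the same reasoning at a different point forces a contradiction'') does not settle the case on its own, so write down the parallel-class count as the actual argument.
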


\begin{proof}
    Notice that any subset of size at most $2$, is an independent set. Thus $|\Ind(H^*,0)| =1$, $|\Ind(H^*,1)| = 8$ and $|\Ind(H^*,2)| = \binom{8}{2} = 28$. The independent sets of size $3$ are the non edges and thus $|\Ind(H^*,3)| = \binom{8}{3} - 8 = 48$.\\
    
    To find $|\Ind(H^*,4)|$, we count the number of sets of size $4$ that does not contain an affine line. Note that any set of size $4$ will contain at most $1$ affine line. Thus, the number of sets of size $4$ with an affine line is $8 \cdot 5 = 40$ which implies $|\Ind(H^*,4)| = \binom{8}{4} - 40 = 30$.\\

    Now we will show that any set of size $5$ will contain an affine line. For the sake of contradiction, if there exists a set of $5$ points with no affine line, then any unordered pair of points will give a distinct line which gives us $\binom{5}{2}=10$ distinct lines.. Since there are only 4 possible directions for lines, then by the pigeonhole principle, there exists 3 parallel lines such that $S$ contains two points of each line. However, this gives a contradiction as then $|S|$ needs to be at least 6. Hence $|\Ind(H^*,t)|=0$ for $t \geq 5$, and thus
    \begin{equation*}
        |\Ind(H^*)| = \sum_{i=0}^8 |\Ind(H^*,i)| = 115
    \end{equation*}
    which finishes the proof.
\end{proof}

From this sparse uniform hypergraph with very few independent sets, we can construct hard explicit functions $F: \F_2^n \longrightarrow \F_2^n$ using the direct product lemma (\cref{lemma:direct-product-F2}). In particular, we prove the following lemma

\begin{lemma}\label{lemma:sparse-hypergraph-many-ind-sets}
    Suppose $8 |n$, then there exists an explicit function $F: \F_2^n \longrightarrow \F_2^n$ such that $\magree{F}{2} = 115^{n/8}$
\end{lemma}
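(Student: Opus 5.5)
We take $H$ to be the disjoint union of $n/8$ copies of the hypergraph $H^*$ from \cref{def:8-vertex-H*}, and set $F = F_H$. Identify $[n]$ with $n/8$ disjoint blocks of $8$ vertices, and place a copy of $H^*$ on each block through a fixed bijection with $\F_3^2\setminus\{(0,0)\}$. Then $H$ is $3$-uniform on $n$ vertices. Its edge count is $|E(H)| = 8\cdot n/8 = n$, since $|E^*| = 8$ by \cref{def:8-vertex-H*}.

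We should double-check this count, because it is what makes the codomain exactly $\F_2^n$. The triples of distinct nonzero points of $\F_3^2$ summing to zero are exactly the affine lines not through the origin. There are $12$ affine lines in total, and $4$ of them pass through the origin, which leaves $8$. So $F_H : \F_2^n \to \F_2^{n}$.

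Explicitness is immediate. Each output coordinate is the cubic monomial $x_{i}x_{j}x_{k}$ for an edge $\{i,j,k\}$. The edge list of $H^*$ is a fixed finite object, so $F_H$ is computable in time $O(n)$.

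For the agreement, we apply \cref{thm:hardness-of-hypergraph-function} with $r=3$ and $d=2<r$, which gives $\magree{F_H}{2} = |\Ind(H)|$. Equivalently, we can apply the Direct Product Lemma (\cref{lemma:direct-product-F2}) block by block. Either way, the number of independent sets factors over the disjoint components, so $|\Ind(H)| = |\Ind(H^*)|^{n/8}$. Then \cref{prop:H*-independent-set-count} gives $|\Ind(H^*)| = 115$, and hence $\magree{F_H}{2} = 115^{n/8}$.

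There is no real obstacle here; the substantive work lies in \cref{thm:hardness-of-hypergraph-function} and \cref{prop:H*-independent-set-count}, which we take as given. The only points that need care are the edge count $|E^*|=8$, which ensures $m=n$ exactly, and the requirement $d=2 < r=3$ in the hypothesis of the theorem.
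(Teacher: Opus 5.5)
Your proposal is correct and matches the paper's proof: take $n/8$ disjoint copies of $H^*$, apply \cref{thm:hardness-of-hypergraph-function} (equivalently \cref{lemma:direct-product-F2}) with $d=2<r=3$, and conclude with \cref{prop:H*-independent-set-count}. Your extra check that $|E^*|=8$ (the $12$ affine lines of $\F_3^2$ minus the $4$ through the origin) correctly justifies that the codomain is exactly $\F_2^n$.
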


\begin{proof}
    Let $H$ be the 3-uniform hypergraph consisting of $n/8$ disjoint copies of $H^{*}$. Then by \cref{lemma:direct-product-F2}, it follows that
    \begin{equation*}
        \magree{f_H}{2} = (\magree{F_{H^*}}{2})^{n/8}.
    \end{equation*}
    Set $F = F_{H}$, then by \cref{thm:hardness-of-hypergraph-function} and \cref{prop:H*-independent-set-count}, $\magree{F}{2} = 115^{n/8}$, which finishes the proof.
\end{proof}

\begin{remark}
    Note that if we take the disjoint union of $n/4$ copies $\KK{4}{3}$, then the corresponding hypergraph will have $n$ edges and $11^{n/4}$ independent sets. However if we take $n/8$ copies of $H^*$, then the corresponding hypergraph will have $n$ edges and $115^{n/8}$ ($< 11^{n/4}$) independent sets. This shows that disjoint union of hypercliques are not the solution to \autoref{ques:min-ind-sets} in the sparse regime.
\end{remark}

We discuss the limitations of the hypergraph framework in \autoref{sec:limitations-of-hypergraph-framework}.

\section{Hard Functions over $\F_q$ using Hypergraphs}
\label{sec:hard-functions-over-Fq}

As in \cref{sec:hard-functions-over-F2}, we will use the hypergraph framework to build hard functions over $\F_q$. But unlike the $\F_2$ case, the agreements will not be characterized neatly by the number of independent sets in the hypergraphs. We will define a new quantity associated to the hypergraph $H$, called $N_{H}(q)$, that will characterize the agreements over $\F_q$. 

\subsection{Hypergraph Framework}

Let $H$ be an $r$-uniform hypergraph. We define the $\F_q$ function corresponding to $H$, denoted by $F_{H,q}$, as follows. We first define a set of tuples $M_{H,q}$ that will be helpful in defining each output component of $F_{H,q}$.

\begin{definition}[Components of Hypergraph Function]
    Let $H = (V,E)$ be an $r$-uniform hypergraph, where $V = [n]$. Let $q$ be any prime power and $e \in E( H)$, where $e = \{i_1,...,i_r\}$. Define $M_{e,q}$ as follows 

    \begin{equation*}
        M_{e,q} = \left\{ T \in \{0,1,...,q-1\}^n \mid T _{|_{[n] \setminus \{i_1,...,i_r\}} } = 0, \text{ and } \sum_{j\in [r]} T_{i_j} = r  \right\}
    \end{equation*}
    Further, define the set $M_{H,q} = \bigcup_{e \in E} M_{e,q}$.
\end{definition}
\begin{remark}
    In the above definition, two distinct edges $e_1$ and $e_2$ might contain common elements in their corresponding sets $M_{e_1, q}$ and $M_{e_2, q}$: $M_{e_1, q} \cap M_{e_2, q} \neq \phi$. This redundancy is taken care of when we define $M_{H,q} = \bigcup_{e \in E} M_{e,q}$.
\end{remark}

\begin{definition}[Hypergraph Function over $\F_q$]
    Let $H = (V,E)$ be an $r$-uniform hypergraph, where $V = [n]$. Let $q$ be any prime power, and $m_{H,q} = |M_{H,q}|$, where we index $[m_{H,q}]$ by the elements of $M_{H,q}$. We define the function $F_{H,q} : \F_q^{n} \longrightarrow \F_q^{m_{H,q}}$ where for any $x \in \F_q^n$ and $T \in M_{H,q}$, 
    \begin{equation*}
        F_{H,q}(x)_T = \prod_{i=1}^n x_i^{T_i}.
    \end{equation*}
\end{definition}

In \cref{sec:hard-functions-over-F2}, the hardness of the hypergraph function was characterized by $\Ind(H)$. Over $\F_q$, this role will be played by $N_H(q)$, which we define below.

\begin{definition}
    Let $H=(V,E)$ be an $r$-uniform hypergraph, where $V =[n]$. Let $q$ be a prime power, then for any $e\in E$, where $e = \{i_1,...,i_r\}$ define
    \begin{equation*}
        N_H(q,e) = \left\{ T \in \{ 0,1,...,q-1 \}^n \mid \sum_{j \in [r]} T_{i_j} < r \right\}.
    \end{equation*}
    Also  define the set $N_H(q) = \bigcap_{e \in E} N_H(q,e)$.
\end{definition}
\begin{remark}
    Observe that when $q=2$, $N_H(2) = \Ind(H)$. Hence, $N_H(q)$ generalizes the notion of independent sets.
\end{remark}

Next, we define our main object of interest, the set of agreements between $F_{H, q}$ and tuples of polynomials of degree at most $r-1$.

\begin{definition}
    Let $H = (V,E)$ be an $r$-uniform hypergraph where $V= [n]$. Let $\bar{Q} = (Q_1,...,Q_{m_{H,q}})$ be $n$-variate polynomials of degree less than $r$, where $m_{H,q} = |M_{H,q}|$. Then define
    \begin{equation*}
        S_{H,\bar Q, q} = \{ x \in \F_q^n \mid F_{H,q}(x) = \bar{Q}(x) \},
    \end{equation*}
    the set of agreement between $F_{H,q}$ and $\bar{Q}$ in $\F_q^n.$
\end{definition}

Observe that $S_{H,\bar{Q},q}$ is the set of common solutions to the following set of equations over $\F_q$:
\begin{align}
    \forall T \in M_{H,q} &: F_{H,q}(\bar X)_{T} = Q_T(\bar X) \nonumber \\
    \iff \forall T \in M_{H,q}&: \prod_{i \in [n]} X_{i}^{T_i} = Q_T(\bar X) \label{eqn:common-solutions-Fq}
\end{align}

As in \cref{sec:hard-functions-over-F2}, we will bound the size of $S_{H,\bar{Q},q}$ to analyze $\magree{F_{H,q}}{d}$ (where $d < r$). For this, we define the set of indestructible monomials with respect to the hypergraph $H$. 

\begin{definition}[Indestructible monomials with respect to $H$]
    The set of indestructible monomials with respect to $H$ is defined as
    \begin{equation*}
    \cM_{H,q} = \{ X_1^{T_1}\cdot \cdot \cdot X_n^{T_n} \mid T = (T_1,...,T_n) \in N_H(q) \}.
    \end{equation*}
    In particular, note that $|\cM_{H,q}| = |N_{H}(q)|.$
\end{definition}

We will now bound $\magree{F_{H,q}}{d}$ for $d < r$. Similar to \autoref{lemma:remove-hypergraph-junk}, we first formally prove below that the repeated ``replacement'' of monomials using \autoref{eqn:common-solutions-Fq} in $g(\bar X) \in \F_q[\bar X]$ indeed terminates, and in the end we get a polynomial $\hat g(\bar X) \in \F_q[\bar X]$ that is supported on the indestructible monomials $\cM_{H, q}$.

\begin{lemma}\label{lemma:remove-hypergraph-junk-Fq}
    Let $g(\bar{X}) \in \F_q[\bar X]$ be a polynomial of individual degree less than $q$ and $H =(V,E)$ be an $r$-uniform hypergraph where $V = [n]$. Moreover, let $m_{H,q} = |M_{H,q}|$ and $\bar Q = (Q_1,...,Q_{m_{H,q}})$ be a tuple of $n$-variate polynomials of degree less than $r$. 
    Then, there exists a polynomial $\hat g(\bar X) \in \F_q[\bar X]$ of individual degree less than $q$ such that, $\hat g(\bar X)$ lies in the $\F_q$ span of $\cM_{H,q}$ and for any $x \in S_{H,\bar{Q},q}$, $g(x) = \hat g(x)$.

\end{lemma}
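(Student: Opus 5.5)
I will mirror the proof of \cref{lemma:remove-hypergraph-junk}, doing strong induction on the total degree of $g$ and reducing modulo the ideal $\langle X_i^q - X_i \mid i \in [n]\rangle$ after every substitution. By linearity it suffices to treat a single monomial $M = \bar X^{T}$ with $T \in \{0,\dots,q-1\}^n$; constants lie in $\cM_{H,q}$ and form the base case.

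\textbf{Case $T \in N_H(q)$.} Then $M \in \cM_{H,q}$ and we take $\hat g = M$.

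\textbf{Case $T \notin N_H(q)$.} Then there is an edge $e = \{i_1,\dots,i_r\}$ with $\sum_{j\in[r]} T_{i_j} \geq r$.
\begin{enumerate}
\item Greedily choose $T' \leq T$ (coordinatewise), supported on $e$, with $\sum_j T'_{i_j} = r$. This is possible because the entries $T_{i_j}$ are nonnegative integers summing to at least $r$: lower them one unit at a time until the sum is exactly $r$.
\item Each $T'_{i_j} \leq T_{i_j} \leq q-1$, so $T' \in M_{e,q} \subseteq M_{H,q}$, and $\bar X^{T'}$ is one of the output coordinates of $F_{H,q}$. Write $M = \bar X^{T'} \cdot \bar X^{T - T'}$.
\item Define $\tilde g = Q_{T'}(\bar X)\cdot \bar X^{T-T'} \bmod \langle X_i^q - X_i\rangle$. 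Reduction using $X_i^q = X_i$ never increases total degree and yields individual degree less than $q$.
\item Hence $\deg \tilde g \leq (r-1) + (\deg M - r) < \deg M$.
\item By \eqref{eqn:common-solutions-Fq}, $\bar X^{T'} = Q_{T'}$ on $S_{H,\bar Q,q}$. Also $x_i^q = x_i$ for every $x \in \F_q^n$. So $\tilde g(x) = M(x)$ for all $x \in S_{H,\bar Q,q}$.
\item The induction hypothesis applied to $\tilde g$ gives $\hat g$ in the $\F_q$-span of $\cM_{H,q}$, with individual degree less than $q$, agreeing with $\tilde g$, and hence with $M$, on $S_{H,\bar Q,q}$.
\end{enumerate}

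The only step needing care is checking that the greedily chosen $T'$ is a legitimate index in $M_{H,q}$. This holds because $T'$ is supported on $e$, its coordinates sum to exactly $r$, and its entries are bounded by $q-1$.

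The degree drop is strict, so the induction is well-founded even though the reduction modulo $X_i^q - X_i$ may change the monomial structure. This is exactly why I induct on total degree rather than on the monomial itself.
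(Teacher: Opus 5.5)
Your proof is correct and follows the paper's argument: strong induction on total degree, replacing a factor $\bar X^{T'}$ with $T' \in M_{H,q}$ by $Q_{T'}$ and reducing modulo $\langle X_i^q - X_i \rangle$ so the degree strictly drops. Your greedy construction of $T'$ also spells out why such a factor exists, a step the paper only asserts.
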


\begin{proof}
    We prove the lemma by induction on the degree of $g$. For the base case, if $\deg(g) = 0$, then $g$ is a constant polynomial which implies $g(\bar X)$ lies in the $\F_q$ span of $\cM_{H,q}$. Setting $\hat g(\bar X) = g(\bar X)$ finishes the proof.\\

    For the inductive case, assume without loss of generality, that $g$ is a monomial $M$ of degree $d$. The general case will follow by linearity. If $M \in \cM_{H,q}$, then setting $\hat g(\bar X) = g(\bar X)$ will finish the proof. Suppose $M \notin \cM_{H,q}$, then there exists $T \in M_{H,q}$ such that $M = \bar X^T \cdot M'$ where $M'$ is a monomial of degree $d-r$. Let $\tilde g(\bar X) = Q_T(\bar X) \cdot M' \mod \langle X_i^q- X_i \mid i \in [n] \rangle$ be a polynomial of individual degree at most $q-1$, then $\deg(\tilde g) \leq r-1 + (d-r) <d$. By induction, there exists $\hat g_1(\bar X)$ such that $\hat g_1(\bar X)$ lies in the $\F_q$ span of $\cM_{H,q}$ and for any $x \in S_{H,\bar Q,q}$, $\hat g_1(x) = \tilde g(x)$.\\

    Since for any $x \in S_{H,Q,q},$ $x^T = Q_T(x)$, it follows by the definition of $\tilde g$, that $\tilde g(x) = g(x)$. Hence setting $\hat g(\bar X) = \hat g_1(\bar X) $ finishes the proof.
\end{proof}

We now prove the main result that $\magree{F_{H, q}}{d} \leq |N_H(q)|$ using the above \autoref{lemma:remove-hypergraph-junk-Fq}.

\begin{theorem}\label{thm:hardness-of-hypergraph-function-Fq}
    Let $H = (V,E)$ be a $r$-uniform hypergraph where $V = [n]$ and let $1 \leq d < r$. Then, 
    \begin{equation*}
        \magree{F_{H, q}}{d} \leq |N_H(q)|.
    \end{equation*}
    Moreover, $\magree{F_{H, q}}{r-1} = |N_{H}(q)|$.
\end{theorem}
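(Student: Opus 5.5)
The plan mirrors the proof of \cref{thm:hardness-of-hypergraph-function}. For the upper bound, fix any tuple $\bar Q = (Q_T)_{T \in M_{H,q}}$ of polynomials of degree at most $d < r$, and write $S = S_{H,\bar Q,q}$. I will bound $|S| = \dim_{\F_q}(\F_q^{S})$ by exhibiting a spanning set of size $|N_H(q)|$.

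Take any $g \in \F_q^{S}$. Extend it to a function on $\F_q^n$ arbitrarily, and represent it by a polynomial $g'$ of individual degree less than $q$; this is possible by Lagrange interpolation, since every function $\F_q^n \to \F_q$ is such a polynomial. Apply \cref{lemma:remove-hypergraph-junk-Fq} to get $\hat g$ in the span of $\cM_{H,q}$ that agrees with $g$ on $S$. Hence $\cM_{H,q}$, restricted to $S$, spans $\F_q^S$, and $|S| \le |\cM_{H,q}| = |N_H(q)|$.

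One point needs checking: a monomial $\bar X^T$ of individual degree less than $q$ that is not indestructible must be divisible by some $\bar X^{T'}$ with $T' \in M_{H,q}$. If $T \notin N_H(q)$, there is an edge $e=\{i_1,\dots,i_r\}$ with $\sum_j T_{i_j} \ge r$. Decreasing coordinates of $T$ on $e$ one at a time, we obtain $T' \le T$ supported on $e$ with coordinate sum exactly $r$, and each coordinate of $T'$ is at most $q-1$. So $T' \in M_{e,q}$. The lemma's proof uses this implicitly, and it is routine.

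For the equality $\magree{F_{H,q}}{r-1} = |N_H(q)|$ we need a matching lower bound. Now each $Q_T$ may have degree $r-1$. Choosing $\bar Q = 0$, as over $\F_2$, only yields points whose support is independent, which is too few. Instead I would use linear algebra. Consider the evaluation map from the span of all monomials of individual degree less than $q$ onto $\F_q^{\F_q^n}$, which is an isomorphism. The aim is a set $S$ of size $|N_H(q)|$ on which the restrictions of $\cM_{H,q}$ form a basis, and on which every $\bar X^{T}$ with $T \in M_{H,q}$ coincides with a polynomial of degree at most $r-1$. The latter is automatic if the restriction of $\bar X^T$ to $S$ lies in the span of the restrictions of indestructible monomials of degree at most $r-1$.

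A natural choice of $S$ comes from the staircase or footprint philosophy. Since $N_H(q)$ is a down-closed set of exponent vectors, take the ideal $I$ generated by $\{\bar X^T : T \in M_{H,q}\}$ together with the $X_i^q - X_i$. Its standard monomials are exactly $\cM_{H,q}$. However, $I$ is not radical, so its variety is too small, and I would instead deform it to a radical ideal with the same initial ideal. Concretely, replace each $X_i^{a}$ by $\prod_{c<a}(X_i - \alpha_c)$ for a fixed enumeration $\alpha_0,\dots,\alpha_{q-1}$ of $\F_q$. Take $S = \{(\alpha_{T_1},\dots,\alpha_{T_n}) : T \in N_H(q)\}$. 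Then each generator $\prod_{i}\prod_{c<T_i}(X_i-\alpha_c)$ with $T \in M_{H,q}$ vanishes on $S$: otherwise a point of $S$ would have $T'' \ge T$ for some $T'' \in N_H(q)$, contradicting the definition of $N_H(q)$. Expanding this product, $\bar X^T$ minus a polynomial of degree less than $r$ vanishes on $S$, and that gives $Q_T$. Thus $S \subseteq S_{H,\bar Q,q}$, and so $\agree \ge |S| = |N_H(q)|$.

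The main obstacle is this lower-bound construction. Specifically, one must verify that the points are distinct and that the vanishing claim holds for down-closed sets. The distinctness of the points is immediate.
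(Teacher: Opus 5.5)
Your proposal is correct and follows essentially the same route as the paper: the upper bound comes from \cref{lemma:remove-hypergraph-junk-Fq} plus a dimension count on $\F_q^{S}$, and the lower bound uses exactly the paper's $P_T=\prod_i\prod_{c<T_i}(X_i-\alpha_c)$, $Q_T=\bar X^T-P_T$, and the point set $\{(\alpha_{\beta_1},\dots,\alpha_{\beta_n}):\beta\in N_H(q)\}$. One slip sits in your motivational remark only and does not affect the argument: the ideal generated by the $\bar X^T$ together with the $X_i^q-X_i$ is automatically radical and generally has far fewer standard monomials than $|\cM_{H,q}|$ (e.g.\ for $q\ge 3$, $X_1^2$ and $X_1^q-X_1$ together generate $X_1$), so the monomial ideal you meant to deform should use $X_i^q$ in place of $X_i^q-X_i$.
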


\begin{proof}
    Let $m_{H,q} = |M_{H,q}|$ and  $\bar{Q} = (Q_1,...,Q_{m_{H,q}})$ be a tuple of $n$-variate polynomials of degree at most $d$, where $m_{H,q} = |M_{H,q}|$. We will show $|S_{H,\bar Q,q}| \leq |N_{H}(q)|$ to prove $\magree{F_{H, q}}{d} \leq |N_H(q)|$. In order to upper bound $|S_{H,\bar{Q},q}|$, we will bound the dimension of $\F_q^{S_{H,\bar{Q},q}}$.\\

    Take any function $g \in \F_q^{S_{H,\bar{Q},q}}$, then we can extend $g$ to a polynomial $g'(\bar X)\in \F_q[\bar X]$ of individual degree less than $q$, such that for any $x \in S_{H,\bar{Q},q}$, $g(x) = g'(x).$ By \cref{lemma:remove-hypergraph-junk-Fq}, there exists a  polynomial $\hat g(\bar X) \in \F_q[\bar X]$ of individual degree less than $q$, such that for any $x \in S_{H,\bar{Q},q}$, $g(x) = g'(x) = \hat g(x)$, and $\hat g^(\bar X) $ lies in the $\F_q$ span $\cM_{H,q}$. Hence we have shown that any function in $\F_q^{S_{H,\bar{Q},q}}$ can be  spanned by elements of $\cM_{H,q}$, which implies $\dim(\F_q^{S_{H,\bar{Q},q}}) \leq |N_H(q)|$.\\

    In order to prove that $\magree{F_{H, q}}{r-1} = |N_{H}(q)|$, we will construct $\hat{Q} = (\hat Q_T)_{T \in M_{H,q}}$, where each $\hat Q_T$ is a $n-$variate polynomial of degree at most $r-1$. Let $\F_q = \{\alpha_0,...,\alpha_{q-1}\}$ be an enumeration of the field elements. For any $T \in M_{H,q}$, define the polynomial
    \begin{equation*}
        P_T(\bar{X}) = \prod_{i=1}^n \prod_{j=0}^{T_i-1}(X_i - \alpha_j)
    \end{equation*}
    and $\hat Q_T(\bar{X}) = \bar{X}^{T} - P_T(\bar{X}).$ By design of $P_T(\bar{X})$, we have $\deg(\hat Q_T(\bar{X})) \leq r-1.$\\
    
    For any $\beta \in N_{H}(q)$, define $a(\beta) = (\alpha_{\beta_1},...,\alpha_{\beta_n}) \in \F_q^n$. To finish the proof, it suffices to show that for any $T \in M_{H,q}$ and $\beta \in N_{H}(q)$, $P_T(a(\beta))=0$ since it would imply that $\agree(F_{H,q},\hat Q) \geq |N_{H}(q)|.$ Take any $\beta \in N_{H}(q)$, $T \in M_{H,q}$ and take edge $e \in E$ such that $T \in M_{e,q}$. Let $e = (i_1,...i_r)$, then there exists $j \in [r]$ such that $\beta_{i_j} < T_{i_j}$ by definition of $N_{H}(q).$ Hence $a(\beta)_{i_j} \in \{\alpha_0,...,\alpha_{T_{i_j}-1}\}$, and thus $P_T(a(\beta)) = 0.$
\end{proof}

\begin{remark}
    The polynomial $P_T(\bar X)$ in the proof of \cref{thm:hardness-of-hypergraph-function-Fq} also appears in \cite{BEELEN2018130},\cite{GM17},\cite{HP98}.
\end{remark}

As in \cref{sec:hard-functions-over-F2}, \autoref{thm:hardness-of-hypergraph-function-Fq} gives us a \emph{Direct Product Lemma} in our hypergraph framework, which will play a crucial role in constructing explicit hard functions in \cref{sec:instantiate-hypergraphs-Fq}.

\begin{lemma}[Direct Product Lemma]\label{lemma:direct-product-Fq}
    Let $H = (V_1 \sqcup V_2 \sqcup \dots \sqcup V_t,E_1 \sqcup E_2 \sqcup \dots \sqcup E_t)$ be a disjoint union of $r$-uniform hypergraphs $H_1=(V_1, E_1), H_2 = (V_2, E_2), \dots, H_t=(V_t, E_t)$. Then,
    \begin{equation*}
        \magree{F_{H, q}}{d} \leq \prod_{i=1}^t |N_{H_i}(q)|,
    \end{equation*}
    where $F_{H_i,q}:\F_q^{|V_i|}\longrightarrow \F_q^{|M_{H_i,q}|}$ and $d < r.$
    In particular, when $d=r-1$, we have,
    \begin{equation*}
        \magree{F_{H, q}}{r-1} = \prod_{i=1}^t \magree{F_{H_i, q}}{r-1}.
    \end{equation*}
\end{lemma}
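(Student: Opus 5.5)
The plan is to reduce everything to \cref{thm:hardness-of-hypergraph-function-Fq}, together with a product decomposition of $N_H(q)$. The key combinatorial observation is
\begin{equation*}
N_H(q) \cong N_{H_1}(q)\times\cdots\times N_{H_t}(q).
\end{equation*}
To see this, identify a tuple $T\in\{0,\dots,q-1\}^{n}$ with the concatenation of its restrictions $T|_{V_i}\in\{0,\dots,q-1\}^{V_i}$. Every edge $e\in E$ lies in exactly one $E_i$, and the condition $\sum_{j\in[r]}T_{i_j}<r$ defining $N_H(q,e)$ depends only on the coordinates of $T$ in $V_i$. Hence
\begin{equation*}
T\in\bigcap_{e\in E}N_H(q,e) \iff T|_{V_i}\in\bigcap_{e\in E_i}N_{H_i}(q,e)\ \text{ for every } i,
\end{equation*}
which gives the bijection, and so $|N_H(q)|=\prod_i|N_{H_i}(q)|$. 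There is one minor point to check: vertices are labelled $[n]$ in the definitions. I would handle this by relabelling $V_i$ as $[|V_i|]$, noting that $N_{H_i}(q)$ is defined on coordinates indexed by $V_i$.

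With this in hand, the first inequality follows immediately from the upper bound in \cref{thm:hardness-of-hypergraph-function-Fq} applied to $H$ with $d<r$:
\begin{equation*}
\magree{F_{H,q}}{d}\le|N_H(q)|=\prod_{i=1}^t|N_{H_i}(q)|.
\end{equation*}

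For the equality when $d=r-1$, I would apply the ``moreover'' part of \cref{thm:hardness-of-hypergraph-function-Fq} twice. Applied to $H$ it gives $\magree{F_{H,q}}{r-1}=|N_H(q)|$, and applied to each $H_i$ it gives $\magree{F_{H_i,q}}{r-1}=|N_{H_i}(q)|$. Combining these with the product formula yields the claim.

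I expect no real obstacle. The only step needing care is the bookkeeping in the product decomposition, namely that the edge constraints of distinct components involve disjoint coordinate sets, so that membership in $N_H(q)$ factors coordinate-block-wise. Note that I never need to relate $M_{H,q}$ to the $M_{H_i,q}$ (that is, the output coordinates of $F_{H,q}$ to those of the $F_{H_i,q}$), since the argument passes entirely through $N_H(q)$.
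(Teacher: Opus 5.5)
Your proposal is correct and matches the paper's proof: the paper also notes $N_H(q) = N_{H_1}(q)\times\cdots\times N_{H_t}(q)$ and then invokes \cref{thm:hardness-of-hypergraph-function-Fq}, using the upper bound for $d<r$ and the equality clause, applied to $H$ and to each $H_i$, for $d=r-1$. Your block-by-block check of the product decomposition just spells out the step the paper says holds by definition.
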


\begin{proof}
    By definition $N_{H}(q) = N_{H_1}(q)\times \dots \times N_{H_t}(q)$, which implies
    \begin{equation*}
        |N_{H}(q)| = |N_{H_1}(q)| \cdots |N_{H_t}(q)|
    \end{equation*}
    and thus the lemma holds true by \cref{thm:hardness-of-hypergraph-function-Fq}.
\end{proof}

\subsection{Instantiating the Hypergraph}

\label{sec:instantiate-hypergraphs-Fq}

To construct explicit hard functions $F : \F_q^n \longrightarrow \F_q^m$, \cref{thm:hardness-of-hypergraph-function-Fq} , motivates the following question.

\begin{question}\label{ques:min-NH(q)}
    For a fixed integer $q\geq 2$, what $r$-uniform hypergraph $H$ on $n$ vertices and $|M_{H,q}| \leq m$, has the smallest value of $|N_{H}(q)|$?
\end{question}

If we restrict the question to graphs without isolated vertices, then this question is the same \autoref{ques:min-ind-sets}. Just like the case with hypergraph functions over $\F_2$, we will instantiate the hypergraph with disjoint union of equal sized cliques to get explicit functions that go beyond the Fourier Bound

\begin{theorem}[Explicit Functions beyond the fourier bound]\label{theorem:disjoint-clique-NH-count-Fq}
    Let $k|n$ and $r \leq k$. Let $H_{clique}(k,r)$ be the disjoint union of $n/k$ $\KK{k}{r}$, then
    \begin{equation*}
        |M_{H_{clique}(k,r),q}| = \frac{n}{k}\cdot \left( \sum_{i=0}^{\min\{k,\lfloor \frac{r}{q} \rfloor\}} (-1)^{i}\cdot  \binom{k}{i} \cdot  \binom{r + k-1 - iq}{k-1} \right)
    \end{equation*}
     and for any $d <r$,
    \begin{equation*}
        \magree{F_{H_{clique}(k,r),q}}{d} \leq \left( \sum_{i=0}^{\min\{k, \lfloor \frac{r-1}{q} \rfloor\}} (-1)^i \cdot  \binom{k}{i} \cdot \binom{k +r-1 -iq}{k} \right)^{n/k}
    \end{equation*}
    where $F_{H_{clique}(k,r),q} : \F_q^n \longrightarrow \F_q^{|M_{H_{clique}(k,r),q}|}.$
    Moreover, we have
    \begin{equation*}
        \magree{F_{H_{clique}(k,r),q}}{r-1} = \left( \sum_{i=0}^{\min\{k, \lfloor \frac{r-1}{q} \rfloor\}} (-1)^i \cdot  \binom{k}{i} \cdot \binom{k +r-1 -iq}{k} \right)^{n/k}.
    \end{equation*}
\end{theorem}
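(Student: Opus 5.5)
The plan is to reduce everything to a single clique $\KK{k}{r}$ and then count lattice points. By the Direct Product Lemma (\cref{lemma:direct-product-Fq}) and \cref{thm:hardness-of-hypergraph-function-Fq}, it suffices to compute $|N_{\KK{k}{r}}(q)|$. The bound on $\magree{F_{H_{clique}(k,r),q}}{d}$ for $d<r$, and the equality for $d=r-1$, then follow at once, because $N$ factorizes over disjoint components. For the output length, $M_{H,q}$ is a disjoint union over the components, since the sets $M_{e,q}$ coming from different components have disjoint supports. So I also need to compute $|M_{\KK{k}{r},q}|$ and multiply by $n/k$.

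For $M_{\KK{k}{r},q}$, I would argue as follows. A vector $T\in\{0,\dots,q-1\}^k$ with $\sum_i T_i=r$ has support of size at most $r\le k$, so its support lies inside some $r$-subset of the $k$ vertices. Every such $r$-subset is an edge of the clique. Hence $M_{\KK{k}{r},q}$ is exactly the set of $T\in\{0,\dots,q-1\}^k$ with coordinate sum $r$, where coordinates outside the clique are zero. The standard inclusion–exclusion count over the coordinates forced to be $\ge q$ gives $\sum_i(-1)^i\binom{k}{i}\binom{r-iq+k-1}{k-1}$, which is the stated formula.

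For $N_{\KK{k}{r}}(q)$, a vector $T$ lies in $N$ iff every $r$-subset $e$ has $\sum_{j\in e}T_j<r$. The key observation is that this is equivalent to $\sum_{i=1}^k T_i\le r-1$. One direction is trivial: if the total sum is at most $r-1$, then every partial sum over an $r$-subset is too. For the converse, suppose the total sum is $\ge r$. If $|\supp(T)|\le r$, pick an $r$-subset containing the support; its sum equals the total, which is $\ge r$. If $|\supp(T)|>r$, any $r$ elements of the support have sum $\ge r$, since each $T_i\ge1$. Either way $T\notin N$. This case split is the only non-mechanical step, and I expect it to be the main (mild) obstacle.

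So $|N_{\KK{k}{r}}(q)|$ counts $T\in\{0,\dots,q-1\}^k$ with sum $\le r-1$. Adding a slack variable turns this into counting vectors in $k+1$ coordinates with sum exactly $r-1$, where only the original $k$ coordinates are capped at $q-1$. Inclusion–exclusion over those $k$ capped coordinates gives $\sum_i(-1)^i\binom{k}{i}\binom{r-1-iq+k}{k}$, matching the statement, with the summation range cut off where the binomials vanish. Raising this to the $n/k$ power via the product lemma finishes both the inequality and the equality for $d=r-1$.
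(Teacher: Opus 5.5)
Your proposal is correct and follows the same route as the paper. You reduce via the Direct Product Lemma and the $\F_q$ hardness theorem to a single copy of $\KK{k}{r}$, identify $M_{\KK{k}{r},q}$ and $N_{\KK{k}{r}}(q)$ with the vectors in $\{0,\dots,q-1\}^k$ of coordinate sum exactly $r$ and at most $r-1$ respectively, and count these by inclusion--exclusion. The paper simply asserts those two identifications, and the disjointness of the $M_{e,q}$ across components; your support-size arguments, which use $r\le k$, supply exactly the justification it leaves implicit.
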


\begin{proof}
    Let $\hat H = \KK{k}{r}$, then $|M_{\hat H,q}|$ is the number of integral solutions of the form $X_1 + \dots  +X_k =r$, where $0 \leq X_i \leq q-1$ for $i \in [k]$. By the inclusion and exclusion principle, we get
    \begin{equation*}
        |M_{\hat H,q}| = \sum_{i=0}^{\min\{k,\lfloor\frac{r}{q} \rfloor \}} (-1)^i \cdot \binom{k}{i}\cdot \binom{r+k-1 -iq}{k-1}
    \end{equation*}
    Since $H_{clique}(k,r)$ is the disjoint union of $n/k$ copies of $\hat H$, thus
    \begin{equation*}
        |M_{H_{clique}(k,r),q}| = \frac{n}{k}\cdot \left( \sum_{i=0}^{\min\{k,\lfloor \frac{r}{q} \rfloor\}} (-1)^{i}\cdot  \binom{k}{i} \cdot  \binom{r + k-1 - iq}{k-1} \right).
    \end{equation*}
    Next note that $|N_{\hat H}(q)|$ is the number of solutions of the form $X_1 + \dots +X_k < r$, where $0 \leq X_i \leq q-1$ for $i \in [k]$. By the inclusion and exclusion principle, we get
    \begin{equation*}
        |N_{\hat H}(q)| = \left( \sum_{i=0}^{\min\{k, \lfloor \frac{r-1}{q} \rfloor\}} (-1)^i \cdot  \binom{k}{i} \cdot \binom{k +r-1 -iq}{k} \right).
    \end{equation*}
    Hence, the proof follows by \cref{lemma:direct-product-Fq}.
\end{proof}

As a corollary, for $q > r$, we get the following result.

\begin{corollary}
    Let $k|n$ and $r \leq k$ and $r < q$. Let $H_{clique,k,r}$ be the disjoint union of $n/k$ $\KK{k}{r}$, then
    \begin{equation*}
        |M_{H_{clique}(k,r),q}| = \binom{k+r-1}{r}\cdot \frac{n}{k}
    \end{equation*}
    and for any $d <r$,
    \begin{equation*}
        \magree{F_{H_{clique}(k,r),q}}{d} \leq \binom{k+r-1}{r-1}^{n/k}.
    \end{equation*}
    Moreover, we have
    \begin{equation*}
        \magree{F_{H_{clique}(k,r),q}}{r-1} = \binom{k+r-1}{r-1}^{n/k}.
    \end{equation*}
\end{corollary}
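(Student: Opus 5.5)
The plan is to specialize \cref{theorem:disjoint-clique-NH-count-Fq} to the regime $r<q$, where every inclusion--exclusion sum collapses to its $i=0$ term. There is no need to redo the Macaulay argument. All the content sits in the previous theorem and in \cref{thm:hardness-of-hypergraph-function-Fq} together with \cref{lemma:direct-product-Fq}. What remains is to evaluate the two counts for a single block $\hat H=\KK{k}{r}$ and then take the product over the $n/k$ disjoint blocks.

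First I will compute $|M_{H_{clique}(k,r),q}|$. In the formula of \cref{theorem:disjoint-clique-NH-count-Fq} the summation index runs up to $\min\{k,\lfloor r/q\rfloor\}$, and since $r<q$ this upper limit is $0$. The sum therefore equals $\binom{r+k-1}{k-1}=\binom{k+r-1}{r}$, and multiplying by $n/k$ gives the first claim. One point needs a remark. $M_{\hat H,q}$ is defined as a union over edges, so a tuple $T$ with fewer than $r$ nonzero coordinates could lie in $M_{e,q}$ for several edges $e$. The count in the previous theorem is, however, simply the number of $T\in\{0,\dots,q-1\}^k$ with $\sum T_i=r$. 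This is correct because $k\ge r$ guarantees that every such $T$ lies in some $M_{e,q}$, and the union counts each $T$ only once. Combinatorially, the same number counts weak compositions of $r$ into $k$ parts, and the cap $q-1$ is never binding because each part is at most $r\le q-1$.

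Second, I will evaluate $|N_{\hat H}(q)|$. Because $\hat H$ is complete, $T\in N_{\hat H}(q)$ means that every $r$-subset of coordinates has sum less than $r$. I would check carefully that this is equivalent to $\sum_i T_i<r$, which is how the previous theorem treats it. If $\sum_i T_i\ge r$, I can pick $r$ coordinates containing the whole support when the support has size at most $r$, or otherwise $r$ coordinates of the support, each at least $1$. In either case I get an edge with sum at least $r$. The converse is immediate.

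The sum in the previous theorem has upper index $\lfloor (r-1)/q\rfloor=0$, so it collapses to $\binom{k+r-1}{k}=\binom{k+r-1}{r-1}$, which is the number of $T\in\mathbb{Z}_{\ge0}^k$ with $\sum T_i\le r-1$. Once more the cap $q-1$ is irrelevant since $r-1<q$. Raising to the power $n/k$ through \cref{lemma:direct-product-Fq} then gives the upper bound for every $d<r$ and the equality for $d=r-1$.

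I do not expect a real obstacle. The only step needing care is the justification above that the union defining $M_{H,q}$ and the intersection defining $N_H(q)$ reduce to the single conditions $\sum T_i=r$ and $\sum T_i<r$ on each clique block. Those reductions are where $k\ge r$ is used.
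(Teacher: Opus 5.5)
Your proposal is correct and takes the same route as the paper, which states this result as an immediate corollary of \cref{theorem:disjoint-clique-NH-count-Fq}: for $r<q$ both inclusion--exclusion sums reduce to their $i=0$ terms, giving $\binom{k+r-1}{r}$ and $\binom{k+r-1}{r-1}$, and \cref{lemma:direct-product-Fq} supplies the $(n/k)$-th power. Your extra check goes slightly beyond the paper: on a clique block with $k\ge r$, the union defining $M_{\hat H,q}$ and the intersection defining $N_{\hat H}(q)$ reduce to $\sum_i T_i=r$ and $\sum_i T_i<r$, which correctly justifies what that theorem's proof uses implicitly.
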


For the case $q\geq r$, we will show that disjoint union of cliques are not the best extremal solution to \autoref{ques:min-NH(q)}. 

\begin{lemma}\label{lemma:disjoint-turan(r+1-r)-NH-count-Fq}
    Let $r \geq 2$ be a fixed number and $r < q$. For sufficiently large $k,n$ where $k \mid n$, there exists an explicit $r$-uniform hypergraph $H_{\turan}(k,r)$ over $n$ vertices where
    \begin{equation*}
        |M_{H_{\turan}(k,r),q}| =  \lambda \cdot\frac{n}{k}\cdot \binom{k+r-1}{r} 
    \end{equation*}
    and 
    \begin{equation*}
        |N_{H_{\turan}(k,r)}(q)| \leq  \binom{k}{r}^{n/k}
    \end{equation*}
    Moreover,
    \begin{equation*}
        \lambda \leq \left( \frac{2}{1 + \lfloor \frac{r+1}{2} \rfloor} + o(1) \right).
    \end{equation*}
\end{lemma}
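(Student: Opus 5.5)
We will follow the proof of \cref{lemma:disjoint-turan(r+1-r)-ind-count}, replacing independent sets by $N_{\hat H}(q)$. Let $\hat H$ be the explicit $\turan(k,r+1,r)$ system on $k$ vertices from \cref{thm:Turan(r+1-r)-explicit-construction}. Write $\lambda' = |E(\hat H)|/\binom{k}{r} \le \frac{2}{1+\lfloor (r+1)/2\rfloor}+o(1)$. Take $H_{\turan}(k,r)$ to be the disjoint union of $n/k$ copies of $\hat H$ and define $\lambda = |M_{\hat H,q}|/\binom{k+r-1}{r}$. Then the first formula holds by disjointness, since $M_{H,q}$ is the disjoint union of the sets $M_{\hat H_i,q}$. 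By \cref{lemma:direct-product-Fq}, $N_H(q)$ is a product, so it suffices to prove two single-block statements: $\lambda\le\lambda'+o(1)$ and $|N_{\hat H}(q)|\le\binom{k}{r}$. Explicitness is inherited from \cref{thm:Turan(r+1-r)-explicit-construction}.

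\emph{Bounding $\lambda$.} Since $q>r$, $M_{\hat H,q}$ is exactly the set of multisets of size $r$ whose support lies inside some edge. We split by the support size $s$.
\begin{itemize}
\item For $s=r$, the support must be an edge, so these contribute exactly $|E(\hat H)|=\lambda'\binom{k}{r}\le\lambda'\binom{k+r-1}{r}$.
\item For $s<r$, there are at most $\binom{k}{s}\binom{r-1}{s-1}$ such multisets, giving $O_r(k^{r-1})=o(\binom{k+r-1}{r})$ in total.
\end{itemize}
Hence $\lambda\le\lambda'+o(1)$, which gives the claimed bound on $\lambda$.

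\emph{Bounding $|N_{\hat H}(q)|$.} Let $T\in N_{\hat H}(q)$ and $S=\supp(T)$. Every edge $e\subseteq S$ would give $\sum_{i\in e}T_i\ge r$, so $S$ is independent, and therefore $|S|\le\alpha(\hat H)=r$. We then count by $|S|=s$.
\begin{itemize}
\item For $s<r$: every vertex lies in an edge (we may assume this for the explicit construction, or delete isolated vertices). Hence each $T_v\le r-1$, and these $T$ contribute at most $\sum_{s<r}\binom{k}{s}(r-1)^s=O_r(k^{r-1})$.
\item For $s=r$: the set $S$ is an independent $r$-set, and there are $(1-\lambda')\binom{k}{r}$ of these. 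We must bound the number of admissible $T$ supported on $S$. The key observation is the following. For any $w\notin S$, the $(r+1)$-set $S\cup\{w\}$ contains an edge, which must be $S\cup\{w\}\setminus\{u\}$ for some $u\in S$. The constraint on that edge forces $\sum_{S}T - T_u\le r-1$, that is, $T_u\ge \sum_S T-r+1$. Ranging over the $k-r$ choices of $w$ should pin $T$ down severely. For example, when $r=2$ and $\hat H$ is two disjoint cliques, this forces $T=(1,1)$.
\end{itemize}

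The main obstacle is this last step: showing that the number of admissible $T$ per independent $r$-set is at most $c$ with $c(1-\lambda')<1$ strictly, with room to absorb the $O(k^{r-1})$ error. What we would try first is to prove that $T=\mathbf 1_S$ for all but $o(k^r)$ independent sets $S$. The argument would go as follows. If some $T_v\ge2$, the displayed constraint must hold for every $w$, using the edge $S\cup\{w\}\setminus\{u_w\}$. This forces all $u_w$ to carry large weight, which in turn forces the link structure of $S$ in $\hat H$ to be degenerate (every extension $S\cup\{w\}$ must drop the same few vertices). We would then use the structure of the Gao--Kuang--Pikhurko--Wang construction to show such $S$ are rare. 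If this does not hold uniformly, the fallback is the weaker bound $|N_{\hat H}(q)|\le C_r(1-\lambda'+o(1))\binom{k}{r}$, after checking that $C_r(1-\lambda')\le1$ for the specific construction.
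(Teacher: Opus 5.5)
Your construction, the estimate $\lambda\le\lambda'+o(1)$, the reduction to $|N_{\hat H}(q)|\le\binom{k}{r}$, and the observation that $\supp(T)$ is an independent set of size at most $r$ are all fine. The gap is the case $|S|=r$, which is where all the content lies. There you only say the constraints ``should pin $T$ down severely'' and propose to use the fine structure of the Gao--Kuang--Pikhurko--Wang construction; that argument is not carried out, and it is not needed. Your fallback also fails in general. With a per-set constant $C_r\ge 2$ you would need $\lambda'\ge\frac12$, whereas $\lambda'\le\frac{2}{1+\lfloor (r+1)/2\rfloor}+o(1)\le\frac25+o(1)$ once $r\ge 7$. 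You also never say what leaves room for the $O(k^{r-1})$ error terms. It is a \emph{lower} bound on $|E(\hat H)|$: every $(r+1)$-set contains an edge, so $|E(\hat H)|(k-r)\ge\binom{k}{r+1}$, and hence $|\Ind(\hat H,r)|=\binom{k}{r}-|E(\hat H)|\le\frac{r}{r+1}\binom{k}{r}$.

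The missing step follows from your own inequality. Suppose $\supp(T)=S$ with $|S|=r$, so $T_v\ge 1$ for all $v\in S$. Then $\sum_{v\in S\setminus\{u_w\}}T_v\le r-1$ is a sum of $r-1$ positive integers, so every entry other than $T_{u_w}$ equals $1$. Hence either $T=\mathbf{1}_S$, or $T=\mathbf{1}_S+(t-1)\mathbf{1}_{\{u\}}$ with $2\le t\le r-1$ (the upper bound because $u$ lies in some edge). In the second case, $T_{u_w}\ge t\ge 2$ forces $u_w=u$ for every $w\notin S$. So the $(r-1)$-set $A=S\setminus\{u\}$ satisfies $A\cup\{w\}\in E(\hat H)$ for all $w\notin S$, while $A\cup\{u\}=S\notin E(\hat H)$. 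Thus $u$ is determined by $A$, the map $T\mapsto(A,t)$ is injective, and there are at most $(r-2)\binom{k}{r-1}$ non-indicator vectors with support of size $r$. Altogether $|N_{\hat H}(q)|\le\frac{r}{r+1}\binom{k}{r}+O_r(k^{r-1})\le\binom{k}{r}$ for large $k$, with no appeal to the structure of the construction.

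The paper avoids this case analysis entirely. It first adds at most $\binom{k}{r-1}$ edges so that $\partial_{r-1}\hat H=\KK{k}{r-1}$, which changes $\lambda$ by only $o(1)$. Then any $T$ with $\sum_i T_i\ge r$ and some $T_v\ge 2$ puts weight at least $r$ on some $(r-1)$-set, hence on an edge containing it. So every element of $N_{\hat H}(q)$ of total degree at least $r$ is the indicator of an independent set, and $|N_{\hat H}(q)|=\binom{k+r-1}{r-1}+|\Ind(\hat H,r)|$.
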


The proof of the above lemma can be found in \autoref{sec:proof-of-turan-Fq}.
We get the following corollary from the above lemma.

\begin{corollary}
    Let $r \geq 2$ be a fixed number and $r < q$. For sufficiently large $k,n$ where $k |n$, $F_{H_{\turan}(k,r),q}: \F_q^n \longrightarrow \F_q^{m}$, where
    \begin{equation*}
        m \leq \left( \frac{2}{1 + \lfloor \frac{r+1}{2} \rfloor} + o(1) \right)\cdot \binom{k+r-1}{r}\cdot \frac{n}{k}
    \end{equation*}
    and for any $d \leq r-1$,
    \begin{equation*}
        \magree{F_{H_{\turan}(k,r),q}}{d} \leq \binom{k}{r}^{n/k}
    \end{equation*}
\end{corollary}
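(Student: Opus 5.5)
This corollary should follow by instantiating \cref{lemma:disjoint-turan(r+1-r)-NH-count-Fq} and then \cref{thm:hardness-of-hypergraph-function-Fq}, in the same way \cref{corollary:Disjoint-Union-Turan-Systems} was derived from \cref{lemma:disjoint-turan(r+1-r)-ind-count} over $\F_2$.

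Fix $r\geq 2$ with $r<q$, and take $k,n$ large with $k\mid n$. Let $H = H_{\turan}(k,r)$ be the explicit $r$-uniform hypergraph on $n$ vertices given by \cref{lemma:disjoint-turan(r+1-r)-NH-count-Fq}. Set $F = F_{H,q}$, so that $F:\F_q^n\to\F_q^{m}$ with $m = |M_{H,q}|$. That lemma gives
\begin{equation*}
m = \lambda\cdot \frac{n}{k}\cdot\binom{k+r-1}{r}, \qquad \lambda \leq \frac{2}{1+\lfloor\frac{r+1}{2}\rfloor}+o(1),
\end{equation*}
and substituting the bound on $\lambda$ yields the claimed upper bound on $m$. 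Explicitness of $F$ is inherited from the explicitness of $H$, since each coordinate of $F$ is a single monomial $\bar X^{T}$ with $T\in M_{H,q}$.

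For the agreement bound, take any $d\leq r-1$, so $1\leq d<r$. Then \cref{thm:hardness-of-hypergraph-function-Fq} gives $\magree{F_{H,q}}{d}\leq |N_H(q)|$, and the lemma gives $|N_H(q)|\leq \binom{k}{r}^{n/k}$. Chaining these two inequalities finishes the proof.

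There is no real obstacle here: all of the substantive content sits inside \cref{lemma:disjoint-turan(r+1-r)-NH-count-Fq}, which we are allowed to assume. The only points needing care are bookkeeping. First, $m$ must be taken to be exactly $|M_{H,q}|$, so that the output dimension of $F_{H,q}$ matches the hypothesis of \cref{thm:hardness-of-hypergraph-function-Fq}. Second, the range $d\leq r-1$ must be the same as the range $d<r$ in that theorem; this holds for integer $d$, and the case $d=0$ is covered either by monotonicity in $d$ or by the same proof argument.
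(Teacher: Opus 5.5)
Your proposal is correct and matches the paper. The paper gives no separate argument: it reads the corollary directly off \cref{lemma:disjoint-turan(r+1-r)-NH-count-Fq} (with $m=|M_{H,q}|$) combined with \cref{thm:hardness-of-hypergraph-function-Fq}, which is exactly your chain, and your monotonicity remark for $d=0$ just makes explicit a detail the paper leaves unstated.
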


Let $k$ be a large enough number and let $\lambda$ be the constant from \cref{lemma:disjoint-turan(r+1-r)-NH-count-Fq}. Let $r\geq 2$ be a fixed number such that $q > r$. If we set $a= \lceil (2\lambda)^{1/(r-1)}  \cdot (k+r-1)\rceil$, and $n$ be a large enough number such that $k \mid n$ and $a \mid n$, then by \cref{theorem:disjoint-clique-NH-count-Fq} and \cref{lemma:disjoint-turan(r+1-r)-NH-count-Fq},
\begin{align*}
    \frac{|M_{H_{\turan}(k,r),q}|}{|M_{H_{clique}(a,r),q}|}&= \lambda \cdot \frac{a}{k} \cdot \frac{\binom{k+r-1}{r}}{\binom{a+r-1}{r}}\\
    &\leq \lambda \cdot \frac{a}{k}\cdot \frac{(k+r-1)^r}{a^r}\\
    &\leq \lambda \cdot \frac{k+r-1}{k}\cdot \frac{1}{2\lambda} <1.
\end{align*}

In the following lemma, we compare $|N_{H_{\turan}(k,r)}(q)|$ and $|N_{H_{clique}(a,r)}(q)|$ and show that $|N_{H_{\turan}(k,r)}(q)| << |N_{H_{clique}(a,r)}(q)|$ for a large enough constant $r$.

\begin{lemma}
    Let $r\geq 2$ be a fixed number and $q > r$. Let $k$ be a large enough number that is growing and $\lambda \leq \left( \frac{2}{1 + \lfloor \frac{r+1}{2} \rfloor} + o(1)  \right)$ be the constant from \cref{lemma:disjoint-turan(r+1-r)-NH-count-Fq}. Let $a = \lceil (2\lambda)^{1/(r-1)} \cdot (k+r-1) \rceil $ and $n$ be a large enough number such that $k \mid n$ and $a \mid n$. Then
    \begin{equation*}
        |N_{H_{\turan}(k,r)}(q)| \leq |N_{H_{clique}(a,r)}(q)|^{\frac{r}{r-1}\cdot (2\lambda)^{1/(r-1)}\cdot (1+ o(1))}
    \end{equation*}
\end{lemma}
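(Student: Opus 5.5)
We mirror the proof of the analogous $\F_2$ comparison lemma: compare logarithms and use the product structure of both hypergraphs. Write $H_1 = H_{\turan}(k,r)$ and $H_2 = H_{clique}(a,r)$.

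By \cref{lemma:disjoint-turan(r+1-r)-NH-count-Fq},
\begin{equation*}
\log |N_{H_1}(q)| \leq \frac{n}{k}\log\binom{k}{r}.
\end{equation*}
By \cref{theorem:disjoint-clique-NH-count-Fq} and the corollary for $q>r$, $|N_{H_2}(q)| = \binom{a+r-1}{r-1}^{n/a}$. So the target ratio is
\begin{equation*}
\frac{\log|N_{H_1}(q)|}{\log|N_{H_2}(q)|} \leq \frac{a}{k}\cdot\frac{\log\binom{k}{r}}{\log\binom{a+r-1}{r-1}}.
\end{equation*}
Since $r$ is fixed and $k\to\infty$, we have $\log\binom{k}{r} = r\log k - O(1)$. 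Since $a = \Theta(k)$ with a constant depending only on $r$ and $\lambda$, also $\log\binom{a+r-1}{r-1} \geq (r-1)\log a - O(1) = (r-1)\log k - O(1)$. Hence the ratio of logarithms is $\frac{r}{r-1}(1+o(1))$, and it remains to handle the prefactor $a/k$.

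From the definition $a = \lceil (2\lambda)^{1/(r-1)}(k+r-1)\rceil$, we get $a/k = (2\lambda)^{1/(r-1)}(1+O(1/k)) = (2\lambda)^{1/(r-1)}(1+o(1))$. Combining the pieces gives
\begin{equation*}
\log|N_{H_1}(q)| \leq \frac{r}{r-1}(2\lambda)^{1/(r-1)}(1+o(1))\,\log|N_{H_2}(q)|.
\end{equation*}
Exponentiating yields the claim. This uses $|N_{H_2}(q)|>1$, so the logarithm is positive, which holds since $a \geq 1$.

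The only delicate point is justifying the $o(1)$ absorptions. The additive $O(1)$ terms (depending on $r$, and on $\lambda$ through $a/k$) must be negligible relative to $\log k$. This requires $(2\lambda)^{1/(r-1)}$ to be a constant bounded away from $0$, so that $\log a = \log k + O(1)$. That holds because $\lambda$ equals $|E(\hat H)|/\binom{k}{r}$ for a $\turan(k,r+1,r)$ system, and known lower bounds on $t(r+1,r)$ make $\lambda$ bounded below by a positive constant depending only on $r$. I would cite this lower bound, or at least note that $\lambda\geq \binom{k}{r}^{-1}$ is insufficient and that a constant lower bound is needed. If the ratio is instead written as a one-sided upper bound, only an upper bound on $\log a$ in the numerator is needed, together with a lower bound on the denominator. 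The lower bound $\log\binom{a+r-1}{r-1}\geq (r-1)\log a - O(1)$ still needs $a$ comparable to $k$, so the lower bound on $\lambda$ is the key ingredient to confirm.
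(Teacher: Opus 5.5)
Your proposal is correct and is essentially the paper's proof: you bound $\log|N_{H_{\turan}(k,r)}(q)|$ by $\frac{n}{k}\log\binom{k}{r}$, use $|N_{H_{clique}(a,r)}(q)| = \binom{a+r-1}{r-1}^{n/a}$, and then estimate $\frac{a}{k}\cdot\log\binom{k}{r}/\log\binom{a+r-1}{r-1}$. You also observe that $\lambda$ must be bounded below (so that $\log a = \log k + O(1)$), which is a genuine point the paper glosses over; it needs no deep citation, since the double count $|E(\hat H)| \geq \binom{k}{r}/(r+1)$ in the paper's appendix already gives $\lambda \geq |E(\hat H)|/\binom{k+r-1}{r} \geq \frac{1}{r+1}(1-o(1))$.
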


\begin{proof}
    For ease of notation let $H_1 = H_{\turan}(k,r)$ and $H_2 = H_{clique}(a,r)$. Then
    \begin{align*}
        \frac{\log |N_{H_1}(q)|}{\log |N_{H_2}(q)|} &\leq \frac{a}{k}\cdot \frac{\log \binom{k}{r}}{\log \binom{a+r-1}{r-1}}\\
        &\leq \frac{r}{r-1}\cdot(2\lambda)^{1/(r-1)}\cdot (1+o(1))
    \end{align*}
    which implies
    \begin{equation*}
        |N_{H_1}(q)| \leq |N_{H_{2}}(q)|^{\frac{r}{r-1}\cdot (2\lambda)^{1/(r-1)}\cdot (1+ o(1))}
    \end{equation*}
    which finishes the proof.
\end{proof}

It is not clear to us what is the optimal extremal family of $r$-uniform hypergraphs for \autoref{ques:min-NH(q)}.

We discuss the limitations of the hypergraph framework in \autoref{sec:limitations-of-hypergraph-framework}.

\bibliographystyle{alpha}
\bibliography{ref.bib}

\appendix

\section{Proof of the Interpolation Bound}
\label{sec:proof-of-interpolation-bound}

\begin{proof}[Proof of \autoref{lemma:interpolation-bound}]
    To prove the lemma we will construct $\bar{Q} = (Q_1,...,Q_m)$, a tuple of $n$-variate polynomials of degree at most $d$, that agrees on at least $M(q, n, d)$ points. We will define the formal matrix $A$, $$A \in \F_q[X_{1,1},\dots,X_{1,n},\dots,X_{M(q,n,d),1}\dots X_{M(q,n,d),n}]^{M(q,n,d) \times M(q,n,d)},$$ whose rows are indexed by $n$-variate monomials of degree at most $d$ and individual degree at most $q-1$. For any $n$-variate monomial $R$ of degree at most $d$ and individual degree at most $q-1$, and $i \in [M(q,n,d)]$ we define,
    \begin{equation*}
        A_{R,i} = R(X_{i,1},...,X_{i,n}).
    \end{equation*}
    $A$ is the formal interpolation matrix in the following sense: The evaluation of $$Q(\bar X) = \sum_{\bar e \in \{0,..,q-1\}^n \text{and } |\bar e| \leq d} q_{\bar e} \bar X^{\bar e},$$ at the points $\{ (\alpha_{i, 1}, \dots \alpha_{i, n}) : i \in [M(q, n,d)] \}$ can be expressed as by the matrix vector product $$\vec{Q} \cdot A(\alpha_{1,1}, \dots, \alpha_{1, n}, \dots, \alpha_{M(q,n,d), 1}, \dots, \alpha_{M(q,n,d), n}),$$ where $\vec{Q}$ is the coefficient vector of $Q$.\\

    By the formula of the determinant,
    \begin{equation*}
        \det(A) = \sum_{\sigma \in S(M(q,n,d))} (-1)^{sgn(\sigma)}\prod_{\substack{R\text{ is a monomial }\\ \text{of $degree$ at most d and}\\ \text{individual $degree$ at most $q-1$}  }} R(X_{\sigma(R),1} ,\dots, X_{\sigma(R),n}),
    \end{equation*}
    where $S(M(q,n,d))$ is the symmetric group on $M(q,n,d)$ elements. Since the variables are disjoint and $R's$ are distinct, none of the terms of the determinant cancels out, which implies that $\det(A) \neq 0$. 

    Moreover, $\det(A)$ has individual degree at most $q-1$. Thus by \cref{lemma:non-root-ind-deg-bound}, there exists $S \in \F_q^{n \cdot M(q,n,d)}$ of the form $S =(x_{1,1},\dots,x_{1,n},\dots, x_{M(q,n,d),1}\dots x_{M(q,n,d),n})$ such that,
    \begin{equation*}
        \det(A)(x_{1,1},\dots,x_{1,n},\dots, x_{M(q,n,d),1}\dots x_{M(q,n,d),n})\neq 0.
    \end{equation*}
    Thus, $\hat S = \{(x_{i, 1}, \dots, x_{i, n}) : i \in [M(q, n, d)]\}$ is a set of interpolation points for $n$-variate polynomials with total degree at most $d$ and individual degree at most $q-1$.
    This implies that we can interpolate $\bar{Q}$ such that for any $\alpha \in \hat S$, $\bar{Q}(\alpha) = F(\alpha)$. Hence, we are done.
\end{proof}

\section{Proof of Existence of Hard Functions}
\label{sec:proof-of-existence-of-hard-functions}

\begin{proof}[Proof of \autoref{lemma:random-function-agreement}]
    To prove this lemma we will use the probabilistic method.
    Let $\cQ$ be the set of tuples of size $m$ of $n$-variate polynomial functions of degree $d$ over $\F_q$. Observe that $\abs{\cQ} = q^{M(q,n,d)\cdot m}$, since for each of the $m$ coordinates we need to choose one of $q^{M(q,n,d)}$ many polynomials of degree at most $d$. Let $F : \F_q^n \longrightarrow \F_q^m$ be a random function where the function evaluations are chosen uniformly at random. Note that $\magree{F}{d}$ would be at least $t$ when there exists an $\bar{Q} \in \cQ$ such that $\agree(F, \bar{Q}) \geq t$. So,
    \begin{align*}
        \prob \left[ \magree{F}{d} \geq t \right] &= \prob \left[ \exists \bar{Q} \in \cQ : \agree(F, \bar{Q}) \geq t \right] \nonumber \\ 
        &\leq \sum_{\bar Q \in \cQ} \prob\left[ \agree(F, \bar{Q}) \geq t \right] & \text{(By Union Bound)} \nonumber \\
        &\leq q^{M(q,n,d)\cdot m} \cdot \binom{q^n}{t}\cdot q^{-mt}\\
        &\leq e^{M(q,n,d)\cdot m \ln q} \cdot e^{t (n\ln q  + 1) - t \ln t} \cdot e^{-mt \ln q} & \left(\binom{a}{b} \leq \left( \frac{ea}{b} \right)^b  \right)\\
        &\leq e^{M(q,n,d) \cdot m \ln q} \cdot e^{- t (m-n) \ln q } \cdot e^{-t (\ln t-1)}.
    \end{align*}

    Thus if either $t \cdot (m-n) \cdot \ln q$ or $t (\ln t-1)$ is greater than $M(q,n,d) \cdot m \cdot \ln q$, the above probability is less than $1$ (this uses the fact that $m \geq n$).
    Thus by setting
    $$t = \min \left( \frac{m}{m-n}\cdot M(q,n,d) ,\quad M(q,n,d) \cdot m \cdot \ln q \cdot \ln\left(M(q,n,d) \cdot m \cdot \ln q \right)\right),$$ we ensure that $\prob[\magree{F}{d} \geq t] < 1$, and thus by the probabilistic method there exists an $F :\F_q^n\longrightarrow \F_q^m$, such that 
    \begin{equation*}
        \magree{F}{d} < t
    \end{equation*}
    which finishes the proof.

\end{proof}

\section{Proof of the Fourier Bound}
\label{sec:proof-fourier-bound}

\begin{proof}[Proof of \autoref{lem:fourier-bound-for-far-from-linear}]
    We will apply the Fourier Transform to $F_y$ with respect to the characters of the additive group $G=\F_q^n$. For $\alpha \in \F_q^n$, $\chi_\alpha : \F_q^n \to \C$ defined as $\chi_\alpha(x) = \psi(\langle \alpha, x \rangle)$ are all the characters of $G$. Applying the Fourier Transform on $F_y$ gives us,
    \begin{align}
        \widehat F_y(\alpha)    &= \expectation_{x \in \F_q^n}[F_y(x)\chi_\alpha(x)] \nonumber \\
                            &= \expectation_{x \in F_q^n}[\psi(\langle y, F(x) \rangle) \overline{\psi(\langle \alpha, x \rangle)}]. \label{eqn:fourier-transform-for-F_y}
    \end{align}
    We will upper bound the agreement between $F$ and $L=Ax+b$ by relating it to the Fourier transform of $F_y$. For this, first observe the following orthogonality criterion for $z \in \F_q^m$,
    \begin{align}
        \indicator[z = 0] = \expectation_{y \in \F_q^m}[\psi(\langle y, z \rangle)]. \label{eqn:indicator-orthogonality-criterion}
    \end{align}
    We use the above criterion to capture the information about the agreement between $F$ and $L$ in terms of the Fourier coefficients of $F_y$,
    \begin{align*}
        \agree(F, L)    &= \sum_{x \in \F_q^n} \indicator[F(x) - L(x) =0] \\
                        &= \sum_{x \in \F_q^n} \expectation_{y \in \F_q^m} \Big[\psi(\langle y, F(x) - L(x) \rangle)\Big] &\text{( From \autoref{eqn:indicator-orthogonality-criterion} )}\\
                        &= q^n \expectation_{x \in \F_q^n} \expectation_{y \in \F_q^m} \Big[\psi(\langle y, F(x) - L(x) \rangle)\Big] \\
                        &= q^n \expectation_{x \in \F_q^n} \expectation_{y \in \F_q^m} \Big[\psi(\langle y, F(x) \rangle - \langle y, Ax+b \rangle)\Big] \\
                        &= q^n \expectation_{x \in \F_q^n} \expectation_{y \in \F_q^m} \Big[\psi(\langle y, F(x) \rangle) \overline{\psi(\langle y, Ax+b \rangle)}\Big] \\
                        &= q^n \expectation_{x \in \F_q^n} \expectation_{y \in \F_q^m} \overline{\psi(\langle y, b \rangle)} \Big[\psi(\langle y, F(x) \rangle) \overline{\psi(\langle y, Ax \rangle)}\Big] \\
                        &= q^n \expectation_{y \in \F_q^m} \overline{\psi(\langle y, b \rangle)} \expectation_{x \in \F_q^n} \Big[\psi(\langle y, F(x) \rangle) \overline{\psi(\langle y, Ax \rangle)}\Big] &\text{( Interchanging expectations )}\\
                        &= q^n \expectation_{y \in \F_q^m} \overline{\psi(\langle y, b \rangle)} \expectation_{x \in \F_q^n} \Big[\psi(\langle y, F(x) \rangle) \overline{\psi(\langle A^Ty, x \rangle)}\Big] &\text{( Since $\langle y, Ax\rangle = \langle A^Ty, x \rangle$ )}\\
                        &= q^n \expectation_{y \in \F_q^m} \overline{\psi(\langle y, b \rangle)} \expectation_{x \in \F_q^n} \Big[F_y(x) \overline{\chi_{A^Ty}(x)}\Big] &\text{( By definition of $F_y$ and $\chi_{A^Ty}$ )}\\
                        &= q^n \expectation_{y \in \F_q^m} \overline{\psi(\langle y, b \rangle)} \widehat F_y(A^Ty) &\text{( From \autoref{eqn:fourier-transform-for-F_y} )}\\
                        &= \frac{q^n}{q^m} + \frac{q^n}{q^m} \sum_{y \neq 0 \in \F_q^m} \overline{\psi(\langle y, b \rangle)} \widehat F_y(A^Ty)  \\
                        &\leq \frac{q^n}{q^m} + q^n \cdot \frac{q^m-1}{q^m} \max_{(\alpha, y) \neq (0, 0)} \abs{\widehat F_y(\alpha)} \\
                        &\leq \frac{q^n}{q^m} + q^n \max_{(\alpha, y) \neq (0, 0)} \abs{\widehat F_y(\alpha)}. \\
    \end{align*}
\end{proof}

\section{Limitations of the Hypergraph Framework}
\label{sec:limitations-of-hypergraph-framework}

\subsection{Over $\F_2$}

Now we will give a lower bound on the number of independent sets for a $r$-uniform hypergraph with $m$ edges. First, we will show the existence of a large independent set in a sparse hypergraph in the following lemma. The proof of the lemma is standard and can be found in \cite{AlonSpencer}. For the sake of completeness, we will provide the proof here.

\begin{lemma}\label{lemma:hypegraph-large-independent-set}
    Let $H =(V,E)$ be a $r$-uniform hypergraph, where $V =[n]$ and $|E| = m$, then there exists an independent set $T \subset [n]$, such that 
    \begin{equation*}
        |T| \geq \frac{n}{1 + (r-1)\cdot r\cdot \left( \frac{m}{n} \right)}.
    \end{equation*}
\end{lemma}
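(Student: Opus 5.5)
The plan is the standard alteration (deletion) argument from the probabilistic method. Fix a parameter $p \in (0,1]$ to be chosen later and let $S \subseteq [n]$ be a random subset in which each vertex is included independently with probability $p$. Then $\expectation|S| = pn$. For each edge $e \in E$, the probability that $e \subseteq S$ is $p^r$, so the expected number of edges fully contained in $S$ is $mp^r$. From each such edge, delete one vertex (say, an arbitrary one) from $S$. The resulting set $T$ contains no edge of $H$, so it is independent, and
\begin{equation*}
    \expectation|T| \geq pn - mp^r.
\end{equation*}
Hence some outcome yields an independent set of size at least $pn - mp^r$.

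Next I would optimize over $p$. If $m \leq n/r$, the choice $p=1$ already gives $n - m \geq n(1 - 1/r)$; I will check this against the target, but it is cleaner to handle all cases uniformly. Setting the derivative to zero gives $p^{r-1} = n/(rm)$, which is valid when $rm \geq n$, and yields
\begin{equation*}
    |T| \geq pn\left(1 - \tfrac1r\right) = \left(1-\tfrac1r\right) n \left(\tfrac{n}{rm}\right)^{1/(r-1)}.
\end{equation*}

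The remaining step, which I expect to be the only fiddly one, is comparing this with the claimed bound $n/(1 + r(r-1)m/n)$. The claimed bound is weaker (linear in $n/m$ rather than the $(r-1)$-th root), so a cruder choice of $p$ suffices. I would take $p = 1/(1 + r(r-1)m/n)^{1/(r-1)}$, or more simply find $p$ with $mp^{r-1} \leq n/r$ and $p \geq$ something like $1/(1+r(r-1)m/n)$. Concretely, set $x = m/n$ and $p = 1/(1 + r(r-1)x)$. Then $mp^r = pn\cdot x p^{r-1} \le pn \cdot x p$. Since $xp = x/(1+r(r-1)x)$, the question is whether $pn(1 - xp^{r-1}) \ge pn \cdot c$ for an adequate constant. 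This gives $|T| \ge pn(1 - 1/(r(r-1)))$, which is slightly short of $pn$.

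To get exactly the stated bound, I would instead use a Turán/Caro–Wei-type greedy argument: process vertices in a uniformly random order and add a vertex to $T$ unless it completes an edge with earlier-chosen vertices. Failing that, I would tune $p$ in the deletion argument to absorb the constant (for example $p = 1/(1+(r-1)x)$, checking via Bernoulli's inequality that $pn - mp^r \geq n/(1+r(r-1)x)$), and verify the cases $r=2$ and general $r$ directly. The main obstacle is this constant-matching calculation, not the probabilistic idea.
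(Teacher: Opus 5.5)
There is a genuine gap. Your main route, the deletion argument, cannot give the stated bound when $r=2$. The constant-matching step you defer is impossible there, not merely fiddly.

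Take $r=2$ and $x=m/n>1/2$. The best deletion gives is $\max_{0<p\le 1}(pn-mp^2)=n^2/(4m)$. The target is $n/(1+2x)=n^2/(n+2m)$, and $n^2/(4m)<n^2/(n+2m)$ exactly when $2m>n$. So no tuning of $p$ can work. Your remark that the target is weaker because it is linear in $n/m$ fails here, since the $(r-1)$-th root is itself linear when $r=2$. Concretely, your suggested $p=1/(1+(r-1)x)$ gives exactly $n/(1+x)^2<n/(1+2x)$ at $r=2$, so the Bernoulli check you propose cannot succeed in that case.

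For $r\ge 3$ your deletion argument does go through. Either use that choice of $p$ with Bernoulli, or take $p=1$ for $x\le 1/r$ and $p^{r-1}=1/(rx)$ otherwise. But the lemma is stated for all $r$, and $r=2$ is exactly the graph case the paper relies on for $d=1$. There, deletion loses a factor of about $2$ against the Tur\'an-type bound.

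The missing ingredient is to actually carry out the Caro--Wei random-ordering argument you mention only in passing; this is precisely the paper's proof.

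\begin{itemize}
\item Let $N(v)$ be the set of vertices sharing an edge with $v$, so $|N(v)|\le (r-1)d_v$.
\item Take a uniformly random ordering of $V$, and let $I$ be the set of vertices $v$ that precede every vertex of $N(v)$.
\item No two vertices of $I$ lie in a common edge, and every edge has $r\ge 2$ vertices, so $I$ is independent.
\item For each $v$, $\prob[v\in I]=1/(1+|N(v)|)\ge 1/(1+(r-1)d_v)$.
\item Hence $\expectation|I|\ge\sum_v 1/(1+(r-1)d_v)\ge n/(1+(r-1)\cdot rm/n)$, by convexity of $t\mapsto 1/(1+(r-1)t)$ together with $\sum_v d_v=rm$.
\end{itemize}

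Your greedy variant works equally well. A vertex that precedes all of $N(v)$ is never rejected, so it achieves at least the same expectation.
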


\begin{proof}
    Let the degree of any vertex $v \in V$ be $d_v$, then we claim that it is enough to show that there exists an independent $T \subset [n]$ such that 
    \begin{equation*}
        |T| \geq \sum_{v \in V}\frac{1}{1+(r-1)d_v}
    \end{equation*}
    since by Jensen's inequality it will follow that
    \begin{equation*}
        |T| \geq \frac{n}{1 + (r-1)\sum_{v}\left(\frac{d_v}{n}\right) } = \frac{n}{1+(r-1)r\cdot \left( \frac{m}{n}\right)}
    \end{equation*}
    which would finish the proof.\\

    Now we will show the existence of such a $T$ using the probabilistic method. We first pick a uniform random ordering of all the vertices. Next, we construct the set $I \subset V$ as follows
    \begin{equation*}
        I= \{v \in V \mid \text{ For any $u \in N(v)$}, v < u\}
    \end{equation*}
    where $N(v)$ is the neighborhood of $v$ in $H$. In particular, note that $I$ is an independent set. Next, we will compute the expected size of $|I|$ by linearity of expectation. Define indicator random variables $X_v$ for each $v \in V$ where
    \begin{equation*}
        X_v = \begin{cases}
            1 & v \in I\\
            0 & v \notin I
        \end{cases}
    \end{equation*}
    Then note that $|I| = \sum_v X_v$. Furthermore,
    \begin{equation*}
        \expectation[X_v] = \frac{1}{1+ |N(v)|} \geq \frac{1}{1+(r-1)d_v}
    \end{equation*}
    and thus by linearity of expectation, 
    \begin{equation*}
        \expectation[|I|]\geq \sum_{v \in V}\frac{1}{1 + (r-1)d_v}
    \end{equation*}
    which implies there exists an independent set $T$ such that 
    \begin{equation*}
        |T| \geq \sum_{v \in V}\frac{1}{1+(r-1)d_v}
    \end{equation*}
    which concludes the proof.
\end{proof}

As a corollary of the above lemma, we get the following lower bound for $|\Ind(H)|$.

\begin{corollary}\label{corollary:hypergraph-independent-set-lowerbound}
    Let $H = (V,E)$ be a $r$-uniform hypergraph where $V = [n]$ and $|E| = m$, then 
    \begin{equation*}
        |\Ind(H)| \geq 2^{n \cdot \left(\frac{1}{1+r^2\cdot (m/n)}\right)}
    \end{equation*}
\end{corollary}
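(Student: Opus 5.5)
The plan is to combine \cref{lemma:hypegraph-large-independent-set} with the observation that independence is hereditary: every subset of an independent set is again independent. Let $T$ be the independent set provided by \cref{lemma:hypegraph-large-independent-set}. Then all $2^{|T|}$ subsets of $T$ lie in $\Ind(H)$, so
\begin{equation*}
    |\Ind(H)| \geq 2^{|T|} \geq 2^{\frac{n}{1 + (r-1)\cdot r\cdot (m/n)}}.
\end{equation*}

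What remains is to compare exponents. Since $(r-1)r \leq r^2$ and $m/n \geq 0$, we have
\begin{equation*}
    1 + (r-1)\cdot r \cdot \frac{m}{n} \leq 1 + r^2\cdot \frac{m}{n}.
\end{equation*}
Taking reciprocals gives
\begin{equation*}
    \frac{n}{1+(r-1)r(m/n)} \geq \frac{n}{1+r^2(m/n)}.
\end{equation*}
Because $x \mapsto 2^x$ is increasing, the claimed bound follows.

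There is no genuine obstacle here. The only point to check is that the subsets of $T$ are pairwise distinct independent sets, and this holds since an edge contained in a subset of $T$ would also be contained in $T$. The bound is stated with $r^2$ rather than $(r-1)r$ only for cleanliness, so the corollary is slightly weaker than what the argument actually gives.
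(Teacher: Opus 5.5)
Your proposal is correct and matches the paper's proof essentially verbatim: take the independent set $T$ from \cref{lemma:hypegraph-large-independent-set}, note that all $2^{|T|}$ subsets of $T$ are independent, and weaken $(r-1)r$ to $r^2$ in the exponent using monotonicity of $x \mapsto 2^x$. Nothing further is needed.
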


\begin{proof}
     By \cref{lemma:hypegraph-large-independent-set}, there exists an independent set $T \subset [n]$, such that 
     \begin{equation*}
         |T| \geq \frac{n}{1 + (r-1)\cdot r\cdot \left( \frac{m}{n} \right)} \geq n \cdot \left( \frac{1}{1 + r^2 \cdot (m/n)} \right)
     \end{equation*}. 
     Since any subset of $T$ will also be an independent set, thus 
    \begin{equation*}
        |\Ind(H)| \geq 2^{|T|} \geq 2^{n \cdot \left(\frac{1}{1+r^2\cdot (m/n)}\right)}
    \end{equation*}
    which finishes the proof.
\end{proof}

\begin{remark}
    The above corollary means that for any $d$, we cannot create functions $F : \F_2^{n} \longrightarrow\F_2^{m}$ with $\magree{F}{d} < o \left(2^{n \cdot \left(\frac{1}{1+(d+1)^2\cdot (m/n)}\right)}\right)$ using our hypergraph framework.
\end{remark}

\subsection{Over $\F_q$}

Just like the $\F_2$ case, for an $r$-uniform hypergraph $H$ such that $|M_{H,q}| \leq m$, we will show a lower bound on $|N_{H}(q)|$ in the following lemma.

\begin{lemma}\label{lemma:NHq-lower-bound}
    Let $H =(V,E)$ be an $r$-uniform hypergraph, such that $|M_{H,q}|= m$, then
    \begin{equation*}
        |N_{H}(q)| \geq 2^{n \cdot \left(\frac{1}{1+r^2\cdot (m/n)}\right)}
    \end{equation*}
\end{lemma}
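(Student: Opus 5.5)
Since $N_H(q)$ is defined by constraints on each edge, the plan is to reduce to the $\F_2$ case. First observe that $\{0,1\}^n \cap N_H(q)$ is exactly the set of indicator vectors $\indicator_W$ with $W \in \Ind(H)$. Indeed, for $T \in \{0,1\}^n$ and an edge $e = \{i_1,\dots,i_r\}$, the condition $\sum_{j} T_{i_j} < r$ says that not every vertex of $e$ lies in the support of $T$. Since $q \geq 2$, every such $T$ lies in $\{0,\dots,q-1\}^n$. Hence $|N_H(q)| \geq |\Ind(H)|$.

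Next I relate $m = |M_{H,q}|$ to $|E(H)|$. For each edge $e$, the all-ones pattern on $e$ (that is, $T = \indicator_e$, with coordinate sum $r$ and each entry at most $1 \leq q-1$) belongs to $M_{e,q}$. Distinct edges give distinct vectors $\indicator_e$, so $|E(H)| \leq |M_{H,q}| = m$.

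Now apply \cref{corollary:hypergraph-independent-set-lowerbound} to $H$ with $|E(H)| = m' \leq m$. This gives $|\Ind(H)| \geq 2^{n/(1+r^2 m'/n)}$. Since the exponent is decreasing in $m'$, this is at least $2^{n/(1+r^2 m/n)}$. Combining with the first step yields the claim.

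The only point needing care is the monotonicity step: the corollary is stated with equality $|E|=m$, but its proof (via \cref{lemma:hypegraph-large-independent-set}) works for any edge count, and the bound only weakens as the edge count grows. So there is no real obstacle; the key observation is the embedding of $0/1$ vectors.
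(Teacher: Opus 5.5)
Your proposal is correct and follows essentially the same route as the paper: embed $\Ind(H)$ into $N_H(q)$ via $0/1$ indicator vectors, bound $|E(H)| \le |M_{H,q}| = m$, and invoke the large-independent-set bound with monotonicity in the edge count. Your justification of $|E(H)| \le m$ via $\indicator_e \in M_{e,q}$ is more careful than the paper's, whose text misstates this step as $|E(H)| \le |N_H(q)|$.
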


\begin{proof}
   Note that for any $W \in \Ind(H)$, $\indicator_{W} \in N_H(q)$, thus $|N_H(q)| \geq |\Ind(H)|$. Thus, to lower bound $|N_{H}(q)|$, it suffices to lower bound the number of independent sets of $H$. Since $|E(H)| \leq |N_H(q)|$, thus $|E(H)| \leq m$.
    By \cref{lemma:hypegraph-large-independent-set}, there exists an independent set $T \subset [n]$ such that
    \begin{equation*}
        |T| \geq \frac{n}{1 + (r-1)\cdot r\cdot \left( \frac{m}{n} \right)} \geq n \cdot \left( \frac{1}{1 + r^2 \cdot (m/n)} \right).
    \end{equation*}
    Since any subset of $T$ is also an independent set, thus
    \begin{equation*}
        |N_H(q)| \geq 2^{|T|} \geq 2^{n \cdot \left(\frac{1}{1+r^2\cdot (m/n)}\right)}
    \end{equation*}
    and we are done.
\end{proof}

\begin{remark}
    The above corollary means that for any $d,q$, we cannot create functions $F : \F_q^{n} \longrightarrow\F_q^{m}$ with $\magree{F}{d} < o \left(2^{n \cdot \left(\frac{1}{1+(d+1)^2\cdot (m/n)}\right)}\right)$ using our $(d+1)$-uniform hypergraph framework.
\end{remark}

\section{Proof of \autoref{lemma:disjoint-turan(r+1-r)-NH-count-Fq}}
\label{sec:proof-of-turan-Fq}

\begin{proof}[Proof of \autoref{lemma:disjoint-turan(r+1-r)-NH-count-Fq}]
    Let $\hat H$ be the $r$-uniform hypergraph over $k$ vertices from \cref{thm:Turan(r+1-r)-explicit-construction}. Furthermore add at most $\cO(k^{r-1})$ edges to ensure $\partial_{r-1}\hat H = \KK{k}{r-1}$. Let $\lambda = \frac{|M_{\hat H,q}|}{\binom{k+r-1}{r}}$. Let $H_{\turan,k,r}$ be $n/k$ disjoint copies of $\hat H$, then we get
    \begin{equation*}
        |M_{H_{\turan}(k,r),q}| = \lambda \cdot \frac{n}{k}\cdot \binom{k+r-1}{r}.
    \end{equation*}
    Let $\hat \lambda = \frac{|\hat H|}{\binom{k}{r}}$ be the edge density, then by \cref{thm:Turan(r+1-r)-explicit-construction}, we know that
    \begin{equation*}
        \hat\lambda \leq \left( \frac{2}{1 + \lfloor\frac{r+1}{2}\rfloor } + o(1) \right)
    \end{equation*}\
    Next we will show that $\lambda \leq \hat\lambda + o(1)$. Note that 
    \begin{align*}
        |M_{\hat H,q}| &\leq \binom{k+r-1}{r} - \binom{k}{r} + |\hat H|\\
        &\leq \hat\lambda \cdot \binom{k}{r}\cdot (1 + o(1))
    \end{align*}
    which implies
    \begin{align*}
        \lambda &\leq \hat\lambda \cdot \frac{\binom{k}{r}}{\binom{k+r-1}{r}}\cdot (1+ o(1))\\
        &\leq \hat \lambda \cdot (1 + o(1)).
    \end{align*}
    Now we will find $|N_{\hat H}(q)|$. For any $\alpha \in N_{\hat H}(q)$ such that $\sum_i \alpha_i \geq r$, since $\partial_{r-1}H = \KK{k}{r-1}$, thus it is necessary that $\alpha \in \{0,1\}^k$, which in turn implies that $\alpha = \indicator_{W}$ where $W \in \Ind(\hat H)$. Using this observation, the following is true
    \begin{equation*}
        |N_{\hat H}(q)| = \binom{k+r-1}{r-1} + \sum_{i \geq r}|\Ind(\hat H,i)|
    \end{equation*}
    where the first term is the number of number of non negative integer solutions to $X_1+ \dots +X_k \leq r-1$. Since adding edges to an $\turan(n,r,s)$ system won't increase the independence number which implies that for every $r+1$ size subset of $[k]$ is covered by an edge of $\hat H$. In particular, this gives the following lower bound on the number of edges of $\hat H$
    \begin{equation*}
        |E(\hat H)| \geq \frac{1}{r+1}\binom{k}{r} ,
    \end{equation*}
    which implies $|\Ind(\hat H,r)| \leq (1- 1/(r+1))\cdot \binom{k}{r}$,
    and thus $|N_{\hat H}(q)| \leq \binom{k}{r}$. Moreover, since $H_{\turan,k,r}$ is $n/k$  disjoint copies of $\hat H$, thus
    \begin{align*}
        |N_{H_{\turan}(k,r)}(q)| &\leq \binom{k}{r}^{n/k}
    \end{align*}
    This concludes the proof.

\end{proof}

\end{document}